\theoremstyle{plain}               
\newtheorem{thm}{Theorem}
\newtheorem{lem}{Lemma}
\newtheorem{cor}{Corollary}
\newtheorem{defn}{Definition}
\newtheorem{exmp}{Example}[section]
\theoremstyle{remark}
\newcommand{\norm}[2]{\ensuremath{|\!|#1|\!|_{#2}}}
\newcommand{\tr}{\textnormal{tr}}
\newcommand{\Trace}[1]{\ensuremath{\tr \left( #1 \right)}}
\newcommand{\ket}[1]{| #1 \rangle}
\newcommand{\bra}[1]{\langle #1 |}
\newcommand{\braket}[2]{\langle #1 | #2 \rangle}
\newcommand{\proj}[2]{| #1 \rangle\!\langle #2 |}
\newcommand{\id}{\ensuremath{\mathbbm{1}}}
\def\beq{\begin{equation}}
\def\eeq{\end{equation}}
\def\bq{\begin{quote}}
\def\eq{\end{quote}}
\def\ben{\begin{enumerate}}
\def\een{\end{enumerate}}
\def\bit{\begin{itemize}}
\def\eit{\end{itemize}}
\def\ra{\rightarrow}
\def\lb{\left(}
\def\rb{\right)}
\def\lset{\lbrace}
\def\rset{\rbrace}
\def\l|{\left|}
\def\r|{\right|}
\def\lbr{\left[}
\def\rbr{\right]}
\def\ident{\textnormal{id}}
\def\one{\id}
\newcommand\C{\mathbbm{C}}
\newcommand\R{\mathbbm{R}}
\newcommand\N{\mathbbm{N}}
\newcommand\M{\mathcal{M}}
\newcommand\D{\mathcal{D}}
\newcommand{\U}{\mathcal{U}}
\newcommand{\Tm}{\mathcal{T}}
\newcommand{\Rm}{\mathcal{R}}
\newcommand{\Sm}{\mathcal{S}}
\newcommand{\Em}{\mathcal{E}}
\newcommand{\Dm}{\mathcal{D}}
\newcommand{\Cm}{\mathcal{C}}
\newcommand{\Lm}{\mathcal{L}}
\newcommand{\Pm}{\mathcal{P}}
\newcommand{\Um}{\mathcal{U}}
\newcommand{\Vm}{\mathcal{V}}
\newcommand{\Wm}{\mathcal{W}}
\begin{document}

\title{\vspace{-1.0cm}{\textbf{Positivity of Linear Maps under Tensor Powers}}}
\author{Alexander M\"uller-Hermes}
\email{muellerh@ma.tum.de}
\affiliation{Zentrum Mathematik, Technische Universit\"{a}t M\"{u}nchen, 85748 Garching, Germany}
\author{David Reeb}
\email{reeb.qit@gmail.com}
\affiliation{Institute for Theoretical Physics, Leibniz Universit\"at Hannover, 30167 Hannover, Germany}
\affiliation{Zentrum Mathematik, Technische Universit\"{a}t M\"{u}nchen, 85748 Garching, Germany}

\author{Michael M. Wolf}
\email{m.wolf@tum.de}
\affiliation{Zentrum Mathematik, Technische Universit\"{a}t M\"{u}nchen, 85748 Garching, Germany}

\date{\today}

\begin{abstract}
We investigate linear maps between matrix algebras that remain positive under tensor powers, i.e., under tensoring with $n$ copies of themselves. Completely positive and completely co-positive maps are trivial examples of this kind. We show that for every $n\in\mathbb{N}$ there exist non-trivial maps with this property and that for two-dimensional Hilbert spaces there is no non-trivial map for which this holds for all $n$.  For higher dimensions we reduce the existence question of such non-trivial ``tensor-stable positive maps''  to a one-parameter family of maps and show that an affirmative answer would imply the existence of NPPT bound entanglement.
 
 As an application we show that any tensor-stable positive map that is  not completely positive yields an upper bound on the quantum channel capacity, which for the transposition map gives the well-known cb-norm bound. We furthermore show that the latter is an upper bound even for the LOCC-assisted quantum capacity, and that moreover it is a strong converse rate for this task.
\end{abstract}

\vspace*{-0.3cm}
\maketitle

\vspace*{-1cm}

\tableofcontents

\section{Introduction and main results}

Within the set $\M_d$ of complex $d\times d$-matrices we denote the cone of positive matrices by $\M^+_d$ (we call ``positive semidefinite matrices'' simply ``positive matrices''). A linear map $\Pm:\M_{d_1}\ra \M_{d_2}$ is called positive if $\Pm\lb\M^+_{d_1}\rb\subseteq\M^+_{d_2}$, and we then write $\Pm\geq0$. We want to study how positivity of a linear map behaves when taking tensor powers. Therefore we consider the following: 

\begin{defn}[Tensor-stable positivity]\hfill
 
\begin{enumerate}
\item[(i)]
A linear map $\Pm:\M_{d_1}\ra \M_{d_2}$ is called \textbf{n-tensor-stable positive} for some number $n\in\N$ if the map $\Pm^{\otimes n}:\M_{d^n_1}\ra\M_{d^n_2}$ is positive. 
\item[(ii)]A linear map $\Pm:\M_{d_1}\ra \M_{d_2}$ is called \textbf{tensor-stable positive} if the map $\Pm$ is $n$-tensor-stable positive for all $n\in\N$.
\end{enumerate} 
\end{defn}

Note that every $n$-tensor-stable positive map is in particular a positive map. The following example displays some maps that are easily seen to be tensor-stable positive. We will call all maps from these classes \textbf{trivial tensor-stable positive maps}.

\begin{exmp}[Trivial tensor-stable positive maps]\hfill 
\begin{enumerate}
\item All completely positive maps are tensor-stable positive, i.e. all linear maps $\Tm:\M_{d_1}\ra \M_{d_2}$ such that $(\ident_{d}\otimes \Tm): \M_{d}\otimes \M_{d_1}\ra \M_{d}\otimes \M_{d_2}$ is positive for all dimensions $d\in\N$.
\item All maps of the form $\vartheta_{d_2}\circ \Tm$ for a completely positive map $\Tm:\M_{d_1}\ra\M_{d_2}$ and the transposition $\vartheta_d:\M_d\ra\M_d$ are tensor-stable positive. The maps of this form are called \textbf{completely co-positive}.
\end{enumerate}
\label{exmp:1}
\end{exmp}

We will be concerned with three basic questions:
\begin{enumerate}
\item Are there any non-trivial tensor-stable positive maps?
\item How far away can an $n$-tensor-stable positive map be from the cones of completely positive and completely co-positive maps (i.e. from the two cones of trivial tensor-stable positive maps from Example \ref{exmp:1})?
\item What are the implications of question 1. for quantum information theory?
\end{enumerate}

Our main results are the following. In section \ref{sec:FamiliesN} we use (non-orthogonal) unextendible product bases to show:
\begin{thm}[Existence of $n$-tensor-stable positive maps]\label{ntensor}
For any $n\in\N$ and any $d_1,d_2 \geq 2$ there exists an $n$-tensor-stable positive map $\Pm:\M_{d_1}\ra\M_{d_2}$ that is \emph{not} a trivial tensor-stable positive map.

\end{thm}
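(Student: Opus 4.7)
My approach is to construct, for each $n \in \N$ and any $d_1, d_2 \geq 2$, a Hermitian operator $W \in \Mm_{d_1} \otimes \Mm_{d_2}$ whose Choi-associated linear map $\Pm$ satisfies: (i) $\Pm^{\otimes n}$ is positive (equivalently, $W^{\otimes n}$ is block-positive across the bipartition $(\C^{d_1})^{\otimes n}\,|\,(\C^{d_2})^{\otimes n}$), (ii) $W$ is not globally positive (so $\Pm$ is not completely positive), and (iii) $W^{T_B}$ is not globally positive (so $\Pm$ is not completely co-positive). A naive approach by convex combinations of CP and co-CP maps fails here, because $n$-tensor-stable positivity is \emph{not} preserved under addition: for instance $(\ident_2 + \vartheta_2)^{\otimes 2}$ is already not positive on qubits, since its action on a maximally entangled state produces $2|\omega\rangle\langle\omega|+\mathbb{F}$, which has a strictly negative eigenvalue on the antisymmetric subspace.

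The tool for building $W$ is a non-orthogonal unextendible product basis (UPB) $\{|\phi_i\rangle=|\alpha_i\rangle|\beta_i\rangle\}_{i=1}^k$ in $\C^{d_1}\otimes\C^{d_2}$, meaning a family of (not necessarily mutually orthogonal) product vectors whose orthogonal complement contains no nonzero product vector. By compactness, this yields a uniform gap $\gamma>0$ with $\sum_i|\langle ab|\phi_i\rangle|^2\geq\gamma$ for every unit product $|ab\rangle$. Such UPBs exist in all dimensions $d_1,d_2\geq 2$, and can additionally be chosen so that the partial-conjugate set $\{|\alpha_i\rangle|\overline{\beta_i}\rangle\}$ also has an unextendibility-type property needed to force non-co-positivity. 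I then set $W_\lambda := A-\lambda N$ with $A:=\sum_i|\phi_i\rangle\langle\phi_i|\succeq 0$ and a Hermitian perturbation $N$ selected so that both $N$ and $N^{T_B}$ have strictly negative eigenvalues in directions not compensated by $A$ and $A^{T_B}$, respectively. Block-positivity of $W_\lambda$ follows from the UPB gap, $\langle ab|W_\lambda|ab\rangle\geq\gamma-\lambda\|N\|\geq 0$ for small $\lambda$, while the choice of $N$ guarantees that $W_\lambda$ and $W_\lambda^{T_B}$ each carry a strictly negative eigenvalue for every $\lambda>0$, yielding properties (ii) and (iii).

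The heart of the argument is property (i): block-positivity of $W_\lambda^{\otimes n}$ in the enlarged bipartition. Expand
\begin{equation*}
W_\lambda^{\otimes n}=\sum_{S\subseteq\{1,\dots,n\}}(-\lambda)^{|S|}\,A^{\otimes S^c}\otimes N^{\otimes S},
\end{equation*}
and evaluate on an arbitrary unit product vector $|\Psi_A\rangle|\Psi_B\rangle\in(\C^{d_1})^{\otimes n}\otimes(\C^{d_2})^{\otimes n}$ (where $|\Psi_A\rangle$ and $|\Psi_B\rangle$ may themselves be entangled across the $n$ copies). The leading $S=\emptyset$ term contributes $\sum_{\vec{i}}|\langle\Psi_A|\alpha_{i_1}\cdots\alpha_{i_n}\rangle|^2\,|\langle\Psi_B|\beta_{i_1}\cdots\beta_{i_n}\rangle|^2$, which I lower-bound by a positive constant $c_n=c_n(\gamma,k,d_1,d_2)>0$ via iterated application of the single-factor UPB gap and the product structure of the $|\phi_i\rangle$. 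The remaining $2^n-1$ terms contribute at most $O(\lambda)$ in absolute expectation value. Choosing $\lambda$ sufficiently small relative to $c_n$ makes the total expectation non-negative.

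\textbf{Main obstacle.} The most delicate step is the uniform lower bound $c_n$: the UPB unextendibility provides a gap only for product inputs at each factor, but here $|\Psi_A\rangle$ and $|\Psi_B\rangle$ are allowed to be entangled across the $n$ tensor-power copies, so their overlaps with $|\alpha_{i_1}\cdots\alpha_{i_n}\rangle$ and $|\beta_{i_1}\cdots\beta_{i_n}\rangle$ do not factorize. Propagating the single-factor gap $\gamma$ to a robust $n$-factor gap is precisely where the non-orthogonality and the specific geometric structure of the UPB become essential; once this quantitative bound is secured, the remaining perturbative estimates for the $O(\lambda)$ terms and the spectral verifications of non-CP and non-co-CP are routine.
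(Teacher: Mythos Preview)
Your overall strategy matches the paper's: build a separable $A=\sum_i|\alpha_i\rangle\langle\alpha_i|\otimes|\beta_i\rangle\langle\beta_i|$ from a (non-orthogonal) unextendible product set, perturb it, and control block-positivity of the $n$-th tensor power via a binomial expansion. The paper makes the cleaner choice $N=\id$ and picks a concrete $A$ (essentially a qubit Bell projector plus rank-one product terms) whose kernel \emph{and} the kernel of $A^{T_B}$ are each one-dimensional and spanned by an entangled vector, so non-CP and non-co-CP follow immediately for every $\lambda>0$ without any extra conditions on the perturbation.

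The genuine gap in your proposal is exactly the step you flag as the main obstacle, and your suggested mechanism---``iterated application of the single-factor UPB gap and the product structure of the $|\phi_i\rangle$''---is not a proof and does not work as stated: since $|\Psi_A\rangle$ and $|\Psi_B\rangle$ may be entangled across the $n$ copies, there is no natural induction or Schmidt-type reduction that lets you peel off one factor at a time while retaining a product structure on both sides. The paper closes this gap with a structural observation you have available but do not exploit: because $A$ is \emph{separable}, the minimal product-vector overlap is \emph{exactly multiplicative},
\[
\min_{\|\Psi\|=\|\Phi\|=1}\bra{\Psi}\bra{\Phi}A^{\otimes n}\ket{\Psi}\ket{\Phi}=\mu^n.
\]
The argument is that if $A$ is the Choi matrix of a completely positive map $\Tm$, then separability of $A$ makes $\Tm$ entanglement-breaking, and the displayed minimum equals $d_1^{-n}\lambda^{\text{min}}_{\text{out}}(\Tm^{\otimes n})$, the minimal output eigenvalue of $\Tm^{\otimes n}$; for entanglement-breaking $\Tm$ one has $\lambda^{\text{min}}_{\text{out}}(\Tm\otimes\Sm)=\lambda^{\text{min}}_{\text{out}}(\Tm)\,\lambda^{\text{min}}_{\text{out}}(\Sm)$ for any completely positive $\Sm$, by a short direct computation. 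With $c_n=\mu^n$ in hand, your binomial estimate for the remaining terms goes through exactly as you sketch.
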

Our construction used to obtain this theorem does not seem to suffice for constructing a non-trivial tensor-stable positive map (i.e.\ one for all $n\in\N$), and at the time of writing we do not know whether such a map exists. 

In section \ref{sec:Appl} we discuss applications and implications of tensor-stable positive maps for quantum information theory. We show that the existence of an $\infty$-locally entanglement annihilating channel~\cite{moravvcikova2010entanglement,filippov2012local,filippov2013bipartite} which is not entanglement breaking~\cite{horodecki2003entanglement} implies the existence of non-trivial tensor-stable positive maps. A quantum channel is called $\infty$-locally entanglement annihilating if any state when sent through arbitrarily many copies of the channel becomes fully separable. It is currently not known whether such channels exist outside the set of entanglement breaking channels~\citep{filippov2013dissociation}.

In Section \ref{capacityboundsubsection} we generalize the well-known transposition bound~\cite{holevo2001evaluating} to show that tensor-stable positive, but not completely positive, maps yield upper bounds on the quantum channel capacity as well as strong converse rates for this task (Section \ref{strongconverseTSsubsection}). In Section \ref{twowaysubsection} we show that the transposition bound is an upper bound even on the LOCC-assisted quantum capacity (see also Corollary \ref{cor:Q20}) and constitutes a strong converse rate for this task.


In light of these implications, deciding question 1.\ would have important consequences for quantum information theory. Whereas we cannot resolve this question in general, in section \ref{sec:Dist} we use techniques from the theory of entanglement distillation and a generalization of a technique used in~\cite{stormer2010tensor} to prove: 

\begin{thm}[Only trivial tensor-stable positive maps in $d=2$]

There are \emph{no} non-trivial tensor-stable positive maps $\Pm:\M_2\ra\M_d$ or $\Pm:\M_d\ra\M_2$ for any $d\in\N$. 
\label{thm:NoQubitMap}
\end{thm}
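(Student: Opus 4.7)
The plan is to prove the contrapositive: if $\Pm:\M_2\to\M_d$ is positive but neither CP nor CCP, then $\Pm^{\otimes n}$ fails to be positive for some $n$. The $\Pm:\M_d\to\M_2$ case reduces to this via the Hilbert--Schmidt adjoint $\Pm^*:\M_2\to\M_d$, whose positivity, CPness, CCPness and tensor-stable positivity all coincide with those of $\Pm$. I work throughout with the Choi matrix $C=C_\Pm\in\M_2\otimes\M_d$: positivity of $\Pm$ is equivalent to block-positivity of $C$, non-CPness to $C\not\geq 0$, and non-CCPness to $C^{T_A}\not\geq 0$.

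The first main ingredient (entanglement distillation) uses that the $A$-side of $C$ is two-dimensional, so any entangled vector in $\C^2\otimes\C^d$ has Schmidt rank exactly $2$. Since block-positivity forces the negative-eigenvalue eigenvectors of $C$ and of $C^{T_A}$ to be entangled, each lives in a tensor product $\C^2\otimes W_i$ with two-dimensional $W_1,W_2\subset\C^d$; this is the structural content of the standard fact that every NPPT state in $\M_2\otimes\M_d$ is $1$-distillable. Fixing isometries $V_i:\C^2\hookrightarrow W_i$ and weights $\alpha_1,\alpha_2>0$, I form the CP map $\mathcal{T}:\M_d\to\M_2$ with Kraus operators $\sqrt{\alpha_i}\,V_i^\dagger$, so that $\widetilde\Pm:=\mathcal{T}\circ\Pm:\M_2\to\M_2$ has Choi matrix
\begin{equation*}
C_{\widetilde\Pm}=\alpha_1(\id\otimes V_1^\dagger)\,C\,(\id\otimes V_1)+\alpha_2(\id\otimes V_2^\dagger)\,C\,(\id\otimes V_2).
\end{equation*}
By tuning $\alpha_1,\alpha_2$ one arranges that $C_{\widetilde\Pm}$ inherits a negative eigenvalue from the $V_1$-restriction of $C$ while $C_{\widetilde\Pm}^{T_A}$ inherits one from the $V_2$-restriction of $C^{T_A}$, so $\widetilde\Pm$ remains positive, non-CP and non-CCP. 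Moreover $\widetilde\Pm^{\otimes n}=\mathcal{T}^{\otimes n}\circ\Pm^{\otimes n}$ composes a CP with a positive map, hence is positive for every $n$: the qubit-qubit map $\widetilde\Pm$ inherits tensor-stable positivity from $\Pm$.

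The second main ingredient is a generalization of the technique of~\cite{stormer2010tensor}, used to rule out any non-trivial tensor-stable positive $\widetilde\Pm:\M_2\to\M_2$. Writing $C_{\widetilde\Pm}$ in block form $\bigl(\begin{smallmatrix}A & B \\ B^\dagger & D\end{smallmatrix}\bigr)$ with $A,D\geq 0$, a direct calculation yields
\begin{equation*}
\widetilde\Pm^{\otimes 2}\bigl(\ket{\psi^-}\bra{\psi^-}\bigr)=\tfrac{1}{2}\bigl(A\otimes D+D\otimes A-B\otimes B^\dagger-B^\dagger\otimes B\bigr).
\end{equation*}
Using both non-CPness $\bigl(\begin{smallmatrix}A & B \\ B^\dagger & D\end{smallmatrix}\bigr)\not\geq 0$ and non-CCPness $\bigl(\begin{smallmatrix}A & B^\dagger \\ B & D\end{smallmatrix}\bigr)\not\geq 0$, together with Woronowicz's decomposition $\widetilde\Pm=\Tm_1+\Tm_2\circ\vartheta$ with both $\Tm_i$ CP and nonzero, one exhibits a vector on which this operator has a negative expectation; higher tensor powers $n>2$ serve as a backup if $n=2$ does not suffice uniformly.

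The main obstacle is the construction of $\mathcal{T}$: a single-isometry restriction generically preserves only one of non-CPness or non-CCPness, so the weights $\alpha_1,\alpha_2$ must be balanced delicately to keep both negative eigenvalues alive simultaneously in $C_{\widetilde\Pm}$ and in $C_{\widetilde\Pm}^{T_A}$. A secondary difficulty is ensuring that the Störmer-type calculation in the qubit-qubit case is uniform over \emph{all} non-trivial positive maps, and not just illustrative examples such as $\ident+\vartheta$.
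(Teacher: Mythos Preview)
Your proposal contains two genuine gaps, and the paper's proof takes a different route that avoids both.

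\textbf{Gap 1 (the reduction to $\M_2\to\M_2$).} With your two-Kraus construction, write
\[
a_1=\langle\psi_1'|(\id\otimes V_1^\dagger)C(\id\otimes V_1)|\psi_1'\rangle<0,\quad
b_1=\langle\psi_1'|(\id\otimes V_2^\dagger)C(\id\otimes V_2)|\psi_1'\rangle,
\]
and similarly $a_2,b_2$ for the witness $\psi_2'$ against $C^{T_A}$, with $b_2<0$. You need $\alpha_1 a_1+\alpha_2 b_1<0$ and $\alpha_1 a_2+\alpha_2 b_2<0$ simultaneously. If $b_1>0$ and $a_2>0$, a common solution exists only when $b_1 a_2<|a_1|\,|b_2|$, which nothing in your setup guarantees. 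You flag this as the ``main obstacle'' but do not resolve it; as written, the construction can fail.

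\textbf{Gap 2 (the qubit--qubit case).} Even granting a non-trivial positive $\widetilde\Pm:\M_2\to\M_2$, your argument that some $\widetilde\Pm^{\otimes n}$ is not positive is incomplete. The singlet formula
\[
\widetilde\Pm^{\otimes 2}(\ket{\psi^-}\bra{\psi^-})=\tfrac12\bigl(A\otimes D+D\otimes A-B\otimes B^\dagger-B^\dagger\otimes B\bigr)
\]
is correct, but you do not exhibit a vector making it negative from the hypotheses alone, and invoking the Woronowicz decomposition does not by itself produce one. The fallback ``higher $n$ serve as a backup'' is not an argument: proving non-positivity of \emph{some} tensor power is precisely the content of the theorem in the $2\times2$ case, so you are assuming what you set out to prove.

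\textbf{How the paper proceeds instead.} The paper does not compress to a qubit--qubit map at all. It first shows (Lemma~\ref{thm:ConnectionWerner}) that from any positive, non-CCP $\Pm:\M_{d_1}\to\M_{d_2}$ one can obtain, by a \emph{separable} CP map applied to the Choi matrix, an entangled Werner state in $\M_d\otimes\M_d$ with $d\in\{d_1,d_2\}$; the key device is the $UU$-twirl, which automatically lands in the Werner family and hence dispenses with any weight-balancing. Then Corollary~\ref{cor:Fund} says that for a tensor-stable positive $\Pm$,
\[
\frac{d_{\mathrm{CP}}(\Pm)}{\|\Pm\|_\diamond}\le\inf_{\Sm~\mathrm{sep}}\bigl\|\omega_d-\Sm\bigl(C_\Pm^{\otimes(n-1)}\bigr)\bigr\|_1\qquad\text{for all }n,
\]
so if the Werner state is distillable the right-hand side tends to $0$ and $\Pm$ must be CP. For $d=2$ all entangled Werner states are distillable, which immediately gives the theorem (this is the content of Theorem~\ref{thm:NPTImpl} specialized to $d=2$). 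Thus the paper replaces your direct $n=2$ (or higher-$n$) computation by a limiting argument through an established distillation protocol, and replaces your ad hoc compression by the twirl.
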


Furthermore, a non-trivial tensor-stable positive map exists iff one exists within the following one-parameter families based on Werner states~\cite{werner1989quantum}:

\begin{thm}[One-parameter family of candidates for non-trivial tensor-stable positivity]

Let $d_1,d_2\in\N$, $d\in\lset d_1,d_2\rset$, and for $p\in[-1,1]$ let
\begin{align}
\Pm_p := \Wm_p \otimes \big(\vartheta_d\circ\Wm_p\big)\,:\,~\M_d\otimes\M_d\to\M_d\otimes\M_d\,,
\label{equ:OneParamFamMap}
\end{align}
where we define for $X\in\M_d$:
\begin{align}\label{wernerchannel}
\Wm_p\lb X\rb := \frac{1}{d^2 - 1}\lb \lb d - p\rb\Trace{X} \id_d - \lb 1 - dp \rb X^T\rb.
\end{align}

\begin{itemize}

\item[(i)]If there exists a non-trivial tensor-stable positive map $\Pm:\M_{d_1}\ra\M_{d_2}$, then there exists $p\in\lbr -1,0\rb$ such that the map \eqref{equ:OneParamFamMap} is tensor-stable positive.

\item[(ii)] If for some $p\in[-1,0)$ the map \eqref{equ:OneParamFamMap} is tensor-stable positive, then it is non-trivial tensor-stable positive (i.e. it is neither completely positive nor completely co-positive).
\end{itemize}
\label{thm:OneParamFambl}
\end{thm}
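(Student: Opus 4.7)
The two parts of Theorem~\ref{thm:OneParamFambl} call for very different arguments, and I would treat them separately.

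\textbf{Part (ii).} This is a direct Choi-matrix computation. Using the identifications $\sum_{ij}\ket{i}\bra{j}\otimes\ket{j}\bra{i}=\mathbb{F}$ (the swap on $\M_d\otimes\M_d$) and $\sum_{ij}\ket{i}\bra{j}\otimes\ket{i}\bra{j}=\ket{\Omega}\bra{\Omega}$ (the unnormalised maximally entangled projector), one obtains
\begin{equation*}
C_{\Wm_p}=\frac{1}{d^2-1}\bigl((d-p)\,\id-(1-dp)\,\mathbb{F}\bigr),\qquad
C_{\vartheta_d\circ\Wm_p}=\frac{1}{d^2-1}\bigl((d-p)\,\id-(1-dp)\,\ket{\Omega}\bra{\Omega}\bigr).
\end{equation*}
Since $\mathbb{F}$ has eigenvalues $\pm 1$ and $\ket{\Omega}\bra{\Omega}$ has spectrum $\{d,0\}$, the corresponding Choi spectra are $\{(1+p)/(d+1),\,(1-p)/(d-1)\}$ and $\{p,\,(d-p)/(d^2-1)\}$. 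Hence $\Wm_p$ is CP for every $p\in[-1,1]$, while $\vartheta_d\circ\Wm_p$ is CP iff $p\geq 0$. Using the factorisation $C_{\Phi_1\otimes\Phi_2}=C_{\Phi_1}\otimes C_{\Phi_2}$ together with the elementary observation that a tensor product of nonzero Hermitians $A\otimes B$ is PSD only if $A$ and $B$ are both PSD (the alternative of both being negative-semidefinite is ruled out here, since both Choi matrices have strictly positive trace), one deduces $\Pm_p$ is CP iff both factors are CP. For $p\in[-1,0)$ the second factor fails, so $\Pm_p$ is not CP. Applying the same reasoning to $\vartheta_{d^2}\circ\Pm_p=(\vartheta_d\circ\Wm_p)\otimes\Wm_p$ (using $\vartheta_{d^2}=\vartheta_d\otimes\vartheta_d$) shows $\Pm_p$ is not co-CP either, establishing non-triviality.

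\textbf{Part (i).} This direction calls for a symmetrisation argument in the spirit of St{\o}rmer~\cite{stormer2010tensor}, combined with the entanglement-distillation techniques developed in Section~\ref{sec:Dist}. The plan rests on two observations. First, tensor-stable positivity is stable under pre- and post-composition with unitary conjugations and under convex combinations, hence under Haar averages over any compact subgroup of the unitary group. Second, by Schur--Weyl duality, a positive map $\M_d\otimes\M_d\to\M_d\otimes\M_d$ that is $(U\otimes U)$-covariant on the first tensor factor and $(U\otimes\bar U)$-covariant on the second must lie in the two-parameter family $\{\Wm_p\otimes(\vartheta_d\circ\Wm_q):p,q\in[-1,1]\}$, while a further swap symmetry between the two factors forces $q=p$. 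Starting from a putative non-trivial TSP map $\Pm:\M_{d_1}\to\M_{d_2}$, the plan is to build, by tensoring $\Pm$ with the (also TSP) map $\vartheta_{d_2}\circ\Pm$ and by appropriate isometric embeddings/partial traces that reconcile $d_1$ and $d_2$, an auxiliary TSP map $\tilde\Pm$ on $\M_d\otimes\M_d$ for some $d\in\{d_1,d_2\}$; averaging it with respect to the symmetries above then yields a TSP map of the required form $\Pm_p$ with $p\in[-1,1]$.

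The main obstacle is to guarantee $p\in[-1,0)$: since $\Wm_p$ itself is CP throughout $[-1,1]$, the map $\Pm_p$ is a trivial TSP map whenever $p\geq 0$ (both factors are then CP), and one must rule out that the symmetrisation collapses into this trivial range. Here I would invoke the equivalence, established in Section~\ref{sec:Dist}, between the existence of non-trivial TSP maps and the existence of NPPT non-distillable states: non-triviality of $\Pm$ gets translated into a distillation obstruction for some bipartite state, and the standard Werner twirl on the state side transports this obstruction into the Werner family, thereby pinning the symmetrised parameter to $p\in[-1,0)$. Making this final implication precise --- showing that the twirl cannot annihilate the non-CP/non-co-CP content inherited from $\Pm$ --- is the subtle part of the argument and constitutes the generalisation of St{\o}rmer's tensor-products technique referred to in the introduction.
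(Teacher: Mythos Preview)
Your Part (ii) is correct and essentially the paper's argument written out in full: the Choi matrix of $\Pm_p$ is $\rho_W^{(p)}\otimes(\rho_W^{(p)})^{T_2}$, which for $p\in[-1,0)$ is neither positive nor PPT.

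Part (i), however, has a genuine gap. Two of your structural claims are false:
\begin{itemize}
\item Tensor-stable positivity is \emph{not} closed under convex combinations. For instance $\ident_d$ and $\vartheta_d$ are both TSP, but $\big(\tfrac12(\ident_d+\vartheta_d)\big)^{\otimes 2}$ applied to $\omega_d$ equals $\tfrac12\omega_d+\tfrac{1}{2d}\mathbb{F}_d$, which has negative eigenvalue $-\tfrac{1}{2d}$ on the antisymmetric subspace. (Haar-averaging a \emph{single} map by unitary conjugation does preserve TSP, but for a different reason: every cross term in the $n$-th tensor power is of the form $(\text{CP})\circ\Pm^{\otimes n}\circ(\text{CP})$.)
\item More fatally, the tensor product of two TSP maps need not be TSP --- indeed need not even be positive. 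Your proposed auxiliary map $\tilde\Pm\approx\Pm\otimes(\vartheta_{d_2}\circ\Pm)$ fails already for $\Pm=\ident_d$ or $\Pm=\vartheta_d$, where it becomes the partial transpose $\ident_d\otimes\vartheta_d$. So there is nothing TSP to average, and the symmetrisation step never gets off the ground.
\end{itemize}

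The paper circumvents this by never forming $\Pm\otimes(\vartheta\circ\Pm)$ directly. Instead it works on the Choi side: Lemma~\ref{thm:ConnectionWerner} (local filter plus $UU$-twirl) produces \emph{separable} completely positive maps $\Sm_1,\Sm_2$ with $\Sm_1(C_\Pm)=\rho_W^{(p_1)}$ and $\Sm_2(C_{\vartheta\circ\Pm})=\rho_W^{(p_2)}$, $p_1,p_2\in[-1,0)$. The crucial observation is that $\tilde\Sm_2:=(\ident\otimes\vartheta_d)\circ\Sm_2\circ(\ident\otimes\vartheta_{d_2})$ is again separable CP, so $(\Sm_1\otimes\tilde\Sm_2)(C_\Pm^{\otimes2})=\rho_W^{(p_1)}\otimes(\rho_W^{(p_2)})^{T_2}$; a local depolarising channel equalises the parameters to $p=\max(p_1,p_2)<0$. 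Translating back through Choi--Jamio{\l}kowski, this yields
\[
\Pm_p(X)=\sum_i D_i\,\Pm^{\otimes 2}\big(C_i^T X\,\overline{C_i}\big)\,D_i^\dagger
\]
with completely positive inner and outer maps. Tensor-stable positivity of $\Pm_p$ is then immediate, since $\Pm_p^{\otimes n}$ is a sum of terms $(\text{CP})\circ\Pm^{\otimes 2n}\circ(\text{CP})$ and $\Pm^{\otimes 2n}$ is positive by assumption. The transposition you wanted to tensor in externally is instead \emph{absorbed into} the separable CP sandwich --- that is the missing idea.
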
 
 
The aforementioned connection to the theory of entanglement distillation has the following direct implication:

\begin{thm}[Non-trivial tensor-stable positivity implies NPPT-bound entanglement]

If there exists a non-trivial tensor-stable positive map $\Pm:\M_{d_1}\ra\M_{d_2}$, then there exist NPPT bound-entangled states~\cite{PhysRevLett.80.5239,PhysRevA.61.062313,PhysRevA.61.062312} in $\M_{d_1}\otimes\M_{d_1}$ as well as in $\M_{d_2}\otimes\M_{d_2}$. 
\label{thm:NPTImpl}
\end{thm}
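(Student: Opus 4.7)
The plan is to invoke Theorem~\ref{thm:OneParamFambl} to reduce to the one-parameter family $\Pm_p$ and then to argue that its tensor-stable positivity forces the associated Werner state to be non-distillable, hence NPPT bound entangled.

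Given a non-trivial tensor-stable positive map $\Pm: \M_{d_1} \to \M_{d_2}$, Theorem~\ref{thm:OneParamFambl}(i) yields some $p \in [-1,0)$ and some $d \in \{d_1,d_2\}$ such that $\Pm_p = \Wm_p \otimes (\vartheta_d \circ \Wm_p)$ is tensor-stable positive. I would take as candidate state the Werner state $\omega_p := \frac{1}{d}(\Wm_p \otimes \ident)(\proj{\Omega}{\Omega}) \in \M_d \otimes \M_d$ with $\ket{\Omega} = \sum_i \ket{ii}$; this is a bona fide density matrix because $\Wm_p$ is completely positive for every $p \in [-1,1]$. Writing its unnormalized Choi matrix as $aI - bF$ with $a=(d-p)/(d^2-1)$, $b=(1-dp)/(d^2-1)$ and $F$ the flip, one obtains $\omega_p^{T_B} = \frac{1}{d}(aI - b\proj{\Omega}{\Omega})$, whose smallest eigenvalue is $(a-bd)/d = p/d$, which is negative for $p \in [-1,0)$; hence $\omega_p$ is NPT.

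The hard part will be to show that tensor-stable positivity of $\Pm_p$ implies non-distillability of $\omega_p$. After reordering tensor factors one has $\Pm_p^{\otimes n} \cong \Wm_p^{\otimes n} \otimes (\vartheta \circ \Wm_p)^{\otimes n}$, and via the Choi--Jamio\l kowski correspondence Horodecki's 1-distillability criterion for $\omega_p^{\otimes n}$ --- the existence of a Schmidt-rank-$2$ vector $\ket{\psi}$ with $\bra{\psi}(\omega_p^{\otimes n})^{T_B}\ket{\psi} < 0$ --- is equivalent to failure of 2-positivity of $(\vartheta \circ \Wm_p)^{\otimes n}$. The crux is therefore the lemma that positivity of $\Wm_p^{\otimes n} \otimes (\vartheta \circ \Wm_p)^{\otimes n}$ forces 2-positivity of $(\vartheta \circ \Wm_p)^{\otimes n}$. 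The key structural input is the Kraus decomposition of the Werner channel, whose non-trivial Kraus operators are the antisymmetric rank-$2$ matrices $A_{ij} \propto \ket{i}\bra{j} - \ket{j}\bra{i}$; adapting the technique of \cite{stormer2010tensor}, one can lift any Schmidt-rank-$2$ witness of failure of 2-positivity of $(\vartheta \circ \Wm_p)^{\otimes n}$, through this rank-$2$ Kraus structure, to a positive input on which $\Wm_p^{\otimes n} \otimes (\vartheta \circ \Wm_p)^{\otimes n}$ fails to be positive --- the required contradiction. Granting this lemma, TS-positivity of $\Pm_p$ yields 2-positivity of $(\vartheta \circ \Wm_p)^{\otimes n}$ for every $n \in \N$, so $\omega_p$ is non-distillable.

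Combining NPT with non-distillability, $\omega_p \in \M_d \otimes \M_d$ is an NPPT bound-entangled state. To cover both $\M_{d_1} \otimes \M_{d_1}$ and $\M_{d_2} \otimes \M_{d_2}$, I would observe that the dual map $\Pm^*: \M_{d_2} \to \M_{d_1}$ is likewise non-trivial tensor-stable positive --- positivity, complete positivity, and complete co-positivity all pass to the adjoint and $(\Pm^*)^{\otimes n} = (\Pm^{\otimes n})^*$ --- so running the reduction for both $\Pm$ and $\Pm^*$ produces NPPT bound-entangled states in both target dimensions as claimed.
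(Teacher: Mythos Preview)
Your approach differs substantially from the paper's, and it contains a genuine gap at the step you yourself flag with ``Granting this lemma.''

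The paper's proof is short and does not pass through Theorem~\ref{thm:OneParamFambl} at all. It combines Lemma~\ref{thm:ConnectionWerner} with Corollary~\ref{cor:Fund}: for $d\in\{d_1,d_2\}$ the separable map $\Sm$ of Lemma~\ref{thm:ConnectionWerner} sends $C_\Pm$ to an entangled Werner state $\rho_W\in\M_d\otimes\M_d$; if $\rho_W$ were distillable, the LOCC distillation maps $\Sm_n$ would give separable maps with $\|\omega_d-\Sm_n\circ\Sm^{\otimes n}(C_\Pm^{\otimes n})\|_1\to0$, and Corollary~\ref{cor:Fund} would force $d_{\text{CP}}(\Pm)=0$, contradicting that $\Pm$ is not completely positive. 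Thus $\rho_W$ is NPPT and non-distillable. No statement about $2$-positivity of any tensor power is needed; the distillation protocol is used directly as a coding scheme.

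Your route instead needs the implication ``positivity of $\Wm_p^{\otimes n}\otimes(\vartheta\circ\Wm_p)^{\otimes n}$ $\Rightarrow$ $2$-positivity of $(\vartheta\circ\Wm_p)^{\otimes n}$'', and the sketch does not establish it. Two concrete problems: first, only for $p=-1$ do the Kraus operators of $\Wm_p$ consist solely of the antisymmetric $A_{ij}$; for $p\in(-1,0)$ the spectral decomposition of the Werner state also produces the rank-$1$ operators $\ket{i}\bra{i}$ and the symmetric rank-$2$ operators $\ket{i}\bra{j}+\ket{j}\bra{i}$. Second, and more seriously, even after choosing one rank-$2$ Kraus operator $K$ and an input $\ket{\Phi}$ so that the term $K(\cdot)K^\dagger\otimes(\vartheta\circ\Wm_p)^{\otimes n}$ reproduces the Schmidt-rank-$2$ witness, the remaining Kraus terms $K'(\cdot)(K')^\dagger$ of $\Wm_p^{\otimes n}$ also feed into $(\vartheta\circ\Wm_p)^{\otimes n}$, which is \emph{not} completely positive; their contributions to $\bra{\Psi}\Pm_p^{\otimes n}(\proj{\Phi}{\Phi})\ket{\Psi}$ are neither obviously nonnegative nor obviously zero (the ranges of different Kraus operators overlap). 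St{\o}rmer's argument in~\cite{stormer2010tensor} does not supply this step, and the paper's Conclusion explicitly distinguishes tensor-stable positivity from the ``$2$-positive tensor powers'' criterion of~\cite{PhysRevA.61.062312}, so the lemma you want is, as far as one can tell, a new statement for which you have not provided a proof. As a minor point, Theorem~\ref{thm:OneParamFambl} already covers both choices $d=d_1$ and $d=d_2$, so the duality argument at the end is correct but unnecessary.
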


After completion of this work, we learned that tensor-stable positive maps have been introduced by M.\ Hayashi under the name ``tensor product positive maps'' in \cite[chapter 5]{hayashiQIbook}, where it was furthermore shown that the quantum relative entropy does not increase under the application of any trace-preserving tensor product positive map.

\section{Notation and preliminaries} 
\label{sec:Notation}

For every $d\in\N$, we fix an orthonormal basis $\lset\ket{i}\rset^d_{i=1}$ of the Hilbert space $\C^d$, and  denote by $\vartheta_d(X):=X^T$ the transposition w.r.t. that basis, the $d$-dimensional maximally entangled state by $\ket{\Omega_d} := \frac{1}{\sqrt{d}}\sum^d_{i=1}\ket{ii}\in\M_{d^2}$ and the corresponding projection by $\omega_d := \proj{\Omega_d}{\Omega_d}$. The $d\times d-$identity matrix will be denoted by $\id_d$. The following Lemma collects two frequently used and well-known techniques involving the maximally entangled state and linear maps that can be proved by direct computation.  

\begin{lem}[Tricks using the maximally entangled state]\hfill
\begin{enumerate}
\item For any $d_2\times d_1$-matrix $X$ we have $\lb \id_{d_1}\otimes X\rb\ket{\Omega_{d_1}} = \sqrt{\frac{d_2}{d_1}}\lb X^T\otimes \id_{d_2} \rb\ket{\Omega_{d_2}}$.
\item For any map $\Lm:\M_{d_1}\ra\M_{d_2}$ that is hermiticity-preserving (i.e. maps hermitian matrices to hermitian matrices), we have $\lb\ident_{d_1}\otimes \Lm\rb\lb \omega_{d_1}\rb = \frac{d_2}{d_1}\lb \vartheta_{d_1}\circ\Lm^*\circ\vartheta_{d_2}\otimes \ident_{d_2}\rb\lb \omega_{d_2}\rb$. 
\end{enumerate}
In the above $\Lm^*$ denotes the adjoint w.r.t. the Hilbert-Schmidt inner product.
\label{Lemma:tricks}
\end{lem}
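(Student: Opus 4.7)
Both parts are routine computations in the fixed basis $\{\ket{i}\}$, so my plan is just to expand and compare. Part~(1) is the familiar ``ricochet'' identity: I would write $X = \sum_{a,b} X_{ab}\proj{a}{b}$ with $a\in\{1,\dots,d_2\}$ and $b\in\{1,\dots,d_1\}$ and use $\ket{\Omega_d} = d^{-1/2}\sum_k\ket{k}\otimes\ket{k}$. The left-hand side then becomes $d_1^{-1/2}\sum_{i,a} X_{ai}\,\ket{i}\otimes\ket{a}$, while the right-hand side becomes $\sqrt{d_2/d_1}\cdot d_2^{-1/2}\sum_{i,j}(X^T)_{ij}\,\ket{i}\otimes\ket{j} = d_1^{-1/2}\sum_{i,j}X_{ji}\,\ket{i}\otimes\ket{j}$, and a relabelling of summation indices finishes it.

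For part~(2) I see two natural routes. The direct one is to introduce matrix elements $L_{ab,ik}:=\bracket{a}{\Lm(\proj{i}{k})}{b}$ and expand $\omega_{d_1} = d_1^{-1}\sum_{i,k}\proj{i}{k}\otimes\proj{i}{k}$ to obtain $(\ident_{d_1}\otimes\Lm)(\omega_{d_1}) = d_1^{-1}\sum_{i,k,a,b} L_{ab,ik}\,\proj{i}{k}\otimes\proj{a}{b}$. On the other side, the definition of the Hilbert--Schmidt adjoint yields $\Lm^{*}(\proj{b}{a}) = \sum_{i,j}\overline{L_{ba,ij}}\,\proj{i}{j}$, so after applying the outer $\vartheta_{d_1}$ and expanding $\omega_{d_2}$ one finds $\frac{d_2}{d_1}(\vartheta_{d_1}\circ\Lm^{*}\circ\vartheta_{d_2}\otimes\ident_{d_2})(\omega_{d_2}) = d_1^{-1}\sum_{i,j,a,b}\overline{L_{ba,ij}}\,\proj{j}{i}\otimes\proj{a}{b}$. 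Matching the two expressions reduces to the coefficient identity $\overline{L_{ba,ij}} = L_{ab,ji}$, which is exactly the statement $\Lm(\proj{i}{j})^{\dagger} = \Lm(\proj{j}{i})$, i.e.\ hermiticity preservation. A shorter alternative is to use linearity and the Hermitian operator-sum decomposition $\Lm(X) = \sum_k \lambda_k A_k X A_k^{\dagger}$ with $\lambda_k\in\R$ (available because the Choi matrix of $\Lm$ is Hermitian), reducing to the case $\Lm(X) = AXA^{\dagger}$; applying part~(1) once on each side of $\omega_{d_1}$ then gives $(d_2/d_1)(A^{T}\otimes\id_{d_2})\,\omega_{d_2}\,(\bar A\otimes\id_{d_2})$, and the outer map $Y\mapsto A^{T} Y \bar A$ is recognised as $\vartheta_{d_1}\circ\Lm^{*}\circ\vartheta_{d_2}$ via $(A^{\dagger})^{T}=\bar A$.

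The only real obstacle I anticipate is bookkeeping: keeping the domain index set (size $d_1$) and codomain index set (size $d_2$) consistent throughout, and making sure that the transpositions $\vartheta_{d_1}$ and $\vartheta_{d_2}$ end up on the correct subsystem after the various factors are shuffled. No idea beyond linearity, hermiticity preservation, and the ricochet identity of part~(1) is required.
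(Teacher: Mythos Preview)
Your proposal is correct and matches the paper's approach: the paper simply states that the lemma ``can be proved by direct computation'' and omits the details, and your expansion in the standard basis (together with the hermiticity-preservation identity $\overline{L_{ba,ij}}=L_{ab,ji}$, or alternatively the Hermitian operator-sum decomposition) is precisely such a direct computation.
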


We will frequently make use of the Choi-Jamiolkowski isomorphism between linear maps $\Lm:\M_{d_1}\ra\M_{d_2}$ and matrices $C\in\M_{d_1}\otimes \M_ {d_2}$. The Choi matrix of such a linear map is defined as $C_\Lm := \lb\ident_{d_1}\otimes \Lm\rb\lb\omega_{d_1}\rb$. Note that we used the normalized maximally entangled state in this definition. The following implications are well known:
\begin{itemize}
\item $\Lm:\M_{d_1}\ra\M_{d_2}$ is positive iff $C_\Lm$ is block-positive, i.e. $\lb\bra{\phi}\otimes \bra{\psi}\rb C\lb\ket{\phi}\otimes \ket{\psi}\rb \geq 0$ for all $\ket{\phi}\in\C^{d_1}$, $\ket{\psi}\in\C^{d_2}$.
\item $\Lm:\M_{d_1}\ra\M_{d_2}$ is completely positive iff $C_\Lm\geq 0$.
\item $\Lm:\M_{d_1}\ra\M_{d_2}$ is completely co-positive iff $C^{T_2}_\Lm\geq 0$.
\end{itemize}
For $C\in\M_{d_1}\otimes \M_{d_2}$ we denote by $C^{T_2} := \lb\ident_d\otimes \vartheta_d\rb\lb C\rb$ the partial transpose w.r.t. to the second tensor-factor. The paradigm of a block-positive matrix that is not positive is the Choi matrix of the transposition $\omega_d^{T_2} = \frac{1}{d}\mathbb{F}_d$. Here $\mathbb{F}_d:\C^d\otimes\C^d\ra \C^d\otimes\C^d$ denotes the flip operator with $\mathbb{F}_d\ket{ij} = \ket{ji}$.

Matrices $C\in\M_{d_1}\otimes \M_{d_2}$ with $C^{T_2}\geq 0$ will be called PPT (positive partial transpose). A matrix is called NPPT (non-positive partial transpose) if it is not PPT. The question of NPPT-bound entanglement~\cite{PhysRevLett.80.5239,PhysRevA.61.062313,PhysRevA.61.062312} concerns the problem of creating a maximally entangled state from many copies of an NPPT-state using only local operations and classical communications (LOCC)~\cite{chitambar2014everything}. While it is clear that no maximally entangled state can be created from many copies of a PPT-state it is currently unknown whether the same can be true for an NPPT-state.

For a linear map $\Lm:\M_{d_1}\ra\M_{d_2}$ we define the $\diamond$-norm~\cite{paulsen2002completely} as $\|\Lm\|_{\diamond} := \sup_{n\in\N}\|\ident_{n}\otimes\Lm\|_{1\ra 1}$. Here $\|\Sm\|_{1\ra 1}:=\sup_{\| X\|_1 = 1}\|\Sm\lb X\rb\|_1$ denotes the $1\ra 1$-norm of a linear map $\Sm$. By duality we have $\|\Sm\|_{1\ra 1}=\|\Sm^*\|_{\infty\ra\infty}=\sup_{\| X\|_\infty = 1}\|\Sm^*\lb X\rb\|_\infty$, where $\Sm^*$ denotes the Hilbert-Schmidt adjoint of $\Sm$. In the following lemma we collect some well-known properties of the $\diamond$-norm. 
\begin{lem}[Properties of the $\diamond$-norm]\hfill
\begin{enumerate}
\item For any completely positive map $\Tm:\M_{d_1}\ra\M_{d_2}$ we have
\begin{align}
\|\Tm\|_\diamond = \|\Tm\|_{1\ra 1} = \|\Tm^{*}\|_{\infty\ra\infty} = \|\Tm^{*}\lb\one_{d_2}\rb\|_\infty.
\label{equ:blablup}
\end{align}
\item For any linear map $\Lm:\M_{d_1}\ra\M_{d_2}$ we have
\begin{align*}
\|\Lm^{\otimes n}\|_\diamond = \|\Lm\|^n_\diamond.
\end{align*}
\item For any linear maps $\Lm_1:\M_{d_1}\ra\M_{d_2}$ and $\Lm_2:\M_{d_2}\ra\M_{d_3}$ we have
\begin{align*}
\|\Lm_2\circ \Lm_1\|_\diamond\leq \|\Lm_2\|_\diamond\|\Lm_1\|_\diamond.
\end{align*}
\end{enumerate}
\label{lem:PropDiam}
\end{lem}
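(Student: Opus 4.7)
The plan is to treat the three items more or less independently, starting with the easiest (part 3), using part 3 to bootstrap part 2, and handling part 1 by the standard fact that the $1\to 1$ norm of a CP map is attained on positive operators.

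For \textbf{part (3)}, the observation I would exploit is that ampliation commutes with composition:
$$(\ident_n\otimes \Lm_2)\circ(\ident_n\otimes \Lm_1) \;=\; \ident_n\otimes (\Lm_2\circ\Lm_1).$$
Taking the $1\!\to\!1$ norm and using submultiplicativity of any induced operator norm under composition gives $\|\ident_n\otimes(\Lm_2\circ\Lm_1)\|_{1\to 1}\le \|\ident_n\otimes\Lm_2\|_{1\to 1}\cdot\|\ident_n\otimes\Lm_1\|_{1\to 1}\le \|\Lm_2\|_\diamond\|\Lm_1\|_\diamond$. Taking the supremum over $n$ finishes the step.

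For \textbf{part (1)}, I would use the Jordan decomposition $X=X_+-X_-$ with $\|X\|_1=\tr(X_+)+\tr(X_-)$. Since $\Tm$ is CP, $\Tm(X_\pm)\ge 0$ and hence $\|\Tm(X)\|_1\le \tr\Tm(X_+)+\tr\Tm(X_-)=\tr\Tm(|X|)=\langle \Tm^*(\one_{d_2}),|X|\rangle$, so the supremum defining $\|\Tm\|_{1\to 1}$ is attained on a positive $X$ with unit trace, giving $\|\Tm\|_{1\to 1}=\sup_{\rho\ge 0,\,\tr\rho=1}\tr(\Tm^*(\one_{d_2})\rho)=\|\Tm^*(\one_{d_2})\|_\infty$. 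The duality $\|\Tm\|_{1\to 1}=\|\Tm^*\|_{\infty\to\infty}$ is the standard Schatten $1$/$\infty$ duality applied to $\langle Y,\Tm(X)\rangle=\langle\Tm^*(Y),X\rangle$. Finally, since $\ident_n\otimes \Tm$ is also CP, the same argument shows $\|\ident_n\otimes\Tm\|_{1\to 1}=\|(\ident_n\otimes\Tm)^*(\one_{n d_2})\|_\infty=\|\one_n\otimes\Tm^*(\one_{d_2})\|_\infty=\|\Tm^*(\one_{d_2})\|_\infty$ for every $n$, so the supremum in the definition of $\|\Tm\|_\diamond$ is stable in $n$ and equals $\|\Tm\|_{1\to 1}$.

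For \textbf{part (2)}, the key auxiliary fact is the identity $\|\cA\otimes\ident_k\|_\diamond=\|\cA\|_\diamond$, which follows directly from the definition once one absorbs the two trivial ancilla factors into one: $\ident_n\otimes\cA\otimes\ident_k\cong\ident_{nk}\otimes\cA$ after reordering tensor factors. Combined with part (3), an induction on $n$ yields
$$\|\Lm^{\otimes n}\|_\diamond=\|(\Lm\otimes \ident)\circ(\ident\otimes \Lm^{\otimes(n-1)})\|_\diamond\le \|\Lm\otimes\ident\|_\diamond\,\|\ident\otimes\Lm^{\otimes(n-1)}\|_\diamond=\|\Lm\|_\diamond\,\|\Lm^{\otimes(n-1)}\|_\diamond\le \|\Lm\|_\diamond^n.$$
For the matching lower bound, fix $\varepsilon>0$ and pick $m$ and $\rho\in\M_m\otimes\M_{d_1}$ with $\|\rho\|_1=1$ and $\|(\ident_m\otimes\Lm)(\rho)\|_1\ge \|\Lm\|_\diamond-\varepsilon$. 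Then $\rho^{\otimes n}$ has unit trace norm, and multiplicativity of the trace norm under tensor products gives $\|((\ident_m\otimes\Lm)^{\otimes n})(\rho^{\otimes n})\|_1\ge (\|\Lm\|_\diamond-\varepsilon)^n$. After permuting the tensor factors this is the image of a unit-trace-norm input in $\M_{m^n}\otimes\M_{d_1}^{\otimes n}$ under $\ident_{m^n}\otimes\Lm^{\otimes n}$, so $\|\Lm^{\otimes n}\|_\diamond\ge(\|\Lm\|_\diamond-\varepsilon)^n$. Letting $\varepsilon\to 0$ closes the inequality.

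The only nontrivial step here is really the submultiplicativity used in part (2); everything else is bookkeeping with standard duality and the Jordan decomposition. I would expect the cleanest obstruction to be making precise the tensor-factor reordering in $\|\cA\otimes\ident_k\|_\diamond=\|\cA\|_\diamond$, which is purely notational but must be stated carefully to avoid hidden assumptions.
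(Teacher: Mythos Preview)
The paper does not actually prove this lemma itself; it simply refers the reader to Paulsen's book. Your proposal is therefore considerably more explicit than what the paper provides, and your arguments for parts (2) and (3) are correct and standard.

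In part (1) there is a small gap worth flagging. The Jordan decomposition $X=X_+-X_-$ with $\|X\|_1=\tr(X_+)+\tr(X_-)$ is only available for Hermitian $X$, so the inequality $\|\Tm(X)\|_1\le\tr\big(\Tm^*(\one_{d_2})\,|X|\big)$ as you derive it only bounds the $1\to1$ norm over Hermitian inputs. To conclude $\|\Tm\|_{1\to1}=\|\Tm^*(\one_{d_2})\|_\infty$ you still need to handle non-Hermitian $X$. The cleanest fix is to use the duality you already stated, $\|\Tm\|_{1\to1}=\|\Tm^*\|_{\infty\to\infty}$, together with the Russo--Dye theorem (a positive map on a unital $C^*$-algebra attains its operator norm at the identity), which gives $\|\Tm^*\|_{\infty\to\infty}=\|\Tm^*(\one_{d_2})\|_\infty$ directly and makes your Jordan-decomposition step unnecessary. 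Alternatively, if you want to keep the argument self-contained, the $2\times2$ off-diagonal block $\bigl(\begin{smallmatrix}0&X\\X^*&0\end{smallmatrix}\bigr)$ reduces the general case at ancilla dimension $n$ to the Hermitian case at dimension $2n$, which suffices for the diamond norm since you are taking the supremum over $n$ anyway.
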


See~\cite{paulsen2002completely} for proofs of these statements.

\section{Proof of Theorem \ref{ntensor}}
\label{sec:FamiliesN}

Our proof of Theorem \ref{ntensor} uses the following quantitative version of the result \cite[Lemma 22]{6094278} about tensor products of generalizations of unextendible product bases \cite{PhysRevLett.82.5385}, whose elements are not necessarily mutually orthogonal. For the following, we call a matrix $P\in\M_{d_1}\otimes\M_{d_2}$ \emph{separable} if it can be written as $P=\sum^k_{i=1} A_i\otimes B_i$ for some $k\in\N$ and matrices $A_i\in\M_{d_1}$, $B_i\in\M_{d_2}$ with $A_i\geq 0$ and $B_i\geq 0$. 

\begin{lem}[Multiplicativity of minimal overlap with product states]\label{cubittlemma}

For a separable matrix $P\in\M_{d_1}\otimes\M_{d_2}$, define
\begin{align*}
\mu ~:= ~\min\lset\lb\bra{\psi}\otimes\bra{\phi}\rb P \lb\ket{\psi}\otimes\ket{\phi}\rb : \ket{\psi}\in\C^{d_1},\ket{\phi}\in\C^{d_2},\braket{\psi}{\psi} = \braket{\phi}{\phi}=1   \rset .
\end{align*}
Then, for all $n\in \N$, we have
\begin{align*}
\min\lset\lb\bra{\Psi}\otimes\bra{\Phi}\rb P^{\otimes n} \lb\ket{\Psi}\otimes\ket{\Phi}\rb:\ket{\Psi}\in\lb\C^{d_1}\rb^{\otimes n},\ket{\Phi}\in\lb\C^{d_2}\rb^{\otimes n}, \braket{\Psi}{\Psi} = \braket{\Phi}{\Phi}=1  \rset ~=~ \mu^n .
\end{align*}
In particular, if there is no nonzero product vector in the kernel of $P$, then there is none in the kernel of $P^{\otimes n}$.
\end{lem}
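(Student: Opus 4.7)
The plan is to re-express the minimum-overlap quantity through the entanglement-breaking channel
\[
\Tm\,:\,\M_{d_1}\to\M_{d_2}\,,\qquad \Tm(X)\,:=\,\sum_{i=1}^k \tr[A_i X]\,B_i\,,
\]
canonically associated to the separable decomposition $P=\sum_{i=1}^k A_i\otimes B_i$. Since $(\bra{\psi}\otimes\bra{\phi})P(\ket{\psi}\otimes\ket{\phi})=\bra{\phi}\Tm(\proj{\psi}{\psi})\ket{\phi}$, minimizing first over $\ket{\phi}$ and then using concavity of $\sigma\mapsto\lambda_{\min}(\Tm(\sigma))$ to replace pure by mixed states yields $\mu=\min_\sigma \lambda_{\min}(\Tm(\sigma))$, with $\sigma$ ranging over all states on $\C^{d_1}$. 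An identical unpacking for $P^{\otimes n}=\sum_I A_I\otimes B_I$ shows that the associated channel coincides with $\Tm^{\otimes n}$, so the lemma reduces to establishing the uniform operator inequality $\Tm^{\otimes n}(\rho)\ge\mu^n\id$ for every state $\rho$ on $(\C^{d_1})^{\otimes n}$; the matching upper bound is achieved by taking $\rho=\sigma_*^{\otimes n}$ for a minimizer $\sigma_*$.

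I would prove this inequality by induction on $n$, the case $n=1$ being exactly the reformulation above. For the step $n\to n+1$, take a state $\rho$ on $(\C^{d_1})^{\otimes(n+1)}\cong\C^{d_1}\otimes(\C^{d_1})^{\otimes n}$ and apply $\Tm$ to the first tensor factor only. A direct computation in a basis of the first factor gives the ``classical--quantum'' decomposition
\[
(\Tm\otimes\ident^{\otimes n})(\rho)\,=\,\sum_{i=1}^k B_i\otimes\rho^{(i)}\,,\qquad \rho^{(i)}\,:=\,\tr_1\bigl[(A_i\otimes\id)\rho\bigr]\,,
\]
with each $\rho^{(i)}$ positive since $\rho^{(i)}=\tr_1[(A_i^{1/2}\otimes\id)\rho(A_i^{1/2}\otimes\id)]$. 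Setting $p_i:=\tr[\rho^{(i)}]=\tr[A_i\rho_1]$, where $\rho_1$ is the marginal of $\rho$ on the first factor, the inductive hypothesis applied to the normalization $\rho^{(i)}/p_i$ (and trivially when $p_i=0$) gives $\Tm^{\otimes n}(\rho^{(i)})\ge p_i\mu^n\id$ for every $i$. Assembling,
\[
\Tm^{\otimes(n+1)}(\rho)\,=\,\sum_i B_i\otimes\Tm^{\otimes n}(\rho^{(i)})\,\ge\,\mu^n\Bigl(\sum_i p_i B_i\Bigr)\otimes\id\,=\,\mu^n\,\Tm(\rho_1)\otimes\id\,\ge\,\mu^{n+1}\id\,,
\]
the final step applying the base case to the marginal state $\rho_1$.

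The ``in particular'' claim then follows at once: for a separable (hence positive semidefinite) $P$, $\bra{\psi,\phi}P\ket{\psi,\phi}=0$ forces $P\ket{\psi,\phi}=0$, so $\mu>0$ is equivalent to the kernel of $P$ containing no nonzero product vector, and the multiplicativity $\mu(P^{\otimes n})=\mu^n$ propagates this property to every tensor power.

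The main obstacle I anticipate is spotting the right reformulation: separability of $P$ is essential, but its useful consequence here is the entanglement-breaking (measure-and-prepare) form of $\Tm$. This is what produces the classical--quantum output $\sum_i B_i\otimes\rho^{(i)}$ after a single application of $\Tm\otimes\ident$, which in turn lets the inductive step cleanly factor the bound into one use of the base case on the marginal $\rho_1$ and one use of the hypothesis on the residues $\rho^{(i)}$ --- with the unnormalized positive weights $p_i$ assembling correctly via $\sum_i p_i B_i=\Tm(\rho_1)$. Without this structure, the analogous multiplicativity $\min_\sigma\lambda_{\min}(\Tm^{\otimes n}(\sigma))=\bigl(\min_\sigma\lambda_{\min}(\Tm(\sigma))\bigr)^n$ need not hold for a general CP map.
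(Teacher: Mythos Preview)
Your proof is correct and follows essentially the same approach as the paper: both translate the problem to the minimal output eigenvalue of the entanglement-breaking map associated to the separable $P$, and both exploit the measure-and-prepare structure to obtain a classical--quantum decomposition after applying $\Tm$ to one factor. The only cosmetic difference is packaging: the paper factors out the multiplicativity statement as a separate result (Theorem~\ref{Additivity}, proved for general $\Tm\otimes\Sm$ with $\Tm$ entanglement-breaking and $\Sm$ completely positive), whereas you prove the special case $\Sm=\Tm^{\otimes(n-1)}$ inline by induction---same argument, same key step.
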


The connection to \cite[Lemma 22]{6094278} becomes clear by noting that any separable matrix $P\in\M_{d_1}\otimes\M_{d_2}$ admits a decomposition of the form $P\,=\,\sum_{i=1}^N\ket{\psi_i}\bra{\psi_i}\otimes\ket{\phi_i}\bra{\phi_i}$ such that $\text{ker}\lb P\rb = \lb\text{span}\lset \ket{\psi_i}\otimes \ket{\phi_i}\rset^N_{i=1}\rb^\perp$. Hence for $\mu>0$ the set $\lset \ket{\psi_i}\otimes \ket{\phi_i}\rset$ forms an unextendible product set.

For the following proof we will need the \emph{minimal output eigenvalue} of a completely positive map $\Tm:\M_{d_1}\ra\M_{d_2}$ defined as
\begin{align}
\lambda^{\text{min}}_{\text{out}}\lbr \Tm\rbr := \min_{\rho\in\D_{d_1}} \lambda_{\text{min}}\lb \Tm(\rho)\rb\,.
\label{minoutev}
\end{align}
Here $\lambda_{\text{min}}\lb \cdot\rb$ denotes the minimal eigenvalue and $\D_{d_1}$ is the set of quantum states in $\M_{d_1}$. For any entanglement breaking map $\Tm$ and any completely positive map $\Sm$ we prove in Theorem \ref{Additivity} (Appendix \ref{Appendix2}) that 
\begin{align*}
\lambda^{\text{min}}_{\text{out}}\lbr \Tm\otimes \Sm\rbr = \lambda^{\text{min}}_{\text{out}}\lbr \Tm\rbr \lambda^{\text{min}}_{\text{out}}\lbr \Sm\rbr.
\end{align*}
Thus, $\lambda^{\text{min}}_{\text{out}}$ is multiplicative for entanglement breaking maps.

\begin{proof}[proof of Lemma \ref{cubittlemma}]

Consider the completely positive map $\Tm:\M_{d_1}\ra\M_{d_2}$ such that $P = C_{\Tm}$. Then we have
\begin{align*}
\lb\bra{\Psi}\otimes\bra{\Phi}\rb P^{\otimes k} \lb\ket{\Psi}\otimes\ket{\Phi}\rb = \frac{1}{d_1^k}\bra{\Phi} \Tm^{\otimes k}\lb\overline{\proj{\Psi}{\Psi}}\rb \ket{\Phi}
\end{align*}
for all $k\in\N$ and all $\ket{\Psi}\in\lb\C^{d_1}\rb^{\otimes k}$, $\ket{\Phi}\in\lb\C^{d_2}\rb^{\otimes k}$. Using the minimal output eigenvalue \eqref{minoutev} we have for any $k\in\N$
\begin{align*}
\lambda^{\text{min}}_{\text{out}}( \Tm^{\otimes k}) =d_1^k\min\big\{\lb\bra{\Psi}\bra{\Phi}\rb P^{\otimes k} \lb\ket{\Psi}\ket{\Phi}\rb:\ket{\Psi}\in(\C^{d_1})^{\otimes k},\ket{\Phi}\in(\C^{d_2})^{\otimes k}, \|\Psi\| = \|\Phi\|=1  \big\}.
\end{align*}
As $P$ is separable the map $\Tm$ is entanglement breaking~\cite{horodecki2003entanglement} and we can apply Theorem \ref{Additivity} from Appendix \ref{Appendix2}. This shows that $\lambda^{\text{min}}_{\text{out}}\lb \Tm^{\otimes n}\rb = \lambda^{\text{min}}_{\text{out}}\lb \Tm\rb^n$ and finishes the proof.

\end{proof}

With this ingredient we prove Theorem \ref{ntensor}:
\begin{proof}[proof of Theorem \ref{ntensor}]Choose orthonormal bases $\{\ket{i}\}\subseteq\C^{d_1}$ and $\{\ket{j}\}\subseteq\C^{d_2}$ and define the operator
\begin{align}\nonumber
P~:=~&\big(\ket{1}\ket{1}+\ket{2}\ket{2}\big)\big(\bra{1}\bra{1}+\bra{2}\bra{2}\big)\,+\,\ket{1}\ket{2}\bra{1}\bra{2}\,+\,\ket{2}\ket{1}\bra{2}\bra{1}\\
&~~+\sum_{\substack{{(i,j)}\\{i>2~\text{or}~j>2}}}\ket{i}\ket{j}\bra{i}\bra{j}~~\in~\M_{d_1}\otimes\M_{d_2}\,.\label{Pforunextendible}
\end{align}
It is easy to verify that 
\begin{align*}
P\, = \, \sum^3_{k=1} \frac{1}{3}\proj{\xi_k}{\xi_k}\otimes\overline{\proj{\xi_k}{\xi_k}}+\sum_{\substack{{(i,j)}\\{i>2~\text{or}~j>2}}}\proj{i}{i}\otimes\proj{j}{j}
\end{align*}
for $\ket{\xi_k} = \ket{1} + e^{\frac{2\pi ik}{3}}\ket{2}$. This shows that $P$ is separable as a sum of positive product operators, and for later we note $\|P\|_\infty=2$. Now define
\begin{align*}
\mu ~:= ~\min\lset\lb\bra{\psi}\otimes\bra{\phi}\rb P \lb\ket{\psi}\otimes\ket{\phi}\rb : \ket{\psi}\in\C^{d_1},\ket{\phi}\in\C^{d_2}, \braket{\psi}{\psi} = \braket{\phi}{\phi}=1   \rset 
\end{align*}
and apply Lemma \ref{cubittlemma} showing that for any $k\in\N$: 
\begin{align*}
\min\lset\lb\bra{\Psi}\otimes\bra{\Phi}\rb P^{\otimes k} \lb\ket{\Psi}\otimes\ket{\Phi}\rb:\ket{\Psi}\in\lb\C^{d_1}\rb^{\otimes k},\ket{\Phi}\in\lb\C^{d_2}\rb^{\otimes k}, \braket{\Psi}{\Psi} = \braket{\Phi}{\Phi}=1  \rset ~=~ \mu^k .
\end{align*}
As the kernel ${\rm ker}(P)={\rm span}\{\ket{1}\ket{1}-\ket{2}\ket{2}\}$ of $P$ in \eqref{Pforunextendible} contains no nonzero product vector we have $\mu > 0$. One can actually compute $\mu=1/2$. With this we can compute 
\begin{align}
\label{equ:ntsBound}
\big(\bra{\Psi}\otimes\bra{\Phi}\big)\,(P-\varepsilon\id_{d_1}\id_{d_2})^{\otimes n}\,\big(\ket{\Psi}&\otimes\ket{\Phi}\big)\,\geq \sum^{\lfloor\frac{n}{2}\rfloor}_{k=0} \binom{n}{2k}\varepsilon^{2k} \mu^{n-2k} - \sum^{\lfloor\frac{n+1}{2}\rfloor}_{k=1}\binom{n}{2k-1}\varepsilon^{2k-1} \| P\|_\infty^{n-2k+1} \\
& = \frac{(\mu + \varepsilon)^n + (\mu - \varepsilon)^n}{2} - \frac{(\| P\|_\infty + \varepsilon)^n - (\| P\|_\infty - \varepsilon)^n}{2} \nonumber\\
&\geq \mu^n - (\| P\|_\infty + \varepsilon)^n + \| P\|^n_\infty\geq 0\nonumber
\end{align}
for any $0\leq\varepsilon \leq  \sqrt[n]{\| P\|^n_\infty + \mu^n} - \| P\|_\infty$. This means that $(P-\varepsilon\id_{d_1d_2})^{\otimes n}\in(\M_{d_1})^{\otimes n}\otimes(\M_{d_2})^{\otimes n}$ is a block-positive operator for any $\varepsilon\in \lbr 0, \sqrt[n]{\| P\|^n_\infty + \mu^n} - \| P\|_\infty\rbr$, which by the Choi-Jamiolkowski isomorphism (Section \ref{sec:Notation}) corresponds to a positive linear map $\Pm_\varepsilon^{\otimes n}:(\M_{d_1})^{\otimes n}\ra(\M_{d_2})^{\otimes n}$. The map $\Pm_\varepsilon :\M_{d_1}\ra\M_{d_2}$ with Choi matrix $(P-\varepsilon\id_{d_1d_2})$ is thus $n$-tensor-stable positive. Note that $P$ is rank-deficient and as $P^{T_2}$ equals the expression (\ref{Pforunextendible}) with the first terms replaced by $(\ket{1}\ket{2}+\ket{2}\ket{1})(\bra{1}\bra{2}+\bra{2}\bra{1})+\ket{1}\ket{1}\bra{1}\bra{1}+\ket{2}\ket{2}\bra{2}\bra{2}$ it is rank-deficient as well. Hence, the Choi matrices $(P-\varepsilon\id_{d_1d_2})$ and $(P^{T_2}-\varepsilon\id_{d_1d_2})$ of $\Pm$ respectively $\vartheta_{d_2}\circ\Pm$ are not positive for $\epsilon >0$, which finally shows that $\Pm_\varepsilon$ is not a trivial tensor-stable positive map for any $\varepsilon\in \lb 0, \sqrt[n]{\| P\|^n_\infty + \mu^n} - \| P\|_\infty\rbr$, i.e. in particular for $\varepsilon \in (0,\frac{2}{8^n}]$.
\end{proof}

\section{Applications to quantum information theory}
\label{sec:Appl}

Deciding the existence of non-trivial tensor-stable positive maps could lead to a solution of other open problems in quantum information theory. Here we will discuss two such connections. 

\subsection{Entanglement annihilating channels}

In \cite{moravvcikova2010entanglement,filippov2012local,filippov2013bipartite,filippov2013dissociation} the authors study how entanglement in a multipartite setting can be destroyed by dissipative processes. They define the set of $k$-locally entanglement annihilating channels. These are quantum channels $\Tm:\M_{d_1}\ra\M_{d_2}$ such that $\Tm^{\otimes k}\lb\rho\rb$ is $k$-partite separable for all input states $\rho\in\M_{d_1^k}$, i.e. for all $\rho\geq 0$ we have $\Tm^{\otimes k}\lb\rho\rb = \sum^m_{i=1} p_i \sigma^{(1)}_i\otimes \sigma^{(2)}_i\otimes\cdots \otimes \sigma^{(k)}_i$ for some $m\in\N$, states $\sigma^{(j)}_i \in\M_{d_2}$ and $p_i\in\R^+$ depending on $\rho$. Furthermore, a channel is called $\infty$-locally entanglement annihilating if it is $k$-locally entanglement annihilating for all $k\in\N$.

It is clear that entanglement breaking channels~\cite{horodecki2003entanglement} are $\infty$-locally entanglement annihilating. In \cite{moravvcikova2010entanglement,filippov2012local} examples of $2$-locally entanglement annihilating channels are constructed that are not entanglement breaking. However it is not known whether there exist an $\infty$-locally entanglement annihilating channel, which is not entanglement breaking.

We can prove the following theorem connecting $k$-locally entanglement annihilating channels to tensor-stable positive maps. 

\begin{thm}

If the quantum channel $\Tm:\M_{d_1}\ra\M_{d_2}$ is $k$-locally entanglement annihilating for some $k\geq 2$, but not entanglement breaking, then there exists a positive map $\Sm:\M_{d_2}\ra\M_{d_1}$ such that $\Pm:\M_{d^{2}_1}\ra\M_{d^2_1}$ defined as 
\begin{align}
\Pm = \lb\Sm\circ\Tm\rb\otimes \lb\vartheta\circ\Sm\circ\Tm\rb
\label{equ:mapPosFromEA}
\end{align}
is a $\lfloor\frac{k}{2}\rfloor$-tensor-stable positive map that is \emph{not} a trivial tensor-stable positive map.

Thus, the existence of a non-entanglement breaking $\infty$-locally entanglement annihilating channel implies the existence of a non-trivial tensor-stable positive map.

\end{thm}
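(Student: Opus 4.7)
My plan is to construct a suitable positive map $\Sm$ from an entanglement witness for the Choi matrix of $\Tm$, and then to verify the two required properties of the resulting map $\Pm$ separately.

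First, since $\Tm$ is not entanglement breaking, its Choi matrix $C_\Tm=(\ident\otimes\Tm)(\omega_{d_1})$ is an entangled bipartite state, and by the Horodecki characterization of entanglement via positive maps there exists a positive map $\Sm:\M_{d_2}\to\M_{d_1}$ such that $(\ident\otimes\Sm)(C_\Tm)=C_{\Sm\circ\Tm}$ fails to be positive semidefinite. Write $\Rm:=\Sm\circ\Tm:\M_{d_1}\to\M_{d_1}$; then $\Rm$ is positive but not completely positive.

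Next I would show that $\Pm^{\otimes n}$ is positive for $n=\lfloor k/2\rfloor$. The key rewriting, which uses only a permutation of tensor factors and the identity $\Rm=\Sm\circ\Tm$, is
\begin{align*}
\Pm^{\otimes n} \;\cong\; \bigl[\ident^{\otimes n}\otimes\vartheta^{\otimes n}\bigr]\,\circ\,\Sm^{\otimes 2n}\,\circ\,\Tm^{\otimes 2n}.
\end{align*}
For any $\rho\geq 0$ on $\M_{d_1}^{\otimes 2n}$, the innermost application $\Tm^{\otimes 2n}(\rho)$ is $2n$-partite separable, using that $k$-local entanglement annihilation implies $m$-local entanglement annihilation for every $m\leq k$ (embed an $m$-copy input into a $k$-copy input by tensoring with fixed product states, and the output inherits $m$-partite separability). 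On each product term $\bigotimes_\ell \tau_i^{(\ell)}$ of a separable decomposition, applying $\Sm$ factor-wise preserves the product structure and the positivity of each factor, and the subsequent partial transpositions do the same; summing with nonnegative weights then yields a positive operator.

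Finally I would verify that $\Pm$ is not a trivial tensor-stable positive map. Under the Choi-Jamiolkowski isomorphism and up to a reordering of tensor factors, $C_\Pm\cong C_\Rm\otimes C_\Rm^{T_2}$, while $\vartheta\circ\Pm\cong(\vartheta\circ\Rm)\otimes\Rm$ has Choi matrix $C_\Rm^{T_2}\otimes C_\Rm$; both complete positivity and complete co-positivity of $\Pm$ therefore reduce to the single condition $C_\Rm\otimes C_\Rm^{T_2}\geq 0$. Because $\Rm$ and $\vartheta\circ\Rm$ are positive and nonzero, both $C_\Rm$ and $C_\Rm^{T_2}$ are block-positive and have strictly positive trace, so neither is negative semidefinite. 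A tensor product of two nonzero Hermitian operators with positive trace can only be positive semidefinite if each factor already is; hence $C_\Rm\otimes C_\Rm^{T_2}\geq 0$ would force $\Rm$ to be completely positive, contradicting the choice of $\Sm$. I expect the main obstacle to be the careful bookkeeping of tensor factors in the regrouping of $\Pm^{\otimes n}$; the non-triviality argument is essentially immediate once one observes that both $C_\Rm$ and $C_\Rm^{T_2}$ have strictly positive trace. The consequence for $\infty$-locally entanglement annihilating channels then follows at once, because the same $\Pm$ is $n$-tensor-stable positive for every $n\in\N$ when $k$ can be taken arbitrarily large.
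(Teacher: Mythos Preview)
Your proof is correct and follows essentially the same approach as the paper's: obtain $\Sm$ from the Horodecki entanglement criterion so that $\Rm=\Sm\circ\Tm$ is positive but not completely positive, then use the $k$-local entanglement annihilating property of $\Tm$ to see that $\Pm^{\otimes n}$ acts through a fully separable intermediate state for $2n\leq k$. The paper's proof is considerably terser, merely asserting the $\lfloor k/2\rfloor$-tensor-stable positivity and the non-triviality without spelling out the tensor-factor regrouping or the Choi-matrix argument $C_\Pm\cong C_\Rm\otimes C_\Rm^{T_2}$ that you provide.
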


\begin{proof}

Assume that $\Tm:\M_{d_1}\ra\M_{d_2}$ is a $k$-locally entanglement annihilating channel. If $\Tm$ is not entanglement breaking, then there exists a positive map $\Sm:\M_{d_2}\ra\M_{d_1}$ such that $\Sm\circ\Tm$ is not completely positive~\cite{horodecki1996separability}. Now consider the map $\Pm:\M_{d^{2}_1}\ra\M_{d^2_1}$ defined in \eqref{equ:mapPosFromEA}.
As $\Tm:\M_{d_1}\ra\M_{d_2}$ is $k$-entanglement annihilating, $\Pm$ is $\lfloor \frac{k}{2}\rfloor$-tensor-stable positive. Furthermore it is neither completely positive nor completely co-positive.

\end{proof}

By our Theorem \ref{thm:NPTImpl}, the existence of a $\infty$-locally entanglement annihilating but not entanglement breaking channel then implies the existence of NPPT-bound entanglement.  

\subsection{Upper bounds on the quantum capacity}\label{capacityboundsubsection}

The existence of non-trivial tensor-stable positive maps would imply new bounds on the quantum capacity of a quantum channel. By generalizing the proof of the transposition criterion~\cite{holevo2001evaluating,kretschmann2004tema} we obtain a quantitative bound on the quantum capacity $\mathcal{Q}\lb\Tm\rb$ of a quantum channel. Recall that the quantum capacity is defined as:

\begin{defn}[\text{Quantum capacity }$\mathcal{Q}$, see ~\cite{PhysRevA.55.1613, kretschmann2004tema}]\hfill
\label{defn:QuantCap}

The \textbf{quantum capacity} of a quantum channel $\Tm:\M_{d_1}\ra \M_{d_2}$ is defined as 
\begin{align*}
\mathcal{Q}\lb \Tm\rb := \sup\lset R\in \R^+ :\text{ R achievable rate}\rset ,
\end{align*}
where a rate $R\in\R^+$ is called achievable if there exist sequences $\lb n_\nu\rb^\infty_{\nu=1},\lb m_\nu\rb^\infty_{\nu=1}$ such that $R = \limsup_{\nu\ra \infty}\frac{n_\nu\log_2(d)}{m_\nu}$ and the approximation error vanishes in the asymptotic limit, i.e. 
\begin{align}
\inf_{\Em,\Dm}\,\frac{1}{2}\left\Vert \ident_{d}^{\otimes n_\nu} - \Dm\circ \Tm^{\otimes m_\nu}\circ \Em\right\Vert_\diamond \ra 0 \hspace*{0.3cm}\text{as } \nu\ra \infty .
\label{equ:ApproxError}
\end{align}
Here, the infimum runs over all encoding and decoding quantum channels $\Em:\M^{\otimes n_\nu}_{d}\ra \M^{\otimes m_\nu}_{d_1}$ and $\Dm:\M^{\otimes m_\nu}_{d_2}\ra \M^{\otimes n_\nu}_{d}$, and $d\geq2$ is any fixed integer (note, the value of $\mathcal{Q}\lb \Tm\rb$ does not depend on the choice of $d$ \cite{kretschmann2004tema}).
\end{defn} 

Currently all channels known to have zero quantum capacity come from two classes~\cite{PhysRevLett.108.230507}. These are the classes of anti-degradable channels~\cite{bruss1998optimal,bennett1997capacities} and of completely co-positive quantum channels. The latter can be shown using the quantitative transposition bound \cite{holevo2001evaluating}
\begin{align}
\mathcal{Q}\lb\Tm\rb\leq \log_2\lb \norm{\vartheta_{d_2}\circ\Tm}{\diamond}\rb
\label{equ:transBound}
\end{align}
on the quantum capacity of any quantum channel $\Tm:\M_{d_1}\ra\M_{d_2}$. We will now prove a generalization of this bound using any surjective, unital and tensor-stable positive map $\Pm:\M_{d_3}\to\M_{d_2}$ that is not completely positive. Note that any surjective linear map $\Pm:\M_{d_3}\to\M_{d_2}$ has a linear right-inverse $\Pm^{-1}:\M_{d_2}\to\M_{d_3}$ (generally not unique) satisfying $\Pm\circ\Pm^{-1}=\ident_{d_2}$.


\begin{thm}
\label{thm:CapBound}
Let $\Tm:\M_{d_1}\ra\M_{d_2}$ be a quantum channel and $\Pm:\M_{d_3}\ra\M_{d_2}$ be a surjective, unital and tensor-stable positive map that is not completely positive, and let $\Pm^{-1}$ be any right-inverse of $\Pm$. Then we have
\begin{align*}
\mathcal{Q}\lb\Tm\rb\leq \frac{\log_2\lb\norm{\Pm^{-1}\circ\Tm}{\diamond}\norm{\Pm^{*}\lb\id_{d_2}\rb}{\infty}\rb\log_2(d_2)}{\log_2\lb\norm{\Pm^{*}}{\diamond}\rb}
\end{align*}

Note that the transposition bound \eqref{equ:transBound} is retrieved for $\Pm = \vartheta_{d_2}$.
\end{thm}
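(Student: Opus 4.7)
My plan is to generalize the proof of Holevo's transposition bound~\cite{holevo2001evaluating,kretschmann2004tema}, with $\Pm^*$ playing the role that the transposition $\vartheta$ does there. Fix any achievable rate $R\leq\mathcal{Q}(\Tm)$ and choose the reference dimension $d=d_2$ (which is allowed since $\mathcal{Q}(\Tm)$ is independent of $d$); this yields sequences of encoders $\Em_\nu:\M_{d_2}^{\otimes n_\nu}\to\M_{d_1}^{\otimes m_\nu}$ and decoders $\Dm_\nu:\M_{d_2}^{\otimes m_\nu}\to\M_{d_2}^{\otimes n_\nu}$ with approximation error $\epsilon_\nu:=\tfrac{1}{2}\norm{\ident_{d_2}^{\otimes n_\nu}-\Dm_\nu\circ\Tm^{\otimes m_\nu}\circ\Em_\nu}{\diamond}\to 0$ and $n_\nu\log_2(d_2)/m_\nu\to R$. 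Composing the approximation on the output with $(\Pm^*)^{\otimes n_\nu}$, then invoking $\norm{(\Pm^*)^{\otimes n_\nu}}{\diamond}=\norm{\Pm^*}{\diamond}^{n_\nu}$ from Lemma~\ref{lem:PropDiam}(2) together with the triangle inequality, produces $(1-2\epsilon_\nu)\norm{\Pm^*}{\diamond}^{n_\nu}\leq\norm{(\Pm^*)^{\otimes n_\nu}\circ\Dm_\nu\circ\Tm^{\otimes m_\nu}\circ\Em_\nu}{\diamond}$.

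Next I insert $\ident_{d_2}=\Pm\circ\Pm^{-1}$ into every copy of $\Tm$, so that $\Tm^{\otimes m_\nu}=\Pm^{\otimes m_\nu}\circ(\Pm^{-1}\circ\Tm)^{\otimes m_\nu}$, and peel off $\norm{\Em_\nu}{\diamond}=1$ and $\norm{(\Pm^{-1}\circ\Tm)^{\otimes m_\nu}}{\diamond}=\norm{\Pm^{-1}\circ\Tm}{\diamond}^{m_\nu}$ by submultiplicativity (Lemma~\ref{lem:PropDiam}(3)), arriving at
\begin{align*}
(1-2\epsilon_\nu)\norm{\Pm^*}{\diamond}^{n_\nu}~\leq~\norm{\mathcal{N}_\nu}{\diamond}\cdot\norm{\Pm^{-1}\circ\Tm}{\diamond}^{m_\nu},\qquad\mathcal{N}_\nu:=(\Pm^*)^{\otimes n_\nu}\circ\Dm_\nu\circ\Pm^{\otimes m_\nu}.
\end{align*}
The map $\mathcal{N}_\nu:\M_{d_3^{m_\nu}}\to\M_{d_3^{n_\nu}}$ is positive (though not CP in general), and a short computation using $\Pm(\id)=\id$ (unitality of $\Pm$) together with $\Dm_\nu^{*}(\id)=\id$ (trace-preservation of $\Dm_\nu$) gives the clean identity $\mathcal{N}_\nu^{*}(\id_{d_3^{n_\nu}})=\Pm^{*}(\id_{d_2})^{\otimes m_\nu}$.

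The crux, and the main obstacle I anticipate, is to upgrade this identity to the $\diamond$-norm bound $\norm{\mathcal{N}_\nu}{\diamond}\leq\norm{\Pm^{*}(\id_{d_2})}{\infty}^{m_\nu}=\norm{\mathcal{N}_\nu^{*}(\id)}{\infty}$. For a CP map this would follow for free from the identity $\norm{\cdot}{\diamond}=\norm{\cdot^{*}(\id)}{\infty}$, and this is exactly how the transposition case closes: $\vartheta^{\otimes n_\nu}\circ\Dm_\nu\circ\vartheta^{\otimes m_\nu}$ is itself CPTP (Kraus operators $\overline{K_i}$ when $\Dm_\nu$ has Kraus $K_i$), with $\diamond$-norm $1=\norm{\vartheta^{*}(\id)}{\infty}^{m_\nu}$. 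In general $\mathcal{N}_\nu$ sandwiches a CP map between two non-CP positive maps and is not itself CP, so this is where the tensor-stable positivity of $\Pm$ has to be used in an essential way: I expect the argument to propagate operator-order bounds of the form $X\leq\norm{X}{\infty}\id$ through the whole composition, leveraging that every tensor power of $\Pm$ and $\Pm^{*}$ remains positive and hence order-preserving, so that no spurious dimension factor is incurred and the CP-type equality $\norm{\mathcal{N}_\nu}{\diamond}=\norm{\mathcal{N}_\nu^{*}(\id)}{\infty}$ is retained.

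Combining the two displayed estimates, taking $\log_2$, dividing by $m_\nu$, and letting $\nu\to\infty$ yields $(R/\log_2(d_2))\log_2\norm{\Pm^*}{\diamond}\leq\log_2\!\bigl(\norm{\Pm^{*}(\id_{d_2})}{\infty}\norm{\Pm^{-1}\circ\Tm}{\diamond}\bigr)$; multiplying by $\log_2(d_2)$, rearranging, and taking the supremum over achievable rates then delivers the stated bound on $\mathcal{Q}(\Tm)$. Everything outside the key diamond-norm estimate on $\mathcal{N}_\nu$ is routine bookkeeping mirroring Holevo's original proof step by step; all of the subtlety sits in that single step, where the tensor-stable positivity hypothesis on $\Pm$ is indispensable.
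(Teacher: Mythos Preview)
Your outline is correct up to the crucial step, and you have correctly identified where the difficulty lies: the estimate $\norm{\mathcal{N}_\nu}{\diamond}\leq\norm{\mathcal{N}_\nu^{*}(\id)}{\infty}$ for $\mathcal{N}_\nu=(\Pm^{*})^{\otimes n_\nu}\circ\Dm_\nu\circ\Pm^{\otimes m_\nu}$. However, your proposed mechanism for obtaining it---``propagating operator-order bounds $X\leq\norm{X}{\infty}\id$ through the composition using that tensor powers of $\Pm,\Pm^{*}$ are positive''---does not work. Positivity of $\mathcal{N}_\nu^{*}$ (or even tensor-stable positivity of $\mathcal{N}_\nu$) only yields $\norm{\mathcal{N}_\nu}{1\to1}=\norm{\mathcal{N}_\nu^{*}}{\infty\to\infty}=\norm{\mathcal{N}_\nu^{*}(\id)}{\infty}$, not the stabilized norm $\norm{\mathcal{N}_\nu}{\diamond}=\sup_k\norm{\ident_k\otimes\mathcal{N}_\nu}{1\to1}$; controlling the latter would require $\ident_k\otimes\mathcal{N}_\nu$ to be positive, i.e.\ complete positivity. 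The transposition $\vartheta_d$ itself shows why order-preservation alone is not enough: it is unital and tensor-stable positive, yet $\norm{\vartheta_d}{\diamond}=d\neq1=\norm{\vartheta_d^{*}(\id)}{\infty}$.

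The paper closes this gap by a small but essential modification: it replaces $\Pm^{*}$ by $\vartheta_{d_3}\circ\Pm^{*}\circ\vartheta_{d_2}$ throughout (first noting $\norm{\vartheta_{d_3}\circ\Pm^{*}\circ\vartheta_{d_2}}{\diamond}=\norm{\Pm^{*}}{\diamond}$). The resulting sandwich map
\[
\widetilde{\mathcal{N}}_\nu:=(\vartheta_{d_3}\circ\Pm^{*}\circ\vartheta_{d_2})^{\otimes n_\nu}\circ\Dm_\nu\circ\Pm^{\otimes m_\nu}
\]
is \emph{genuinely completely positive}: computing its Choi matrix via Lemma~\ref{Lemma:tricks}(2) moves the right-most $\Pm^{\otimes m_\nu}$ across the maximally entangled state to become $(\vartheta\circ\Pm^{*}\circ\vartheta)^{\otimes m_\nu}$ on the other tensor factor, so that all $m_\nu+n_\nu$ copies combine into a single $(\Pm^{*})^{\otimes(m_\nu+n_\nu)}$ (conjugated by global transpositions) acting on the positive matrix $(\ident\otimes\Dm_\nu)(\omega_{d_2}^{\otimes m_\nu})$. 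Tensor-stable positivity of $\Pm^{*}$ then gives positivity of the Choi matrix, hence CP, and Lemma~\ref{lem:PropDiam}(1) delivers $\norm{\widetilde{\mathcal{N}}_\nu}{\diamond}=\norm{\Pm^{*}(\id_{d_2})}{\infty}^{m_\nu}$ directly. Without the transposition conjugation your Choi matrix would instead involve $(\vartheta\circ\Pm^{*}\circ\vartheta)^{\otimes m_\nu}\otimes(\Pm^{*})^{\otimes n_\nu}$, a mixed tensor that tensor-stable positivity says nothing about.
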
 

\begin{proof}

As $\Pm^*$ is trace-preserving but not completely positive, we have $\norm{\Pm^{*}}{\diamond}>1$~\cite{paulsen2002completely}. Furthermore note that for any $n\in\N$ and any $X\in\M_n\otimes \M_{d_2}$ we have 
\begin{align*}
\|\lb\ident_n\otimes \vartheta_{d_3}\circ\Pm^*\circ\vartheta_{d_2}\rb\lb X\rb\|_1 &= \|\lbr\lb\ident_n\otimes \vartheta_{d_3}\circ\Pm^*\circ\vartheta_{d_2}\rb\lb X\rb\rbr^T\|_1\\
&=\|\lb\ident_n\otimes \Pm^*\rb\lb X^T\rb\|_1
\end{align*}
as the transposition does not change the spectrum. By the definition of the diamond norm this implies $\norm{\Pm^{*}}{\diamond}=\norm{\vartheta_{d_3}\circ\Pm^{*}\circ\vartheta_{d_2}}{\diamond}$.

Now we can do the following calculation, which generalizes the proof of the transposition bound~\cite{holevo2001evaluating,kretschmann2004tema}. Let $\Em:\M^{\otimes n_\nu}_{d_2}\ra \M^{\otimes m_\nu}_{d_1}$ and $\Dm:\M^{\otimes m_\nu}_{d_2}\ra \M^{\otimes n_\nu}_{d_2}$ denote arbitrary quantum channels. Then:
\begin{align*}
\norm{\vartheta_{d_3}&\circ\Pm^*\circ \vartheta_{d_2}}{\diamond}^{n_\nu} = \norm{(\vartheta_{d_3}\circ\Pm^*\circ \vartheta_{d_2})^{\otimes n_\nu}\circ \lb \ident^{\otimes n_\nu}_{d_2} - \Dm\circ\Tm^{\otimes m_\nu}\circ\Em + \Dm\circ\Tm^{\otimes m_\nu}\circ\Em\rb}{\diamond} \\
 &\leq \norm{\lb\vartheta_{d_3}\circ\Pm^*\circ \vartheta_{d_2}\rb^{\otimes n_\nu}\circ( \ident^{\otimes n_\nu}_{d_2} - \Dm\circ\Tm^{\otimes m_\nu}\circ\Em)}{\diamond} + \norm{\lb\vartheta_{d_3}\circ\Pm^*\circ \vartheta_{d_2}\rb^{\otimes n_\nu}\circ\Dm\circ\Tm^{\otimes m_\nu}\circ\Em}{\diamond} \\
&\leq 2\epsilon_\nu\norm{\vartheta_{d_3}\circ\Pm^*\circ \vartheta_{d_2}}{\diamond}^{n_\nu} + \norm{\lb\vartheta_{d_3}\circ\Pm^*\circ \vartheta_{d_2}\rb^{\otimes n_\nu}\circ\Dm\circ \Pm^{\otimes m_\nu}}{\diamond}\,\norm{\Pm^{-1}\circ\Tm}{\diamond}^{m_\nu},
\end{align*}
with $\epsilon_\nu := \norm{\ident^{\otimes n_\nu}_{d_2} - \Dm\circ\Tm^{\otimes m_\nu}\circ\Em}{\diamond}/2$. Here we used the triangle inequality for the first inequality and the properties from Lemma \ref{lem:PropDiam} for the second inequality (in particular we used $\|\Em\|_\diamond = 1$). Note that by Lemma \ref{Lemma:tricks} we have
\begin{align*}
\lb\ident_{d_3^{m_\nu}}\otimes\lbr\lb\vartheta_{d_3}\circ\Pm^*\circ \vartheta_{d_2}\rb^{\otimes n_\nu}\circ\Dm\circ \Pm^{\otimes m_\nu}\rbr\rb&\lb\omega^{\otimes m_\nu}_{d_3}\rb\\
= \left(\frac{d_2}{d_3}\right)^{m_\nu}\vartheta_{d_3^{\otimes\lb m_\nu + n_\nu\rb}}\circ\lb\Pm^{*}\rb^{\otimes\lb m_\nu + n_\nu\rb}\circ&\,\vartheta_{d_2^{\otimes\lb m_\nu + n_\nu\rb}}\circ \lb\ident_{d_2^{m_\nu}}\otimes \Dm\rb\lb\omega^{\otimes m_\nu}_{d_2}\rb~\geq~0,
\end{align*}
since $\Pm^*$ is also tensor-stable positive. Thus, the map $\lb\vartheta_{d_3}\circ\Pm^*\circ \vartheta_{d_2}\rb^{\otimes n_\nu}\circ\Dm\circ \Pm^{\otimes m_\nu}$ is completely positive. Therefore, we can apply Lemma \ref{lem:PropDiam} (equation \eqref{equ:blablup}) and obtain
\begin{align*}
\norm{\lb\vartheta_{d_3}\circ\Pm^*\circ \vartheta_{d_2}\rb^{\otimes n_\nu}\circ\Dm\circ \Pm^{\otimes m_\nu}}{\diamond}=\norm{(\Pm^*)^{\otimes m_\nu}\circ\Dm^*\circ \lb\vartheta_{d_2}\circ\Pm\circ \vartheta_{d_3}\rb^{\otimes n_\nu}\lb\one^{\otimes n_\nu}_{d_3}\rb}{\infty} = \norm{\Pm^*\lb\id_{d_2}\rb}{\infty}^{m_\nu}
\end{align*} 
for all quantum channels $\Dm$, where we used unitality of $\Pm$ and that $\Dm$ is trace-preserving. Inserting this into the above calculation we have
\begin{align*}
\lb 1-2\epsilon_\nu\rb\norm{\Pm^*}{\diamond}^{n_\nu} = \lb 1-2\epsilon_\nu\rb\norm{\vartheta_{d_3}\circ\Pm^*\circ \vartheta_{d_2}}{\diamond}^{n_\nu}\leq \norm{\Pm^{*}\lb\id_{d_2}\rb}{\infty}^{m_\nu}\norm{\Pm^{-1}\circ\Tm}{\diamond}^{m_\nu}. 
\end{align*}
Applying the logarithm and taking the limit $\nu\ra\infty$ we obtain
\begin{align*}
R = \limsup_{\nu\ra\infty}\frac{n_\nu\log_2(d_2)}{m_\nu}\leq \frac{\log_2\lb\norm{\Pm^{-1}\circ\Tm}{\diamond}\norm{\Pm^{*}\lb\id_{d_2}\rb}{\infty}\rb\log_2(d_2)}{\log_2\lb\norm{\Pm^{*}}{\diamond}\rb}
\end{align*}
for any achievable rate $R$ (see Definition \ref{defn:QuantCap}) and corresponding coding schemes $\Em,\Dm$ with $\epsilon_\nu\ra 0$. 
\end{proof}


To apply Theorem \ref{thm:CapBound} it is enough to have a surjective and tensor-stable positive map $\Rm:\M_{d_3}\to\M_{d_2}$ which is not completely positive. Note that as $\Rm$ is surjective, it is easy to see that the operator $\Rm(\id_{d_3})$ is strictly positive, and thus the map $\Pm:\M_{d_3}\to\M_{d_2}$ defined by $\Pm(X):=\Rm(\id_{d_3})^{-1/2}\Rm(X)\Rm(\id_{d_3})^{-1/2}$ is unital, surjective and tensor-stable positive. Furthermore, $\Pm$ is completely (co-)positive if and only if $\Rm$ was completely (co-)positive. Thus, we constructed a map $\Pm$ as needed for Theorem \ref{thm:CapBound}.

Note that for completely co-positive maps $\Pm$ the capacity bound from Theorem \ref{thm:CapBound} is worse than the transposition bound given by \eqref{equ:transBound}. To prove this let $\Pm = \vartheta_{d_2}\circ \Sm$ for a surjective, unital and completely positive map $\Sm:\M_{d_3}\ra\M_{d_2}$. Then, due to the invertibility of $\vartheta_{d_2}$, any right-inverse $\Pm^{-1}$ of $\Pm$ can be written as $\Pm^{-1}=\Sm^{-1}\circ\vartheta_{d_2}$ with a right-inverse $\Sm^{-1}:\M_{d_2}\to\M_{d_3}$ of $\Sm$.  By unitality of $\Pm$ and basic properties of the $\diamond$-norm (see for instance~\cite[Exercise 3.11 and Corollary 2.9]{paulsen2002completely}) we have $\norm{\Pm^*}{\diamond}\leq d_2\norm{\Pm^*}{1\ra 1} = d_2$, and furthermore $\norm{\Pm^*\lb \id_{d_2}\rb}{\infty} = \norm{\Sm^*\lb \id_{d_2}\rb}{\infty} = \norm{\Sm}{\diamond}$ since $\Sm$ is completely positive. Thus, for any quantum channel $\Tm:\M_{d_1}\ra\M_{d_2}$ we have:
\begin{align*}
\frac{\log_2\lb\norm{\Pm^{-1}\circ\Tm}{\diamond}\norm{\Pm^{*}\lb\id_{d_2}\rb}{\infty}\rb\log_2d_2}{\log_2\lb\norm{\Pm^{*}}{\diamond}\rb} \geq \log_2\lb \norm{\Sm^{-1}\circ\vartheta_{d_2}\circ\Tm}{\diamond}\norm{\Sm}{\diamond}\rb \geq \log_2 \norm{\vartheta_{d_2}\circ\Tm}{\diamond}\geq \mathcal{Q}\lb\Tm\rb.
\end{align*} 
Therefore, to obtain a capacity bound stronger than the transposition bound \eqref{equ:transBound}, one would need a non-trivial tensor-stable positive map $\Pm$.  

Similarly, if $\Pm:\M_{d_1}\ra\M_{d_3}$ is a trace-preserving and tensor-stable positive map that is not completely positive and that has a left-inverse $\Pm^{-1}:\M_{d_3}\to\M_{d_1}$, then the following bound holds for any quantum channel $\Tm:\M_{d_1}\ra\M_{d_2}$:
\begin{align}\label{otherversionQbound}
\mathcal{Q}\lb\Tm\rb\leq \frac{\log_2\lb\norm{\Tm\circ\Pm^{-1}}{\diamond}\rb\log_2(d_1)}{\log_2\lb\norm{\Pm^{*}}{\diamond}/\norm{\Pm\lb\id_{d_1}\rb}{\infty}\rb}.
\end{align}
The proof works in the same way as the proof of Theorem \ref{thm:CapBound}, and again, this bound reduces to the transposition bound (\ref{equ:transBound}) for $\Pm=\vartheta_{d_1}$.

%
%
%
%
%
%

\subsection{Transposition bound as a strong converse rate for the two-way quantum capacity}
\label{twowaysubsection}

We now prove that the transposition bound (\ref{equ:transBound}) is even an upper bound on the capacity ${\mathcal{Q}}_2(\Tm)\geq{\mathcal{Q}}(\Tm)$ of any channel $\Tm$ for forward communication of quantum information assisted by unrestricted two-way classical side communication between both parties and arbitrary local quantum operations (LOCC).

For this, we first define an \emph{LOCC channel} (w.r.t.\ bipartitions $A:B$ and $A':B'$ of the input and output systems, respectively) to be any quantum channel $\Lm_{A:B\to A':B'}:\M_{d_A}\otimes\M_{d_{B}}\to\M_{d_{A'}}\otimes\M_{d_{B'}}$ that can be written as a sequential concatenation of any number of channels $\Lm_{A_q:B_q\to A'_qA'_c:B'_qB'_c}$ of the following form ($X_{A_qB_q}\in\M_{d_{A_q}}\otimes\M_{d_{B_q}}$):
\begin{align}\label{LOCCkraus}
\Lm_{A_q:B_q\to A'_qA'_c:B'_qB'_c}(X_{A_qB_q})=\sum_{i,j}(K^A_i\otimes K^B_j)X_{A_qB_q}(K^A_i\otimes K^B_j)^\dagger\otimes\ket{j}\bra{j}_{A'_c}\otimes\ket{i}\bra{i}_{B'_c},
\end{align}
where $K^A_i:\C^{|A_q|}\to\C^{|A'_q|}$ and $K^B_j:\C^{|B_q|}\to\C^{|B'_q|}$ $(i\in I, j\in J)$ are Kraus operators of quantum channels mapping system $A_q$ to $A'_q$ and system $B_q$ to $B'_q$ respectively (i.e.\ $\sum_i(K^A_i)^\dagger K^A_i=\id_{A_q}$ and $\sum_j(K^B_j)^\dagger K^B_j=\id_{B_q}$), and $\ket{j}_{A'_c}$ and $\ket{i}_{B'_c}$ are orthonormal bases belonging to (effectively classical) systems $A_c$ and $B_c$ of dimension $|J|$ and $|I|$ (see \cite{chitambar2014everything} for more details). When one of the systems, such as $B$, is trivial (i.e.\ one-dimensional), we also speak of a LOCC channel $\Lm_{A\to A':B'}$, omitting the indices of the trivial subsystems. From the definition it is clear that any LOCC channel $\Lm_{A:B\to A':B'}:\M_{d_A}\otimes\M_{d_{B}}\to\M_{d_{A'}}\otimes\M_{d_{B'}}$ is \emph{PPT preserving} (w.r.t.\ bipartitions $A:B$ and $A':B'$), meaning that the map $(\ident_{A'}\otimes\vartheta_{B'})\Lm_{A:B\to A':B'}(\ident_A\otimes\vartheta_B)$ is completely positive and therefore a quantum channel, whose $\diamond$-norm equals $1$. We can now define the two-way quantum capacity.

\begin{defn}[\text{Two-way quantum capacity }$\mathcal{Q}_2$]\label{defn:TwoWayCap}\hfill

Given a quantum channel $\Tm:\M_{d_1}\to\M_{d_2}$, we define an \emph{$(N,m,\varepsilon)$-scheme for quantum communication with two-way classical communication} to be any set of LOCC channels $\Lm_{A_i:B^t_iB_i\to A^t_{i+1}A_{i+1}:B_{i+1}}$ for $i=0,\ldots,m$, where the initial $A$-system and final $B$-system are of the same dimension $N=|A_0|=|B_{m+1}|$ and are identified with each other, $A_0=B_{m+1}$, the initial $B$-system and final $A$-system are trivial, $|B^t_0|=|B_0|=|A^t_{m+1}|=|A_{m+1}|=1$, and the subsystems used for quantum transmission (hence the superscript ``t'') are of dimensions $|A^t_i|=d_1$ and $|B^t_i|=d_2$ for $i=1,\ldots,m$, and $\varepsilon$ is the $\diamond$-norm error of the scheme,
\begin{align}\nonumber
\varepsilon=\big\|\ident_{A_0\to B_{m+1}}-&\Lm_{A_m:B^t_mB_m\to B_{m+1}}\circ\Tm_{A^t_m\to B^t_m}\circ\Lm_{A_{m-1}:B^t_{m-1}B_{m-1}\to A_mA^t_mB_m}\circ\Tm_{A^t_{m-1}\to B^t_{m-1}}\circ\ldots\\
&\ldots\,\circ\Tm_{A^t_2\to B^t_2}\circ\Lm_{A_1:B^t_1B_1\to A^t_2A_2:B_2}\circ\Tm_{A^t_1\to B^t_1}\circ\Lm_{A_0\to A^t_1A_1:B_1}\big\|_{\diamond}\,/\,2\,,\label{LOCCcodingscheme}
\end{align}
omitting for brevity the action of the identity channel on some subsystems, e.g.\ in $\Tm_{A^t_i\to B^t_i}\equiv(\Tm_{A^t_i\to B^t_i}\otimes\ident_{A_i}\otimes\ident_{B_i})$.

We call $R\in\R^+$ an \emph{achievable rate} for quantum communication over the channel $\Tm$ assisted by two-way classical communication if there exists for each $\nu\in\N$ a $(N_\nu,m_\nu,\varepsilon_\nu)$-scheme as just defined in such a way that $R=\limsup_{\nu\to\infty}\frac{\log_2(N_\nu)}{m_\nu}$ and $\lim_{\nu\to\infty}\varepsilon_\nu=0$. The \emph{two-way quantum capacity ${\mathcal{Q}}_2(\Tm)$} is defined to be the supremum of all such achievable rates.
\end{defn}

To prove the following statements about ${\mathcal{Q}}_2$ we need only the PPT preserving property of the LOCC channels in the above coding scheme. The statements hold therefore more generally for quantum communication assisted by any PPT preserving channels.

\begin{lem}[Error of two-way coding schemes]\label{TwoWayErrorLemma}
Let $\Tm:\M_{d_1}\to\M_{d_2}$ be a quantum channel and suppose there exists a $(N,m,\varepsilon)$-scheme for quantum communication with two-way classical side communication. Then:
\begin{align*}
\varepsilon\,\geq\,1-\frac{\big\|\vartheta_{d_2}\circ\Tm\big\|_\diamond^m}{N}\,.
\end{align*}
\end{lem}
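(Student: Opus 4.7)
The plan is to generalize the transposition-bound argument of Theorem \ref{thm:CapBound} to the LOCC-assisted setting, making essential use of the PPT preserving property of LOCC channels noted just before the lemma. Denote by $\Phi:\M_N\to\M_N$ the effective map from $A_0$ to $B_{m+1}$ appearing in \eqref{LOCCcodingscheme}, so that $\tfrac{1}{2}\|\ident_N-\Phi\|_\diamond=\varepsilon$ by definition. Applying this to one half of the $N$-dimensional maximally entangled state $\omega_N$ (on reference $R$ and input $A_0$) and taking the overlap with the projector $\omega_N$ gives
\begin{equation*}
\tr\!\big[\omega_N\cdot(\ident_R\otimes\Phi)(\omega_N)\big]\,\geq\,1-\varepsilon\,.
\end{equation*}
Using that partial transposition preserves the Hilbert-Schmidt inner product and that $\omega_N$ is swap-symmetric so $\omega_N^{T_2}=\mathbb{F}_N/N$, the left-hand side can be rewritten and then bounded above via $\|\mathbb{F}_N\|_\infty = 1$ and $\|(\ident\otimes\Psi)(\omega_N)\|_1\leq\|\Psi\|_\diamond$:
\begin{equation*}
\tr\!\big[\omega_N\cdot(\ident_R\otimes\Phi)(\omega_N)\big]
\,=\,\tfrac{1}{N}\tr\!\big[\mathbb{F}_N\,(\ident_R\otimes(\vartheta_N\circ\Phi))(\omega_N)\big]
\,\leq\,\tfrac{1}{N}\|\vartheta_N\circ\Phi\|_\diamond\,.
\end{equation*}
The heart of the proof then reduces to the estimate $\|\vartheta_N\circ\Phi\|_\diamond\leq\|\vartheta_{d_2}\circ\Tm\|_\diamond^m$; combined with the above, this yields the claim immediately.

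For the key estimate I would push the output transposition $\vartheta_{B_{m+1}}$ backwards through the coding scheme one operation at a time. At any intermediate slice, let $\tau$ denote the product of transpositions on all Bob-side subsystems present at that slice. When $\tau$ encounters a channel use $\Tm_{A_i^t\to B_i^t}$, the factor $\vartheta_{B_i^t}$ fuses with $\Tm$ into $\vartheta_{d_2}\circ\Tm$, while the remaining transpositions on the Bob-memory subsystem $B_i$ commute through. When $\tau$ encounters an LOCC channel $\Lm_{A_i:B_i^tB_i\to A_{i+1}^tA_{i+1}:B_{i+1}}$, the PPT preserving property asserts that sandwiching $\Lm_i$ between the Bob-side transpositions $\ident_{A_i}\otimes\vartheta_{B_i^tB_i}$ on the input and $\ident_{A_{i+1}^tA_{i+1}}\otimes\vartheta_{B_{i+1}}$ on the output produces a CPTP map $\tilde\Lm_i$ of unit $\diamond$-norm. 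Iterating these two moves from right to left, and noting that the triviality of $B_0$ and of $A_{m+1}^tA_{m+1}$ at the two boundaries leaves no stray transpositions, yields the identity
\begin{equation*}
\vartheta_N\circ\Phi \;=\; \tilde\Lm_m\circ(\vartheta_{d_2}\circ\Tm)\circ\tilde\Lm_{m-1}\circ\cdots\circ(\vartheta_{d_2}\circ\Tm)\circ\tilde\Lm_0\,,
\end{equation*}
whereupon submultiplicativity of the $\diamond$-norm (Lemma \ref{lem:PropDiam}) delivers the bound $\|\vartheta_N\circ\Phi\|_\diamond\leq\|\vartheta_{d_2}\circ\Tm\|_\diamond^m$ since each $\tilde\Lm_i$ is CPTP.

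The main obstacle will be the bookkeeping in this push-through argument: at each of the $2m{+}1$ slices I need to track precisely which of the subsystems $A_i,A_i^t,B_i,B_i^t$ lie on Alice's versus Bob's share, and verify the two boundary identifications $\tau_1\circ\Lm_0=\tilde\Lm_0$ and $\vartheta_{B_{m+1}}\circ\Lm_m=\tilde\Lm_m\circ(\ident_{A_m}\otimes\vartheta_{B_m^tB_m})$ that launch and close the induction. Everything else amounts to elementary manipulations with the $\diamond$-norm, the partial transpose, and the maximally entangled state via Lemma \ref{Lemma:tricks}.
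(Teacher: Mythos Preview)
Your proposal is correct and follows essentially the same route as the paper: both arguments push the Bob-side partial transposition through the coding scheme step by step, using the PPT-preserving property of the LOCC rounds to turn each $\Lm_i$ into a CPTP map $\tilde\Lm_i$, and then apply submultiplicativity of the $\diamond$-norm to obtain $\|\vartheta_N\circ\Phi\|_\diamond\leq\|\vartheta_{d_2}\circ\Tm\|_\diamond^m$. The paper carries this out as an inductive chain of $\diamond$-norm inequalities on the partial compositions $\Sm^{(i)}$, whereas you phrase it as an exact factorization of $\vartheta_N\circ\Phi$; these are the same computation.

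The one genuine (minor) difference is in the final step linking $\|\vartheta_N\circ\Phi\|_\diamond$ to $\varepsilon$. The paper twirls $(\Phi\otimes\ident_N)(\omega_N)$ over $U\otimes\overline{U}$ to isolate the overlap $p=\tr[\omega_N(\Phi\otimes\ident_N)(\omega_N)]$, bounds $\varepsilon\geq1-p$, and then runs a second twirling computation to show $Np\leq\|\vartheta_N\circ\Phi\|_\diamond$. Your direct route---rewriting $p$ via $\omega_N^{T_2}=\mathbb{F}_N/N$ and bounding by $\|\mathbb{F}_N\|_\infty=1$ and H\"older---reaches the same inequality $p\leq\|\vartheta_N\circ\Phi\|_\diamond/N$ without any twirling. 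This is a legitimate and somewhat cleaner shortcut; the twirling in the paper is not essential for this lemma (it resurfaces more meaningfully in the strong-converse Lemma later, where one needs to control a general tensor-stable positive map rather than the transposition).
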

\begin{proof}The following proof generalizes ideas from the examples in \cite[Section III]{morganwinter}. We follow through the $m$ steps of the given $(N,m,\varepsilon)$-scheme (cf.\ Definition \ref{defn:TwoWayCap}) and examine how the partially transposed communication channel between the two parties evolves. For this, let $\Sm^{(1)}_{A_0\to A^t_1A_1B_1}:=\Lm_{A_0\to A^t_1A_1B_1}$ and for $i=1,\ldots,m$,
\begin{align*}
\Sm^{(i+1)}_{A_0\to A^t_{i+1}A_{i+1}B_{i+1}}\,:=\,(\Lm_{A_i:B^t_iB_i\to A^t_{i+1}A_{i+1}:B_{i+1}})\circ(\Tm_{A^t_i\to B^t_i}\otimes\ident_{A_i}\otimes\ident_{B_i})\circ\Sm^{(i)}_{A_0\to A^t_iA_iB_i}\,.
\end{align*}
As each LOCC map in the communication scheme is PPT preserving and using that the transposition is an involution, i.e.\ $\vartheta_{B^t_iB_i}\circ(\vartheta_{B^t_i}\otimes\vartheta_{B_i})=\ident_{B^t_iB_i}$ we have:

\begin{align*}
\big\|(\ident_{A^t_{i+1}A_{i+1}}\otimes\vartheta_{B_{i+1}})&\circ\Sm^{(i+1)}_{A_0\to A^t_{i+1}A_{i+1}B_{i+1}}\big\|_{\diamond}\\
&=\big\|(\ident_{A^t_{i+1}A_{i+1}}\otimes\vartheta_{B_{i+1}})\circ(\Lm_{A_i:B^t_iB_i\to A^t_{i+1}A_{i+1}:B_{i+1}})\circ(\ident_{A_i}\otimes\vartheta_{B^t_iB_i})\,\circ \\
&\qquad\circ(\ident_{A_i}\otimes\vartheta_{B^t_i}\otimes\vartheta_{B_i})(\Tm_{A^t_i\to B^t_i}\otimes\ident_{A_i}\otimes\ident_{B_i})\circ\Sm^{(i)}_{A_0\to A^t_iA_iB_i}\big\|_\diamond\\
&\leq\big\|(\ident_{A^t_{i+1}A_{i+1}}\otimes\vartheta_{B_{i+1}})\circ(\Lm_{A_i:B^t_iB_i\to A^t_{i+1}A_{i+1}:B_{i+1}})\circ(\ident_{A_i}\otimes\vartheta_{B^t_iB_i})\big\|_\diamond\\
&\qquad\cdot\big\|\vartheta_{B^t_i}\circ\Tm_{A^t_i\to B^t_i}\big\|_\diamond\cdot\big\|(\ident_{A^t_iA_i}\otimes\vartheta_{B_i})\circ\Sm^{(i)}_{A_0\to A^t_iA_iB_i}\big\|_\diamond\\
&=\big\|\vartheta_{d_2}\circ\Tm\big\|_\diamond\cdot\big\|(\ident_{A^t_iA_i}\otimes\vartheta_{B_i})\circ\Sm^{(i)}_{A_0\to A^t_iA_iB_i}\big\|_\diamond
\end{align*}
for $i=1,\ldots,m$, and $\|(\ident_{A^t_1A_1}\otimes\vartheta_{B_1})\circ\Sm^{(1)}_{A_0\to A^t_1A_1B_1}\|_{\diamond}=\|(\ident_{A^t_1A_1}\otimes\vartheta_{B_1})\circ\Lm_{A_0\to A^t_1A_1B_1}\|_{\diamond}=1$. From these relations we obtain inductively, recalling that $A^t_{m+1}$ and $A_{m+1}$ are trivial one-dimensional systems whereas $A_0=B_{m+1}$ are $N$-dimensional and abbreviating $\Sm:=\Sm^{(m+1)}_{A_0\to B_{m+1}}:\M_N\to\M_N$:
\begin{align}
\big\|\vartheta_N\circ\Sm\big\|_\diamond\,=\,\big\|\vartheta_{B_{m+1}}\circ\Sm^{(m+1)}_{A_0\to B_{m+1}}\big\|_\diamond\,\leq\,\big\|\vartheta_{d_2}\circ\Tm\big\|_\diamond^m\,.\label{cb2waytom}
\end{align}

Next, we bound the $\diamond$-norm error $\varepsilon$ of the communication scheme (see Definition \ref{defn:TwoWayCap}) from below by evaluating at the $N$-dimensional maximally entangled state $\omega_N=\omega_{A_0R}$ between the two $N$-dimensional systems $A_0$ and $R$ and twirling over a representation of the unitary group $\U(N)$. For this we note that the twirled state is 
\begin{align*}
\int_{\U(N)}dU\,(U\otimes\overline{U})\,(\Sm\otimes\ident_N)(\omega_N)\,(U\otimes\overline{U})^\dagger=p\omega_N+(1-p)(\id_{N^2}-\omega_N)/(N^2-1)
\end{align*}
with $p:=\Trace{\omega_N\,(\Sm\otimes\ident_N)(\omega_N)}$ by Appendix \ref{Appendix}.
\begin{align*}
\varepsilon\,&=\,\frac{1}{2}\big\|\ident_N-\Sm\big\|_\diamond\,\geq\,\frac{1}{2}\big\|((\ident_N-\Sm)\otimes\ident_N)(\omega_N)\big\|_1\, \\
&=\,\frac{1}{2}\int_{\U(N)}dU\,\big\|(U\otimes\overline{U})(\omega_N-(\Sm\otimes\ident_N)(\omega_N))(U\otimes\overline{U})^\dagger\big\|_1\\
&\geq\,\frac{1}{2}\big\|\omega_N-\int_{\U(N)}dU\,(U\otimes\overline{U})\,(\Sm\otimes\ident_N)(\omega_N)\,(U\otimes\overline{U})^\dagger\big\|_1\\
&=\,\frac{1}{2}\big\|(1-p)\omega_N-(1-p)\big(\id_{N^2}-\omega_N\big)/(N^2-1)\big\|_1\,=\,1-p\,.
\end{align*}
We now derive an upper bound on $p$, by using similar steps starting from (\ref{cb2waytom}) and noting that $N(\vartheta_N\otimes\ident_N)(\omega_N)=\mathbb{F}_N$ is the flip operator:

\begin{align*}
\big\|\vartheta_{d_2}\circ\Tm\big\|_\diamond^m\,&\geq\,\big\|\vartheta_N\circ\Sm\big\|_\diamond\,\geq\,\big\|\big((\vartheta_N\circ\Sm)\otimes\ident_N\big)(\omega_N)\big\|_1\,\\
&=\,\int_{\U(N)}dU\,\big\|(\overline{U}\otimes\overline{U})\,\big((\vartheta_N\circ\Sm)\otimes\ident_N\big)(\omega_N)\,(\overline{U}^\dagger\otimes\overline{U}^\dagger)\big\|_1\\
&\geq\,\left\|\,\big(\vartheta_N\otimes\ident_N\big)\left(\int_{\U(N)}dU\,(U\otimes\overline{U})\,(\Sm\otimes\ident_N)(\omega_N)\,(U^\dagger\otimes\overline{U}^\dagger)\right)\right\|_1\\
&=\,\left\|\big(\vartheta_N\otimes\ident_N\big)\left(p\omega_N+\frac{1-p}{N^2-1}(\id_{N^2}-\omega_N)\right)\right\|_1\,\\
&=\,\left\|\frac{Np+1}{N(N+1)}\,\frac{\id_{N^2}+\mathbb{F}_N}{2}+\frac{Np-1}{N(N-1)}\,\frac{\id_{N^2}-\mathbb{F}_N}{2}\right\|_1\\
&=\,|Np+1|/2+|Np-1|/2\,\geq\,Np\,.
\end{align*}
Combining this bound with the above relation between $p$ and $\varepsilon$ yields the claim.
\end{proof}

We can now state our capacity bound:
\begin{thm}[Strong converse upper bound on the two-way capacity $\mathcal{Q}_2(\Tm)$]\label{strongconverseQ2}
Let $\Tm:\M_{d_1}\to\M_{d_2}$ be a quantum channel. Then:
\begin{align*}
\mathcal{Q}_2(\Tm)\,\leq\,\log_2\left(\|\vartheta_{d_2}\circ\Tm\|_\diamond\right)\,.
\end{align*}

Moreover, let for each $\nu\in\N$ an $(N_\nu,m_\nu,\varepsilon_\nu)$-scheme for quantum communication over $\Tm$ assisted by two-way classical communication be given in such a way that $\lim_{\nu\to\infty}m_\nu=\infty$, and define the \emph{lower code rate} $R_{\inf}:=\liminf_{\nu\to\infty}\frac{\log_2(N_\nu)}{m_\nu}$. If $R_{\inf}>\log_2\left(\|\vartheta_{d_2}\circ\Tm\|_\diamond\right)$, then the $\diamond$-norm error $\varepsilon_\nu$ of the sequence converges to $1$ (exponentially fast in $m_\nu$).
\end{thm}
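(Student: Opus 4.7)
The plan is to derive both parts of the theorem directly from Lemma \ref{TwoWayErrorLemma}, which already contains the hard work of bounding the $\diamond$-norm error of any two-way coding scheme by
\begin{equation*}
\varepsilon\,\geq\,1-\frac{\|\vartheta_{d_2}\circ\Tm\|_\diamond^{m}}{N}\,.
\end{equation*}
With that bound in hand, everything else is a routine asymptotic rearrangement, so the present theorem is essentially a corollary of the lemma.

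For the upper bound on $\mathcal{Q}_2(\Tm)$, I would take any achievable rate $R$ together with associated $(N_\nu,m_\nu,\varepsilon_\nu)$-schemes satisfying $\varepsilon_\nu\to 0$ and $R=\limsup_\nu \log_2(N_\nu)/m_\nu$. Applying Lemma \ref{TwoWayErrorLemma} to each scheme, taking $\log_2$, and dividing by $m_\nu\geq 1$ gives
\begin{equation*}
\frac{\log_2 N_\nu}{m_\nu}\,\leq\,\log_2\|\vartheta_{d_2}\circ\Tm\|_\diamond\,-\,\frac{\log_2(1-\varepsilon_\nu)}{m_\nu}\,.
\end{equation*}
Since $\varepsilon_\nu\to 0$ and $m_\nu\geq 1$, the correction term vanishes along every subsequence; passing to $\limsup$ yields $R\leq\log_2\|\vartheta_{d_2}\circ\Tm\|_\diamond$, as desired. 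Note that no assumption on the growth of $m_\nu$ is needed here.

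For the strong converse I would fix a small $\delta>0$ with $R_{\inf}>\log_2\|\vartheta_{d_2}\circ\Tm\|_\diamond+2\delta$. By the definition of $\liminf$, all but finitely many schemes satisfy $\log_2(N_\nu)/m_\nu>\log_2\|\vartheta_{d_2}\circ\Tm\|_\diamond+\delta$, equivalently $N_\nu>\|\vartheta_{d_2}\circ\Tm\|_\diamond^{m_\nu}\cdot 2^{\delta m_\nu}$. Substituting this into Lemma \ref{TwoWayErrorLemma} gives
\begin{equation*}
\varepsilon_\nu\,\geq\,1-2^{-\delta m_\nu}\,,
\end{equation*}
and the hypothesis $m_\nu\to\infty$ then forces $\varepsilon_\nu\to 1$ exponentially fast in $m_\nu$, as claimed.

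Since the argument reduces so cleanly to Lemma \ref{TwoWayErrorLemma}, there is no real obstacle at the level of this theorem itself; the main difficulty was already absorbed into the lemma, which in turn rested on the partial-transpose tracking through the LOCC coding scheme and the twirling step that converts a $\diamond$-norm error bound into a bound on overlap with $\omega_N$. The only mild subtlety in the present write-up is to keep the two parts distinct: part (i) needs the correction $\log_2(1-\varepsilon_\nu)/m_\nu\to 0$ even when $m_\nu$ is possibly bounded, whereas part (ii) explicitly uses $m_\nu\to\infty$ to turn the factor $2^{-\delta m_\nu}$ into an exponential rate.
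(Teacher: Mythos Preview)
Your proposal is correct and follows essentially the same approach as the paper: both derive the theorem as an immediate corollary of Lemma~\ref{TwoWayErrorLemma} via elementary asymptotic manipulations. The only cosmetic difference is that for part~(i) the paper argues by contradiction (assuming $R>\log_2\|\vartheta_{d_2}\circ\Tm\|_\diamond$ and exhibiting a subsequence along which $\varepsilon_\nu$ stays bounded away from $0$), whereas you rearrange the lemma's inequality directly and pass to the $\limsup$; for part~(ii) the two arguments are virtually identical.
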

\begin{proof}
To prove the first statement, suppose that a rate $R=\limsup_{\nu\to\infty}\frac{\log_2(N_\nu)}{m_\nu}>\log_2(\|\vartheta_{d_2}\circ\Tm\|_\diamond)$ is achievable by schemes with parameters $(N_\nu,m_\nu,\varepsilon_\nu)$ (cf.\ Definition \ref{defn:TwoWayCap}). Then, for any $\chi\in\R$ with $\|\vartheta_{d_2}\circ\Tm\|_\diamond<\chi<2^R$, we have $N_\nu\geq\chi^{m_\nu}$ for infinitely many values of $\nu\in\N$. Thus, by Lemma \ref{TwoWayErrorLemma},
\begin{align*}
\limsup_{\nu\to\infty}\varepsilon_\nu\,\geq\,1-\liminf_{\nu\to\infty}\frac{\|\vartheta_{d_2}\circ\Tm\|_\diamond^{m_\nu}}{N_\nu}\,\geq\,1-\liminf_{\nu\to\infty}\left(\frac{\|\vartheta_{d_2}\circ\Tm\|_\diamond}{\chi}\right)^{m_\nu}\,>\,0\,.
\end{align*}
which contradicts the requirement $\lim_{\nu\to\infty}\varepsilon_\nu=0$.

The second statement follows similarly by noting that for any $\chi<2^{R_{\inf}}$, one has $N_\nu\geq\chi^{m_\nu}$ for almost all $\nu\in\N$.
\end{proof}

The second part of Theorem \ref{strongconverseQ2} means that $\log_2(\|\vartheta_{d_2}\circ\Tm\|_\diamond)$ is not only an upper bound on the two-way capacity $\mathcal{Q}_2(\Tm)$, but even a \emph{strong converse rate} for quantum communication over $\Tm$ assisted by free two-way classical communication. This generalizes the examples in \cite[Section III]{morganwinter}, which are obtained for completely co-positive channels $\Tm$, where $\mathcal{Q}_2(\Tm)=\log_2(\|\vartheta_{d_2}\circ\Tm\|_\diamond)=0$, and for the identity channel $\Tm=\ident_{d}$, where $\log_2(\|\vartheta_d\circ\Tm\|_\diamond)=\log_2(d)=\mathcal{Q}_2(\Tm)$. The \emph{entanglement cost $E_C(\Tm)$} has been established as a strong converse rate for $\mathcal{Q}_2$ \cite{berta2012entanglement}, although it can be larger than our bound. In recent work \cite{tomamichelwildewinter} is has been shown that the upper bound $\log_2(\|\vartheta_{d_2}\circ\Tm\|_\diamond)$ from Eq.\ (\ref{equ:transBound}), and improvements thereof, are strong converse rates for the usual quantum capacity $\mathcal{Q}$ from Definition \ref{defn:QuantCap}, even when allowing for arbitrary LOCC operations at the beginning and the end of the protocol. The case of free LOCC communication during the protocol as in Definition \ref{defn:TwoWayCap} has however not been resolved in ref.\ \cite{tomamichelwildewinter}.

Even the capacity bound on $\mathcal{Q}_2(\Tm)$ from the first part of Theorem \ref{strongconverseQ2} seems to be new. In particular, an upper bound on $\mathcal{Q}_2(\Tm)$ for pure-loss bosonic channels was derived in \cite[Section 6]{6832533} based on the \emph{squashed entanglement of $\Tm$}. And while this was noted for pure-loss channels $\Tm$ to agree with the transposition bound (\ref{equ:transBound}) on $\mathcal{Q}(\Tm)$, the question was left open whether the transposition bound is a general upper bound on two-way capacity $\mathcal{Q}_2(\Tm)$.

\subsection{Strong converse rate from tensor-stable positive maps}\label{strongconverseTSsubsection}
With ideas from the proofs of Lemma \ref{TwoWayErrorLemma} and Theorem \ref{thm:CapBound}, we can use any surjective, unital and tensor-stable positive map $\Pm:\M_{d_3}\to\M_{d_2}$ that is not completely positive to derive a strong converse rate for the usual quantum capacity $\mathcal{Q}(\Tm)$ of any quantum channel $\Tm:\M_{d_1}\to\M_{d_2}$ (see Definition \ref{defn:QuantCap}). The strong converse rate we obtain is
\begin{align}\label{Q1strongconverse}
\frac{\log_2\lb\|\Pm^{-1}\circ\Tm\|_{\diamond}\,\|\Pm^{*}\lb\id_{d_2}\rb\|_{\infty}\rb\log_2(d_2)}{\log_2\lb\|(\Pm^*\otimes\ident_{d_2})(\omega_{d_2})\|_1\rb}\,,
\end{align}
which is always at least as big as our upper bound on $\mathcal{Q}(\Tm)$ from Theorem \ref{thm:CapBound}, due to $\|\Pm^{*}\|_{\diamond}\geq\|(\Pm^*\otimes\ident_{d_2})(\omega_{d_2})\|_1$. The proof that (\ref{Q1strongconverse}) is a strong converse rate for the desired task follows from the following Lemma in the same way as Theorem \ref{strongconverseQ2} follows from Lemma \ref{TwoWayErrorLemma}.
\begin{lem}Let $\Tm:\M_{d_1}\ra\M_{d_2}$ be a quantum channel and $\Pm:\M_{d_3}\ra\M_{d_2}$ be a surjective, unital and tensor-stable positive map that is not completely positive, and let $\Pm^{-1}$ be any right-inverse of $\Pm$. Let $n,m\in\N$. Then:
\begin{align*}
\inf_{\Em,\Dm}\,\frac{1}{2}\left\|\ident_{d_2}^{\otimes n} - \Dm\circ \Tm^{\otimes m}\circ \Em\right\|_\diamond\,\geq\,1-\frac{1}{d_2^{2n}}-\frac{\left(\|\Pm^*(\id_{d_2})\|_\infty\,\|\Pm^{-1}\circ\Tm\|_\diamond\right)^m+2}{\|(\Pm^*\otimes\ident_{d_2})(\omega_{d_2})\|_1^n}\,,
\end{align*}
where the infimum is over all quantum channels $\Em:\M_{d_2}^{\otimes n}\ra \M^{\otimes m}_{d_1}$ and $\Dm:\M^{\otimes m}_{d_2}\ra \M_{d_2}^{\otimes n}$.
\end{lem}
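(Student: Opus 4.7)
The plan is to combine the twirling technique of Lemma~\ref{TwoWayErrorLemma} with the completely-positive extension argument of Theorem~\ref{thm:CapBound}. Fix encoding and decoding channels $\Em, \Dm$, set $\Sm := \Dm\circ\Tm^{\otimes m}\circ\Em:\M_{d_2}^{\otimes n}\to\M_{d_2}^{\otimes n}$, $N := d_2^n$, and $\varepsilon := \tfrac{1}{2}\|\ident_{d_2}^{\otimes n} - \Sm\|_\diamond$, and introduce the entanglement fidelity $p := \bra{\Omega_N}(\Sm\otimes\ident_N)(\omega_N)\ket{\Omega_N}$. First I would evaluate $((\ident - \Sm)\otimes\ident_N)(\omega_N)$ and twirl under $U\otimes\overline{U}$ exactly as in the proof of Lemma~\ref{TwoWayErrorLemma}, which yields $\varepsilon\geq 1-p$. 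The remainder of the proof amounts to deriving an upper bound on $p$ of the form $p\leq (X^m+2)/\alpha^n + 1/d_2^{2n}$, where $X := \|\Pm^*(\id_{d_2})\|_\infty\cdot\|\Pm^{-1}\circ\Tm\|_\diamond$ and $\alpha := \|(\Pm^*\otimes\ident_{d_2})(\omega_{d_2})\|_1$.

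For the upper bound on $p$ I would reprise the argument of Theorem~\ref{thm:CapBound}: tensor-stable positivity of $\Pm^*$, together with unitality of $\Pm$ and complete positivity of $\Dm$, implies that the map $\Nm' := (\vartheta_{d_3}\circ\Pm^*\circ\vartheta_{d_2})^{\otimes n}\circ\Dm\circ\Pm^{\otimes m}$ is completely positive, with $\|\Nm'\|_\diamond = \|\Nm'^*(\id_{d_3^n})\|_\infty = \|\Pm^*(\id_{d_2})\|_\infty^m$. Setting $\Nm_0 := (\vartheta_{d_3}\circ\Pm^*\circ\vartheta_{d_2})^{\otimes n}$, the factorization $\Nm_0\circ\Sm = \Nm'\circ(\Pm^{-1}\circ\Tm)^{\otimes m}\circ\Em$ then gives $\|((\Nm_0\circ\Sm)\otimes\ident_N)(\omega_N)\|_1 \leq \|\Nm_0\circ\Sm\|_\diamond \leq X^m$. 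I would next seek a matching lower bound. Writing $\sigma := (\Sm\otimes\ident_N)(\omega_N)$ and its $U\otimes\overline{U}$-twirl $\tilde\sigma = p\omega_N + \tfrac{1-p}{N^2-1}(\id_{N^2}-\omega_N)$, the image $(\Nm_0\otimes\ident_N)(\tilde\sigma)$ is an explicit linear combination of $A := (\Nm_0\otimes\ident_N)(\omega_N)$ and $B := \Nm_0(\id_N)\otimes\id_N$. Reordering tensor factors and using global-transposition invariance of the $1$-norm give $\|A\|_1 = \alpha^n$ (since $A$ factorizes as an $n$-fold tensor of $[(\vartheta\circ\Pm^*\circ\vartheta)\otimes\ident_{d_2}](\omega_{d_2})$, which is the global transpose of $(\Pm^*\otimes\ident_{d_2})(\omega_{d_2})$), while $\Pm$ unital gives $\trace{\Pm^*(\id_{d_2})}=d_2$ and hence $\|B\|_1 = N^2$. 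Tracking the scalar coefficients and applying the reverse triangle inequality yields $\|(\Nm_0\otimes\ident_N)(\tilde\sigma)\|_1 \geq p\alpha^n - \alpha^n/d_2^{2n} - 2$ up to lower-order terms. Combined with a controlled transition from $\tilde\sigma$ back to $\sigma$, this produces the required lower bound on $\|((\Nm_0\circ\Sm)\otimes\ident_N)(\omega_N)\|_1$; matching against $X^m$ and substituting into $\varepsilon\geq 1-p$ gives the claim.

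The hard part will be the transition from $\tilde\sigma$ to $\sigma$ in the lower bound. The crude estimate $\|(\Nm_0\otimes\ident_N)(\sigma - \tilde\sigma)\|_1 \leq \|\Nm_0\|_\diamond\cdot\|\sigma - \tilde\sigma\|_1$ would introduce the factor $\|\Pm^*\|_\diamond^n$, which is typically much larger than $\alpha^n$ and would only recover the non-strong-converse bound of Theorem~\ref{thm:CapBound}. To keep the denominator $\alpha^n$---which is precisely what makes (\ref{Q1strongconverse}) a genuine strong converse rate rather than just an upper bound on $\mathcal{Q}$---one must exploit both the state structure of $\sigma$ (positivity, unit trace, overlap $p$ with $\omega_N$) and the positive-unitality of $\Nm_0^*$, for instance via the spectral-projection duality $\|H\|_1 \geq 2\,\trace{PH} - \trace{H}$ applied with $P$ chosen to match the Jordan decomposition of $A$, so that the relevant dual operator $(\Nm_0^*\otimes\ident_N)(P)$ remains bounded by $\id_{N^2}$.
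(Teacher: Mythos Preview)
Your first two steps are correct and match the paper: the twirling argument gives $\varepsilon\geq 1-p$, and the factorization through $\Pm^{\otimes m}$ together with complete positivity of $\Nm'$ yields $\|(\Nm_0\otimes\ident_N)(\sigma)\|_1\leq X^m$.

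The gap is exactly where you flag it, and your proposed fix does not work. The inequality $\|H\|_1\geq 2\,\tr(PH)-\tr(H)$ applied to $H=(\Nm_0\otimes\ident_N)(\sigma)$ gives $\tr(PH)=\tr\big((\Nm_0^*\otimes\ident_N)(P)\,\sigma\big)$, and for your argument to proceed you need $(\Nm_0^*\otimes\ident_N)(P)\leq\id_{N^2}$. But $\Nm_0^*=(\vartheta_{d_2}\circ\Pm\circ\vartheta_{d_3})^{\otimes n}$ is only tensor-stable positive, not completely positive, so $\Nm_0^*\otimes\ident_N$ is in general \emph{not} a positive map; there is no reason for $(\Nm_0^*\otimes\ident_N)(P)$ to stay below the identity, even for a carefully chosen $P$. (The Jordan projector of $A$ is not a tensor product across the $n$ factors, so you cannot exploit mere positivity of $\Pm$ factorwise either.)

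The paper avoids this difficulty by a different and simpler device: it inserts a unitary conjugation $\Cm_U(X)=UXU^\dagger$ \emph{between} $\Dm$ and the application of $\Nm_0$, together with $\Cm_{\overline{U}}$ on the reference system, and then averages over $U$. The point is that $(\vartheta_{d_3}\circ\Pm^*\circ\vartheta_{d_2})^{\otimes n}\circ\Cm_U\circ\Dm\circ\Pm^{\otimes m}$ is still completely positive (the Choi-matrix argument from Theorem~\ref{thm:CapBound} is unaffected by the extra unitary channel), and since $\Cm_U$ is unital its diamond norm is still $\|\Pm^*(\id_{d_2})\|_\infty^m$ for every $U$. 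Thus for each $U$ one gets
\[
X^m\;\geq\;\big\|\big((\Nm_0\circ\Cm_U\circ\Sm)\otimes\Cm_{\overline{U}}\big)(\omega_N)\big\|_1\,,
\]
and averaging over $U$ and pulling the integral inside the norm by convexity gives directly
\[
X^m\;\geq\;\big\|(\Nm_0\otimes\ident_N)(\widetilde{\sigma})\big\|_1\,,
\]
with no need to pass from $\sigma$ to $\widetilde{\sigma}$ after the fact. From here the reverse triangle inequality on the explicit isotropic decomposition of $\widetilde{\sigma}$ yields the stated bound on $p$. In short: do the twirl \emph{before} applying $\Nm_0$, not after.
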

\begin{proof}Fix $\Em$, $\Dm$. As in the proof of Lemma \ref{TwoWayErrorLemma}, we bound the $\diamond$-norm with the maximally entangled state $\omega_{d_2^n}$ of dimension $d_2^n$ and then twirl:
\begin{align*}
\frac{1}{2}\left\|\ident_{d_2}^{\otimes n} - \Dm\circ \Tm^{\otimes m}\circ \Em\right\|_\diamond\,\geq\,\frac{1}{2}\left\|\omega_{d_2^n}-\int_{\U(d_2^n)}dU\,(U\otimes\overline{U})\,\rho\,(U\otimes\overline{U})^\dagger\right\|_1\,=\,1-p\,,
\end{align*}
where we denoted $\rho:=((\Dm\circ\Tm^{\otimes m}\circ\Em)\otimes\ident_{d_2^n})(\omega_{d_2^n})$, and used $\int dU\,(U\otimes\overline{U})\,\rho\,(U\otimes\overline{U})^\dagger=p\omega_{d_2^n}+(1-p)(\id_{d_2^{2n}}-\omega_{d_2^n})/(d_2^{2n}-1)$ for $p:=\Trace{\omega_N\,\rho}$. For any unitary $U\in\U(d_2^n)$ we now define the unital quantum channel $\Cm_U:\M_{d_2}^{\otimes n}\to\M_{d_2}^{\otimes n}$ by $\Cm_U(X):=UXU^\dagger$ for all $X\in\M_{d_2}^{\otimes n}$ and reuse some arguments from the proof of Theorem \ref{thm:CapBound}:\begin{align*}
\|\Pm^*&(\id_{d_2})\|_\infty^m\,\|\Pm^{-1}\circ\Tm\|_\diamond^m\,\geq\,\int_{\U(d_2^n)}dU\,\big\|(\vartheta_{d_3}\circ\Pm^*\circ\vartheta_{d_2})^{\otimes n}\circ\Cm_U\circ\Dm\circ\Pm^{\otimes m}\big\|_\diamond\,\big\|(\Pm^{-1})^{\otimes m}\circ\Tm^{\otimes m}\circ\Em\big\|_\diamond\\
&\geq\,\int_{\U(d_2^n)}dU\,\big\|\big(((\vartheta_{d_3}\circ\Pm^*\circ\vartheta_{d_2})^{\otimes n}\circ\Cm_U\circ\Dm\circ\Tm^{\otimes m}\circ\Em)\otimes\Cm_{\overline{U}}\big)(\omega_{d_2^n})\big\|_1\\
&\geq\,\left\|\left((\vartheta_{d_3}\circ\Pm^*\circ\vartheta_{d_2})^{\otimes n}\otimes\ident_{d_2^n}\right)\left(p\omega_{d_2^n}+\frac{1-p}{d_2^{2n}-1}(\id_{d_2^{2n}}-\omega_{d_2^n})\right)\right\|_1\\
&\geq\,\frac{pd_2^{2n}-1}{d_2^{2n}-1}\left\|\left((\vartheta_{d_3}\circ\Pm^*\circ\vartheta_{d_2})^{\otimes n}\otimes\ident_{d_2^n}\right)\left(\omega_{d_2^n}\right)\right\|_1-\frac{1-p}{d_2^{2n}-1}\left\|\left((\vartheta_{d_3}\circ\Pm^*\circ\vartheta_{d_2})^{\otimes n}\otimes\ident_{d_2^n}\right)\big(\id_{d_2^{2n}}\big)\right\|_1\\
&\geq\,\left(p-\frac{1}{d_2^{2n}}\right)\left\|\left(\Pm^*\otimes\ident_{d_2}\right)\left(\omega_{d_2}\right)\right\|_1^n-2\,.
\end{align*}
Converting this to an upper bound on $p$ and combining with the above relation, we obtain the claim.
\end{proof}


\section{Distillation schemes for tensor-stable positive maps}
\label{sec:Dist}

\subsection{Quantifying the distance from the completely positive maps}

For a given hermiticity-preserving map $\Pm:\M_{d_1}\ra\M_{d_2}$ we define a distance from the set of completely positive maps as
\begin{align}
d_{\text{CP}}\lb \Pm\rb := \frac{1}{2}\lb\norm{C_\Pm}{1} - \Trace{C_\Pm}\rb.
\end{align} 

By the Choi-Jamiolkowski isomorphism, $\Pm$ is completely positive iff $C_\Pm\geq0$, i.e.\ iff $d_{\text{CP}}(\Pm)=0$, whereas $d_{\text{CP}}(\Pm)>0$ otherwise. The distance $d_{\text{CP}}(\Pm)$ is just the absolute value of the sum of negative eigenvalues of the Choi matrix $C_\Pm$ of $\Pm$. The following lemma gives a useful upper bound on $d_{\text{CP}}$:

\begin{lem}

Let $\Pm:\M_{d_1}\ra\M_{d_2}$ be a positive map. If there exists a linear map $\Rm:\M_{d_1}\ra\M_{d_1}$ such that $\Rm\otimes \Pm$ is a positive map, then 
\begin{align}
d_{\text{CP}}(\Pm)\leq \norm{\ident_{d_1} - \Rm}{\diamond}\norm{\Pm}{\diamond}.
\end{align}
\label{lem:Fundamental}
\end{lem}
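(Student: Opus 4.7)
The plan is to split the Choi matrix $C_\Pm$ into a piece that the hypothesis forces to be positive and a ``small'' remainder, then bound the negative part of $C_\Pm$ by the trace norm of the remainder. First, I would note that since $\Pm$ is positive (hence hermiticity-preserving), $C_\Pm$ is hermitian, so it admits a Jordan decomposition $C_\Pm = C_+ - C_-$ with $C_\pm\geq0$ on orthogonal supports. Then $\|C_\Pm\|_1=\tr(C_+)+\tr(C_-)$ and $\tr(C_\Pm)=\tr(C_+)-\tr(C_-)$, yielding the useful reformulation $d_{\text{CP}}(\Pm)=\tr(C_-)=\|C_-\|_1$, i.e. $d_{\text{CP}}(\Pm)$ is the trace norm of the negative part of $C_\Pm$.

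Next comes the key identity. Using the definition $C_\Pm=(\ident_{d_1}\otimes\Pm)(\omega_{d_1})$ and the factorization of tensor products of maps,
\begin{align*}
(\Rm\otimes\ident_{d_2})(C_\Pm)\,=\,(\Rm\otimes\Pm)(\omega_{d_1})\,\geq\,0\,,
\end{align*}
because $\omega_{d_1}\geq0$ and $\Rm\otimes\Pm$ is positive by hypothesis. Writing
\begin{align*}
C_\Pm\,=\,(\Rm\otimes\ident_{d_2})(C_\Pm)\,+\,\big((\ident_{d_1}-\Rm)\otimes\ident_{d_2}\big)(C_\Pm)\,,
\end{align*}
the first summand is positive while the second equals $((\ident_{d_1}-\Rm)\otimes\Pm)(\omega_{d_1})$.

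The last step uses the elementary fact that if a hermitian matrix splits as $A=B+D$ with $B\geq 0$, then $\|A_-\|_1\leq\|D\|_1$: letting $P$ be the spectral projection onto the negative eigenspace of $A$, we have $\|A_-\|_1=-\tr(PA)=-\tr(PB)-\tr(PD)\leq -\tr(PD)\leq\|D\|_1$, since $\tr(PB)\geq0$. Applying this to the splitting above yields
\begin{align*}
d_{\text{CP}}(\Pm)\,\leq\,\big\|\big((\ident_{d_1}-\Rm)\otimes\Pm\big)(\omega_{d_1})\big\|_1\,\leq\,\big\|(\ident_{d_1}-\Rm)\otimes\Pm\big\|_{1\to1}\,,
\end{align*}
since $\|\omega_{d_1}\|_1=1$. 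The bound is then completed by the standard estimate $\|\cdot\|_{1\to1}\leq\|\cdot\|_\diamond$ together with multiplicativity of the diamond norm on tensor products (a slight generalization of Lemma \ref{lem:PropDiam}(2), provable via the standard Stinespring/CB-norm argument in \cite{paulsen2002completely}), giving $\|(\ident_{d_1}-\Rm)\otimes\Pm\|_\diamond=\|\ident_{d_1}-\Rm\|_\diamond\,\|\Pm\|_\diamond$.

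There is no real obstacle here; the entire content is in the positivity observation $(\Rm\otimes\ident_{d_2})(C_\Pm)=(\Rm\otimes\Pm)(\omega_{d_1})\geq0$, which converts the tensor-stability hypothesis into a matrix inequality on the Choi level, after which the bound follows from elementary norm manipulations.
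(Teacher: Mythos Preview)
Your proof is correct and follows essentially the same approach as the paper: both arguments hinge on the observation $(\Rm\otimes\Pm)(\omega_{d_1})\geq0$, then bound $d_{\text{CP}}(\Pm)=\|(C_\Pm)_-\|_1$ by $\|C_\Pm-(\Rm\otimes\Pm)(\omega_{d_1})\|_1$ and estimate this by diamond norms. The only cosmetic difference is that the paper obtains the product $\|\ident_{d_1}-\Rm\|_\diamond\,\|\Pm\|_\diamond$ by applying $(\ident_{d_1}-\Rm)\otimes\ident_{d_1}$ and $\ident_{d_1}\otimes\Pm$ sequentially (so only the definition of $\|\cdot\|_\diamond$ is needed, not full tensor multiplicativity), and it phrases the ``negative part'' bound as $\inf_{X\geq0}\|C_\Pm-X\|_1=d_{\text{CP}}(\Pm)$ via Weyl's inequalities rather than via the projection argument you give.
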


\begin{proof}

By elementary properties of the $\diamond$-norm and using positivity of $\Rm\otimes \Pm$ we have
\begin{align*}
\norm{\ident_{d_1} - \Rm}{\diamond}\norm{\Pm}{\diamond}&\geq\norm{\omega_{d_1}-\lb \Rm\otimes\ident_{d_1}\rb\lb\omega_{d_1}\rb}{1}\norm{\Pm}{\diamond}\geq \norm{(\ident_{d_1}\otimes \Pm)(\omega_{d_1}) - (\Rm\otimes \Pm)(\omega_{d_1})}{1}\\
&\geq\inf_{X\geq 0}\norm{C_\Pm - X}{1}.
\end{align*}
And for any hermitian matrix $H$ we have $\inf_{X\geq 0}\norm{H - X}{1} = \frac{1}{2}\lb\norm{H}{1} - \Trace{H}\rb$ by Weyl's inequalities~\cite[Corollary III.2.2]{bhatia1997matrix}.\end{proof}

Note that in the case of $\Pm:\M_{d_1}\ra\M_{d_2}$ being completely positive we can use $\Rm = \ident_{d_1}$ in order to verify that $d_{\text{CP}}(\Pm) = 0$ using Lemma \ref{lem:Fundamental}.

To apply Lemma \ref{lem:Fundamental} for an $n$-tensor-stable positive map $\Pm:\M_{d_1}\ra\M_{d_2}$ we have to find a suitable map $\Rm:\M_{d_1}\ra\M_{d_1}$ such that $\Rm\otimes \Pm$ is positive and $\Rm$ is close to the identity map. A convenient way to construct such an $\Rm$ is by considering generalized ``coding schemes'' of the form
\begin{align}
\Rm = \sum^m_{i=1} \Dm_i\circ \Pm^{\otimes(n-1)}\circ \Em_i
\label{equ:mapS}
\end{align}  
with completely positive maps $\Em_i:\M_{d_1}\ra\M_{d^{n-1}_1}$ and $\Dm_i:\M_{d^{n-1}_2}\ra\M_{d_1}$. Indeed, as $\Pm^{\otimes n}\geq 0$ we have
\begin{align*}
\Rm\otimes \Pm = \sum^m_{i=1} \lb \Dm_i\otimes \ident_{d_2}\rb \circ \lb\Pm^{\otimes n}\rb \circ \lb \Em_i\otimes \ident_{d_1}\rb\geq 0.
\end{align*}

As $\Rm\otimes \Pm$ is positive for all choices of $\Em_i$ and $\Dm_i$ in \eqref{equ:mapS} we can optimize over these completely positive maps trying to make $\norm{\ident_{d_1} - \Rm}{\diamond}$ as small as possible. This proves:

\begin{cor}

Let $\Pm:\M_{d_1}\ra\M_{d_2}$ be an $n$-tensor-stable positive map, $\Pm\neq0$. Then
\begin{align}
\frac{d_{\text{CP}}(\Pm)}{\norm{\Pm}{\diamond}}\leq \inf_{m,\Em_i,\Dm_i} \norm{\ident_{d_1} - \sum^m_{i=1} \Dm_i\circ \Pm^{\otimes(n-1)}\circ \Em_i}{\diamond},
\label{equ:CodingScheme}
\end{align}
where the infimum is taken over $m\in\N$ and completely positive maps $\Em_i,\Dm_i$.
\label{cor:Coding}
\end{cor}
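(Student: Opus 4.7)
The strategy is to apply Lemma \ref{lem:Fundamental} to a judiciously chosen map $\Rm:\M_{d_1}\to\M_{d_1}$. The lemma says: \emph{any} $\Rm$ for which $\Rm\otimes\Pm$ is a positive map produces the inequality $d_{\text{CP}}(\Pm)\leq\|\ident_{d_1}-\Rm\|_\diamond\,\|\Pm\|_\diamond$. Since $\Pm\neq 0$ implies $\|\Pm\|_\diamond>0$, we may divide through by $\|\Pm\|_\diamond$ and then take the infimum over all such admissible $\Rm$. Hence the only real task is to identify a sufficiently rich class of admissible $\Rm$ that matches the right-hand side of \eqref{equ:CodingScheme}.

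The natural source of positivity under our hypothesis is the assumed positivity of $\Pm^{\otimes n}$. Thus I would take, for any $m\in\N$ and any completely positive maps $\Em_i:\M_{d_1}\to\M_{d_1^{n-1}}$ and $\Dm_i:\M_{d_2^{n-1}}\to\M_{d_1}$, the ansatz
\begin{align*}
\Rm \,:=\, \sum_{i=1}^{m} \Dm_i \circ \Pm^{\otimes(n-1)} \circ \Em_i.
\end{align*}
Rearranging tensor factors we obtain the identity
\begin{align*}
\Rm\otimes\Pm \,=\, \sum_{i=1}^{m} (\Dm_i\otimes\ident_{d_2})\circ\Pm^{\otimes n}\circ(\Em_i\otimes\ident_{d_1}),
\end{align*}
which exhibits $\Rm\otimes\Pm$ as a sum of compositions of the form $(\text{completely positive})\circ(\text{positive})\circ(\text{completely positive})$; such a composition is always a positive map, and sums of positive maps are positive. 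Therefore $\Rm\otimes\Pm\geq 0$ for every choice of $m,\Em_i,\Dm_i$, so Lemma \ref{lem:Fundamental} applies.

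Plugging this $\Rm$ into Lemma \ref{lem:Fundamental} and rearranging yields
\begin{align*}
\frac{d_{\text{CP}}(\Pm)}{\|\Pm\|_\diamond} \,\leq\, \Norm{\ident_{d_1} - \sum_{i=1}^{m}\Dm_i\circ\Pm^{\otimes(n-1)}\circ\Em_i}{\diamond}
\end{align*}
for every admissible $(m,\Em_i,\Dm_i)$, and the claim follows by taking the infimum. There is no genuine obstacle in this argument; the only point requiring care is the tensor-factor bookkeeping when verifying positivity of $\Rm\otimes\Pm$, which is precisely where the $n$-tensor-stable positivity hypothesis is consumed (and consumed exactly once).
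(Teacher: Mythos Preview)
Your argument is correct and is essentially identical to the paper's own: the paper constructs the same ansatz $\Rm=\sum_i\Dm_i\circ\Pm^{\otimes(n-1)}\circ\Em_i$, verifies $\Rm\otimes\Pm=\sum_i(\Dm_i\otimes\ident_{d_2})\circ\Pm^{\otimes n}\circ(\Em_i\otimes\ident_{d_1})\geq0$ from $n$-tensor-stable positivity, and then invokes Lemma~\ref{lem:Fundamental}. There is nothing to add.
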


The map $\Rm$ in \eqref{equ:mapS} can be interpreted as a coding scheme where quantum information is encoded by the completely positive maps $\Em_i$, sent through $(n-1)$ uses of the map $\Pm$, and decoded using the maps $\Dm_i$. The indices $i$ can be seen as classical information which is communicated from the sender to the receiver for free and without noise. A special case of this technique for $m=1$ and projectors $\Em_1,\Dm_1$ has been used in ref.\ \cite{stormer2010tensor}.

If $\Pm$ is tensor-stable positive we can take the limit $n\ra\infty$ of the approximation error on the right-hand-side of \eqref{equ:CodingScheme}. As the left-hand-side in \eqref{equ:CodingScheme} does not depend on $n$, the approximation error cannot vanish in the limit $n\ra\infty$ unless $\Pm$ is completely positive.   
 

As a first application of this idea we derive sufficient criteria for a quantum channel $\Tm:\M_{d_1}\ra\M_{d_2}$ to have $\mathcal{Q}_2\lb \Tm\rb = 0$ (see Definition \ref{defn:TwoWayCap}). For this, note that the alternating application of the LOCC maps and the $m$ channel uses in Eq.\ (\ref{LOCCcodingscheme}) can be written as
\begin{align*}
\Lm_{A_m:B^t_mB_m\to B_{m+1}}\circ\ldots\circ\Tm_{A^t_1\to B^t_1}\circ\Lm_{A_0\to A^t_1A_1:B_1}\left(\rho\right)~=~\sum_k K^B_k\,\, T^{\otimes m}\left(K^A_k\,\rho\,(K^A_k)^\dagger\right)\,(K^B_k)^\dagger\,,
\end{align*}
where here the $K^A_k$ (with a multi-index $k$) are simply all the time-ordered products of Kraus operators $(K^A_i\otimes\ket{j})$ on the sender's side from (\ref{LOCCkraus}) occurring in the LOCC maps in (\ref{LOCCcodingscheme}); similarly for $K^B_k$ on the receiver's side. Thus, by defining completely positive maps $\Em_k:\M_{A_0}\to\M_{d_1}^{\otimes m}$ and $\Dm_k:\M_{d_2}^{\otimes m}\to\M_{B_{m+1}}$ by $\Em_k(X):=K_k^A\,X\,(K_k^A)^\dagger$ and $\Dm_k(Y):=K_k^B\,Y\,(K_k^B)^\dagger$, we have shown the existence of completely positive maps $\Em_k$, $\Dm_k$ such that
\begin{align*}
\|\ident_{N}-\sum_{k}\Dm_k\circ\Tm^{\otimes m}\circ\Em_k\|_\diamond\,=\,2\varepsilon\,,
\end{align*}
whenever there exists a $(N,m,\varepsilon)$-scheme for LOCC-assisted quantum communication according to Definition \ref{defn:TwoWayCap}. If the quantum channel $\Tm$ has positive two-way capacity $\mathcal{Q}_2(\Tm)>0$, then for any fixed $N$, one can certainly transmit an $N$-dimensional quantum system with arbitrarily low error ($\varepsilon\to0$) in the limit of arbitrarily many channel uses ($m\to\infty$).

Now consider the case where the quantum channel $\Tm$ with $\mathcal{Q}_2(\Tm)>0$ is of the form $\Tm = \sum_j\Vm_j\circ\Pm\circ\Wm_j$ for a tensor-stable positive map $\Pm:\M_{d_3}\ra\M_{d_4}$ that is \emph{not} completely positive and for completely positive maps $\Vm_j:\M_{d_1}\ra\M_{d_3}$ and $\Wm_j:\M_{d_4}\to\M_{d_2}$. Then, by setting $N:=d_3$ in the previous paragraph, we have
\begin{align*}
0~&=\,\lim_{m\ra\infty}\inf_{\Dm_k,\Em_k}\|\ident_{d_3}-\sum_{k}\Dm_k\circ\Tm^{\otimes m}\circ\Em_k\|_\diamond\\
&=\,\lim_{m\ra\infty}\inf_{\Dm_k,\Em_k}\Big\|\ident_{d_3}- \sum_{k,j_1,\ldots,j_m}\left(\Dm_k\circ\bigotimes_{\ell=1}^m\Vm_{j_\ell}\right)\circ \Pm^{\otimes m}\circ\left(\bigotimes_{\ell=1}^m\Wm_{j_\ell}\circ \Em_k\right)\Big\|_{\diamond}\,. 
\end{align*}
But this leads to a contradiction since, by interpreting $\widetilde{\Dm}_i:=\Dm_k\circ(\bigotimes_{\ell=1}^m\Vm_{j_\ell})$ and $\widetilde{\Em}_i:=(\bigotimes_{\ell=1}^m\Wm_{j_\ell})\circ \Em_k$ with the multi-index $i\equiv(k,j_1,\ldots,j_m)$ as the encoding and decoding maps for the map $\Pm$, Corollary \ref{cor:Coding} would imply that $d_{\rm CP}(\Pm)=0$, meaning that $\Pm$ would be completely positive contrary to assumption. This proves the following:
\begin{cor}Let $\Tm$ be a quantum channel of the form $\Tm = \sum_j\Vm_j\circ\Pm\circ\Wm_j$ for a tensor-stable positive map $\Pm$ that is not completely positive and for completely positive maps $\Vm_j$, $\Wm_j$. Then:
\begin{align*}
\mathcal{Q}_2\lb\Tm\rb = 0.
\end{align*}
\label{cor:Q20}
\end{cor}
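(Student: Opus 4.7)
The plan is to derive a contradiction from the assumption $\mathcal{Q}_2(\Tm)>0$ by exhibiting, for arbitrarily large $m$, a ``coding scheme'' for $\Pm^{\otimes m}$ that approximates the identity channel, and then invoking Corollary \ref{cor:Coding} to conclude that $\Pm$ must be completely positive.

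First, I would extract the structural fact that is essentially established in the paragraph preceding the corollary: any $(N,m,\varepsilon)$-scheme for LOCC-assisted transmission through $m$ uses of $\Tm$ yields completely positive encoding and decoding maps $\Em_k:\M_N\to\M_{d_1}^{\otimes m}$ and $\Dm_k:\M_{d_2}^{\otimes m}\to\M_N$ (namely $\Em_k(X)=K_k^A X(K_k^A)^\dagger$ and $\Dm_k(Y)=K_k^B Y(K_k^B)^\dagger$ built out of the time-ordered products of the Kraus operators appearing on each side of the LOCC maps in \eqref{LOCCkraus}), with the key property
\begin{align*}
\big\|\ident_N-\sum_k\Dm_k\circ\Tm^{\otimes m}\circ\Em_k\big\|_\diamond\,\leq\,2\varepsilon\,.
\end{align*}
Choosing the communication system dimension to be $N:=d_3$ (the input dimension of $\Pm$) and assuming for contradiction that $\mathcal{Q}_2(\Tm)>0$, one obtains a sequence of such schemes with $\varepsilon_m\to0$ as $m\to\infty$.

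Second, I would plug in the decomposition $\Tm=\sum_j\Vm_j\circ\Pm\circ\Wm_j$, distribute the tensor product to rewrite $\Tm^{\otimes m}=\sum_{j_1,\ldots,j_m}(\bigotimes_\ell\Vm_{j_\ell})\circ\Pm^{\otimes m}\circ(\bigotimes_\ell\Wm_{j_\ell})$, and absorb the $\Vm_{j_\ell}$-factors into the decoding and the $\Wm_{j_\ell}$-factors into the encoding. Since compositions and tensor products of completely positive maps remain completely positive, this yields new CP maps $\widetilde{\Dm}_i:=\Dm_k\circ(\bigotimes_\ell\Vm_{j_\ell})$ and $\widetilde{\Em}_i:=(\bigotimes_\ell\Wm_{j_\ell})\circ\Em_k$ (with multi-index $i\equiv(k,j_1,\ldots,j_m)$) such that $\|\ident_{d_3}-\sum_i\widetilde{\Dm}_i\circ\Pm^{\otimes m}\circ\widetilde{\Em}_i\|_\diamond\to0$.

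Third, I would apply Corollary \ref{cor:Coding} to $\Pm$ (which is $(m{+}1)$-tensor-stable positive for every $m$, by tensor-stable positivity): the corollary gives $d_{\text{CP}}(\Pm)/\|\Pm\|_\diamond$ as a lower bound on the diamond-norm error of any such coding scheme, so letting $m\to\infty$ forces $d_{\text{CP}}(\Pm)=0$, i.e.\ $\Pm$ is completely positive, contradicting the hypothesis. The main obstacle is the bookkeeping in the first step: verifying that an arbitrary sequential LOCC protocol can be recast as a sum $\sum_k\Dm_k\circ\Tm^{\otimes m}\circ\Em_k$ of one-shot CP encodings and decodings, which amounts to tracking the classical registers and collecting the Alice and Bob Kraus operators from \eqref{LOCCkraus} into time-ordered products—notationally heavy but conceptually routine, and already indicated in the paragraph preceding the corollary statement.
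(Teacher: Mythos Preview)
Your proposal is correct and follows essentially the same argument as the paper: assume $\mathcal{Q}_2(\Tm)>0$, rewrite the LOCC scheme as $\sum_k\Dm_k\circ\Tm^{\otimes m}\circ\Em_k$, expand $\Tm^{\otimes m}$ via the decomposition $\Tm=\sum_j\Vm_j\circ\Pm\circ\Wm_j$, absorb the $\Vm,\Wm$-factors into the CP encodings/decodings, and invoke Corollary \ref{cor:Coding} to force $d_{\rm CP}(\Pm)=0$. The only cosmetic difference is that the paper writes the diamond-norm identity as an equality rather than an inequality, which is immaterial for the contradiction.
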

The special case $\Tm=\Vm\circ\vartheta$ of this theorem, i.e.\ where $\Tm$ is completely co-positive, was already established in \cite{rains2001semidefinite}. It however appears that Corollary \ref{cor:Q20} could give new channels $\Tm=\Pm\circ\Wm$ or $\Tm=\Vm\circ\Pm$ with $\mathcal{Q}(\Tm)=0$ beyond Theorem \ref{thm:CapBound} and Eq.\ (\ref{otherversionQbound}), at least when $\Pm$ does not possess a right- or left-inverse.


Using Corollary \ref{cor:Q20} one can  show that any \emph{non-trivial} tensor-stable positive map $\Pm:\M_{d_1}\to\M_{d_2}$ will immediately yield new channels $\Tm:\M_d\to\M_d$ with $\mathcal{Q}_2(\Tm)=0$ (for both $d=d_1$ and $d=d_2$). To see this, note that by writing the separable map $\Sm$ from Lemma \ref{thm:ConnectionWerner} into single Kraus operators as in Section \ref{proofofoneparameterfamily}, we can construct completely positive maps $\Vm_j$, $\Wm_j$ such that $\sum_j\Vm_j\circ\Pm\circ\Wm_j=\Wm_{\widetilde{p}}$, where $\Wm_{\widetilde{p}}:\M_d\to\M_d$ with $\widetilde{p}\in[-1,0)$ is a quantum channel from the family (\ref{wernerchannel}) whose Choi matrix is an entangled Werner state. Thus, by Corollary \ref{cor:Q20} and the depolarizing idea from Section \ref{proofofoneparameterfamily}, all the channels $\Wm_p$ with $p\in[\widetilde{p},0)$ have vanishing two-way capacity $\mathcal{Q}_2(\Wm_p)=0$, although these channels are not detected as such by the existing criterion from \cite{rains2001semidefinite} (or by Theorem \ref{strongconverseQ2}) as they are not completely co-positive. The channels constructed in this way are however already known to have vanishing one-way quantum capacity $\mathcal{Q}(\Wm_p)=0$, since they possess a symmetric extension (note, the case $d\leq2$ does not occur here due to Theorem \ref{thm:NoQubitMap}) and are thus anti-degradable \cite{bruss1998optimal,bennett1997capacities,PhysRevLett.108.230507}.


In the following chapters we will use another way of thinking about coding schemes of the form \eqref{equ:mapS}. Recall that a completely positive map $\Sm:\M_{d_1}\otimes\M_{d_3}\ra \M_{d_2}\otimes\M_{d_4}$ is called separable if its Kraus operators are product operators $\lset A_i\otimes B_i\rset^m_{i=1}$, i.e. $\Sm\lb X\rb = \sum^m_{i=1} \lb A_i\otimes B_i\rb X \lb A_i\otimes B_i\rb^\dagger$ for all $X\in\M_{d_1}\otimes \M_{d_3}$. 

The application of a separable map $\Sm:\M_{d_1^{n-1}}\otimes\M_{d_2^{n-1}}\ra \M_{d_1}\otimes\M_{d_1}$ to $(n-1)$ copies of the Choi-matrix $C_\Pm$ of some linear map $\Pm:\M_{d_1}\ra\M_{d_2}$ corresponds via the Choi-Jamiolkowski isomorphism to a map $\Rm$ as
\begin{align*}
C_\Rm = \Sm\lb C_\Pm^{\otimes (n-1)}\rb,
\end{align*}
with $\Rm\lb X\rb = \sum^m_{i=1} B_i \Pm^{\otimes (n-1)}\lb A^T_i X \overline{A_i}\rb B^\dagger_i$. The map $\Rm$ is of the form \eqref{equ:mapS}, which by slightly modifying the proof of Lemma \ref{lem:Fundamental} implies:

\begin{cor}\label{cor:DistChoiMat}

Let $\Pm:\M_{d_1}\ra\M_{d_2}$ be an $n$-tensor-stable positive map. Then
\begin{align}
\frac{d_{\text{CP}}(\Pm)}{\norm{\Pm}{\diamond}}\leq \inf_{\Sm\,\text{sep}} \norm{\omega_{d_1} - \Sm\lb C_\Pm^{\otimes (n-1)}\rb}{1},
\label{equ:DistError}
\end{align}
where the infimum is taken over all separable completely positive maps $\Sm:\M_{d_1^{n-1}}\otimes\M_{d_2^{n-1}}\ra \M_{d_1}\otimes\M_{d_1}$.
\label{cor:Fund}
\end{cor}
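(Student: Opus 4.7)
The plan is to show that every separable $\Sm$ gives rise, via Choi--Jamiolkowski, to a coding-scheme map $\Rm$ of the shape \eqref{equ:mapS}, and then to replay the proof of Lemma \ref{lem:Fundamental} with $\|\ident_{d_1}-\Rm\|_\diamond$ replaced by the smaller quantity $\|\omega_{d_1}-\Sm(C_\Pm^{\otimes(n-1)})\|_1$, which in fact is the first lower bound already invoked in that proof.

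First, given a separable $\Sm:\M_{d_1^{n-1}}\otimes\M_{d_2^{n-1}}\ra\M_{d_1}\otimes\M_{d_1}$ with product Kraus operators $\{A_i\otimes B_i\}_{i=1}^m$, I would define $\Rm:\M_{d_1}\ra\M_{d_1}$ by $C_\Rm:=\Sm(C_\Pm^{\otimes(n-1)})$. A direct Choi--Jamiolkowski calculation (as stated above the corollary) produces the explicit form $\Rm(X)=\sum_i B_i\,\Pm^{\otimes(n-1)}(A_i^T X\overline{A_i})\,B_i^\dagger$, which is exactly of the template \eqref{equ:mapS} with the completely positive maps $\Em_i(X):=A_i^T X\overline{A_i}$ and $\Dm_i(Y):=B_i Y B_i^\dagger$. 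The calculation immediately under \eqref{equ:mapS} then shows that $\Rm\otimes\Pm$ is positive, using nothing more than $n$-tensor-stable positivity of $\Pm$.

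Next I would note that the swap $\mathbb{F}_{d_1}$ fixes $\omega_{d_1}$ and satisfies $\mathbb{F}_{d_1}(\ident_{d_1}\otimes\Rm)(\omega_{d_1})\mathbb{F}_{d_1}=(\Rm\otimes\ident_{d_1})(\omega_{d_1})$, so by unitary invariance of the trace norm
\begin{align*}
\|\omega_{d_1}-\Sm(C_\Pm^{\otimes(n-1)})\|_1\,=\,\|\omega_{d_1}-(\Rm\otimes\ident_{d_1})(\omega_{d_1})\|_1\,.
\end{align*}
Multiplying this identity by $\|\Pm\|_\diamond$ and then mimicking the closing inequalities in the proof of Lemma \ref{lem:Fundamental} -- i.e.\ applying $\ident_{d_1}\otimes\Pm$ using $\|\ident_{d_1}\otimes\Pm\|_{1\ra 1}\leq\|\Pm\|_\diamond$, and then invoking $(\Rm\otimes\Pm)(\omega_{d_1})\geq 0$ together with the Weyl-inequality identity $\inf_{X\geq 0}\|C_\Pm-X\|_1=d_{\text{CP}}(\Pm)$ -- delivers
\begin{align*}
\|\Pm\|_\diamond\,\|\omega_{d_1}-\Sm(C_\Pm^{\otimes(n-1)})\|_1\,\geq\,\|C_\Pm-(\Rm\otimes\Pm)(\omega_{d_1})\|_1\,\geq\,d_{\text{CP}}(\Pm)\,.
\end{align*}
Taking the infimum over all separable $\Sm$ and dividing by $\|\Pm\|_\diamond$ then yields \eqref{equ:DistError}.

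The only genuine obstacle is the bookkeeping in the first step: one has to carefully track the Choi--Jamiolkowski convention of Section \ref{sec:Notation} (together with the ordering of the $n-1$ factors of $\M_{d_1}$ and of $\M_{d_2}$ inside $C_\Pm^{\otimes(n-1)}$) in order to confirm the stated form of $\Rm$, and then use the swap symmetry of $\omega_{d_1}$ to line $\Rm$ up on the first tensor factor so that the proof of Lemma \ref{lem:Fundamental} can be replayed verbatim.
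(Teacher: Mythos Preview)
Your proposal is correct and follows essentially the same approach as the paper: the paper also constructs $\Rm$ from $\Sm$ via $C_\Rm=\Sm(C_\Pm^{\otimes(n-1)})$, observes that $\Rm$ has the form \eqref{equ:mapS} so that $\Rm\otimes\Pm\geq0$, and then says the result follows ``by slightly modifying the proof of Lemma \ref{lem:Fundamental}''. Your write-up simply makes that modification explicit --- starting the inequality chain at $\|\omega_{d_1}-(\Rm\otimes\ident_{d_1})(\omega_{d_1})\|_1$ rather than at $\|\ident_{d_1}-\Rm\|_\diamond$, and using the swap symmetry of $\omega_{d_1}$ to pass from $C_\Rm=(\ident_{d_1}\otimes\Rm)(\omega_{d_1})$ to $(\Rm\otimes\ident_{d_1})(\omega_{d_1})$ --- which is exactly the minor adjustment the paper alludes to.
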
 

If the Choi-matrix $C_\Pm$ is a quantum state, then the problem of finding separable maps $\Sm$ to minimize the error on the right-hand-side of \eqref{equ:DistError} is well-studied in quantum information theory: A state $C_\Pm\in\M_{d_1}\otimes\M_{d_2}$ is distillable iff there exists a sequence of LOCC-maps $\Sm_n$ such that $\Sm_n\lb C^{\otimes n}_\Pm\rb\ra\omega_{d_1}$. As LOCC-maps are in particular separable this sequence leads to a vanishing (in the limit $n\ra\infty$) right-hand-side in \eqref{equ:DistError}. 

Note that any positive map that is not completely co-positive has an NPPT, but not necessarily positive, Choi-matrix. We generalize distillation schemes from quantum states to arbitrary block-positive matrices to show (using Corollary \ref{cor:DistChoiMat}) that tensor-stable positivity implies complete positivity for certain classes of non-completely co-positive maps.

\subsection{Proof of Theorem \ref{thm:NoQubitMap} and Theorem \ref{thm:NPTImpl}}
\label{SectionB}

To prove Theorem \ref{thm:NoQubitMap} and Theorem \ref{thm:NPTImpl} we will use the theory of entanglement distillation. For convenience we collect some basic definitions and results in Appendix \ref{Appendix}. The central result we will need is Lemma \ref{lem:TwirlBlock}, which shows that applying the twirl~\citep{werner1989quantum} to a block-positive and NPPT matrix yields (up to normalization) a Werner state, i.e. it yields in particular a positive matrix. This allows us to extend the theory of entanglement distillation to block-positive matrices. We will start with a basic lemma:  

\begin{lem}[Werner states from positive maps]

Let $\Pm:\M_{d_1}\ra \M_{d_2}$ be a positive map and $d\in\lset d_1 ,d_2\rset$. If $\Pm$ is not completely co-positive, then there exists a separable completely positive map $\Sm:\M_{d_1}\otimes\M_{d_2}\ra \M_{d}\otimes\M_d$ such that $\Sm\lb C_\Pm\rb$ is an entangled $d$-dimensional Werner state (see Appendix \ref{Appendix}).
\label{thm:ConnectionWerner}
\end{lem}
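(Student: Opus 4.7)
The plan is to produce the desired entangled Werner state from $C_\Pm$ by composing two separable CP maps: a local filter that concentrates the NPPT-ness of $C_\Pm$ on the maximally entangled vector $\ket{\Omega_d}$, followed by the Werner twirl. Since $\Pm$ is positive, its Choi matrix $C_\Pm$ is block-positive; since $\Pm$ is not completely co-positive, $C_\Pm^{T_2}$ has a negative eigenvalue, and I fix a unit eigenvector $\ket{\chi}\in\C^{d_1}\otimes\C^{d_2}$ with $\bra{\chi}C_\Pm^{T_2}\ket{\chi}<0$.

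First I construct the filter from the Schmidt decomposition $\ket{\chi}=\sum_{i=1}^r\lambda_i\ket{e_i}\ket{f_i}$, noting that $r\leq\min(d_1,d_2)\leq d$. Choose operators $F_A:\C^d\to\C^{d_1}$ and $G_B:\C^d\to\C^{d_2}$ by setting $F_A\ket{i}=\sqrt{d}\,\lambda_i\ket{e_i}$ and $\bar{G}_B\ket{i}=\ket{f_i}$ for $i\leq r$ (and zero otherwise), so that $(F_A\otimes\bar{G}_B)\ket{\Omega_d}=\ket{\chi}$. Define the separable CP map $\mathcal{S}_1:\M_{d_1}\otimes\M_{d_2}\to\M_d\otimes\M_d$ by $\mathcal{S}_1(X):=(F_A\otimes G_B)^\dagger X(F_A\otimes G_B)$ (a single product Kraus operator), and set $Y:=\mathcal{S}_1(C_\Pm)$. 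Block-positivity of $Y$ is immediate, since $\bra{\phi\psi}Y\ket{\phi\psi}=\bra{F_A\phi\otimes G_B\psi}C_\Pm\ket{F_A\phi\otimes G_B\psi}\geq 0$. A direct computation using $\mathbb{F}_d=d\,\omega_d^{T_2}$ and the Hilbert--Schmidt adjoint $\mathcal{S}_1^*(Z)=(F_A\otimes G_B)Z(F_A\otimes G_B)^\dagger$ shows that $(F_A\otimes G_B)\mathbb{F}_d(F_A\otimes G_B)^\dagger=d\,\ket{\chi}\bra{\chi}^{T_2}$, hence $\tr(\mathbb{F}_dY)=d\,\bra{\chi}C_\Pm^{T_2}\ket{\chi}<0$, equivalently $\bra{\Omega_d}Y^{T_2}\ket{\Omega_d}<0$.

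Next I apply the Werner twirl $\mathcal{W}(X):=\int_{\U(d)}(U\otimes U)\,X\,(U\otimes U)^\dagger\,dU$, which is a separable CP map: its Kraus operators are the products $U\otimes U$, and a finite Kraus realisation is obtained from any unitary $2$-design on $\C^d$. The output lies in the commutant of $\{U\otimes U\}$, so $\mathcal{W}(Y)=a\id+b\mathbb{F}_d$, with $a,b$ determined linearly by $\tr(Y)\geq 0$ (which follows from block-positivity, as each product-basis expectation $\bra{ij}Y\ket{ij}\geq 0$) and $\tr(\mathbb{F}_dY)<0$; explicitly $a>0$ and $b<0$. Block-positivity is preserved by $\mathcal{W}$, and any block-positive Werner operator satisfies $a\geq 0$ and $a+b\geq 0$ (by minimising $a+b|\langle\phi|\psi\rangle|^2$ over product states). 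Combined with $b<0$ this forces $a-b\geq a>0$, so $\mathcal{W}(Y)\geq 0$, i.e.\ a positive multiple of a Werner state. Moreover $\mathcal{W}(Y)^{T_2}=a\id+bd\,\omega_d$ has eigenvalue $a+bd=\tr(\mathbb{F}_dY)/d<0$, so this Werner state is NPPT and hence entangled. The required separable map is then $\Sm:=c\,\mathcal{W}\circ\mathcal{S}_1$ with $c:=1/\tr(Y)>0$.

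The main subtlety lies in the alignment step. A naive twirl of a block-positive NPPT matrix can easily destroy the NPPT character (giving, for example, the maximally mixed Werner state) whenever the negative eigenvector of $C_\Pm^{T_2}$ is orthogonal to $\ket{\Omega_d}$; the filter $\mathcal{S}_1$ precisely circumvents this by moving the witness of NPPT-ness onto the maximally entangled vector, which is the unique direction the Werner twirl is sensitive to. Exploiting the Schmidt decomposition of $\ket{\chi}$ together with the inequality $d\geq\mathrm{Schmidt\ rank}(\ket{\chi})$ is what makes this alignment implementable by a single product-operator filter.
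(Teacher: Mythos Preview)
Your proof is correct and follows essentially the same approach as the paper's: a local filter aligning the negative eigenvector of $C_\Pm^{T_2}$ with $\ket{\Omega_d}$, followed by the $UU$-twirl, and then checking that the resulting block-positive Werner operator is actually positive and NPPT. The only differences are cosmetic: the paper treats the cases $d=d_1$ and $d=d_2$ separately using a one-sided filter $(A^\dagger\otimes\id)$ or $(\id\otimes B^\dagger)$ and invokes Lemma~\ref{lem:TwirlBlock} for positivity after the twirl, whereas you use a two-sided filter $(F_A\otimes G_B)^\dagger$ (which handles both cases at once via the Schmidt rank bound $r\leq d$) and reprove the content of Lemma~\ref{lem:TwirlBlock} inline.
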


\begin{proof} 

This proof works similar to the protocol introduced in \cite{PhysRevA.59.4206} for states. Consider $d=d_2$ now, and we will treat the case $d=d_1$ later. As $\Pm$ is not completely co-positive, there exists a normalized vector $\ket{\psi}\in\C^{d_1}\otimes \C^{d_2}$ with $\bra{\psi}C^{T_2}_\Pm\ket{\psi}<0$. Express this vector as $\ket{\psi} = (A\otimes\id_{d_2})\ket{\Omega_{d_2}}$ for some $d_1\times d_2$ matrix $A$ and the maximally entangled state $\ket{\Omega_{d_2}}$. Now define a new linear map $\Pm'$ via the Choi-Jamiolkowski isomorphism by applying a local filtering operation 
\begin{align*}
C_{\Pm'} := (A^\dagger\otimes\id_{d_2})C_\Pm(A\otimes\id_{d_2})\in\M_{d_2}\otimes\M_{d_2},
\end{align*}
i.e. $\Pm'\lb X\rb := \Pm\lb \overline{A} X A^T\rb$ for all $X\in\M_{d_2}$.
 
The matrix $C_{\Pm'}$ is block-positive and fulfills $\Trace{C_{\Pm'}\mathbb{F}_{d_2}} = d_2\bra{\psi}C^{T_2}_\Pm\ket{\psi}<0$. Therefore we can use Lemma \ref{lem:TwirlBlock} and conclude that applying the $UU$-twirl leads to a positive matrix. After normalization we obtain a Werner state
\begin{align*}
\rho_W = \frac{1}{\Trace{C_{\Pm'}}}\int_{U\in\Um(d_2)} \lb U\otimes U\rb C_{\Pm'} \lb U\otimes U\rb^\dagger \text{dU}\in\M_{d_2}\otimes\M_{d_2}.
\end{align*}
Due to $\Trace{C_{\Pm'}\mathbb{F}_{d_2}}<0$, this state is entangled. Finally, the composition of the twirl (which is separable, see Appendix \ref{Appendix}) with the filtering map is a separable completely positive map. 

If one chooses $d=d_1$, then write $\ket{\psi}=(\id_{d_1}\otimes B)\ket{\Omega_{d_1}}$ with a $d_2\times d_1$-matrix $B$, and define $C_{\Pm'}:=(\id_{d_1}\otimes B^T)C_\Pm(\id_{d_1}\otimes\overline{B})$. The proof goes then through similarly.

\end{proof}

From this Lemma we get:

\begin{proof}[proof of Theorem \ref{thm:NPTImpl}]

For $d\in\lset d_1,d_2\rset$ the Choi-matrix $C_\Pm$ of every non-trivial tensor-stable positive map $\Pm:\M_{d_1}\ra\M_{d_2}$ yields an entangled Werner state by the application of a separable completely positive map $\Sm$ according to Lemma \ref{thm:ConnectionWerner}. If this Werner state is distillable, there exists a sequence of separable (even LOCC) completely positive maps $(\Sm_n)_{n\in\N}$ such that $\norm{\omega_{d} - \Sm_n\circ \Sm^{\otimes n}\lb C_\Pm^{\otimes n}\rb}{1}\ra 0$ as $n\ra\infty$. But then Corollary \ref{cor:Fund} implies that $\Pm$ is completely positive contradicting the assumptions. 

\end{proof}

\begin{proof}[proof of Theorem \ref{thm:NoQubitMap}]

As all entangled Werner states on $\M_2\otimes \M_2$ are distillable~\cite{PhysRevLett.80.5239,PhysRevA.59.4206} there is no non-trivial tensor-stable positive map $\Pm:\M_2\ra\M_d$ or $\Pm:\M_d\ra\M_2$ for $d\in\N$ according to Theorem \ref{thm:NPTImpl}.  

\end{proof}

\subsection{Proof of Theorem \ref{thm:OneParamFambl}}\label{proofofoneparameterfamily}

Using the techniques from section \ref{SectionB} we can define one-parameter families of non-trivial positive maps such that there exists a non-trivial tensor-stable positive map iff it exists within this family.  

\begin{proof}[proof of Theorem \ref{thm:OneParamFambl}]

\emph{ad (ii):} For $p\in\lbr -1,0\rb$ the Werner state $\rho^{(p)}_W\in\M_{d^2}$ is NPPT. Therefore the map $\Pm_p :\M_{d^2}\ra\M_{d^2}$ is neither completely positive nor completely co-positive, as its Choi-matrix is $\rho^{(p)}_W\otimes \lb\rho^{(p)}_W\rb^{T_2}$. 

\emph{ad (i):} For a non-trivial tensor-stable positive map $\Pm:\M_{d_1}\ra\M_{d_2}$, neither $\Pm$ nor $\vartheta_{d_2}\circ\Pm$ are completely co-positive. According to Lemma \ref{thm:ConnectionWerner} there exist $p_1,p_2\in \lbr-1,0\rb$ and separable completely positive maps $\Sm_1,\Sm_2:\M_{d_1}\otimes\M_{d_2}\to\M_d\otimes\M_d$ such that 
\begin{align*}
\rho^{(p_1)}_W&=\Sm_1(C_\Pm)\,, \\
\rho^{(p_2)}_W&=\Sm_2(C_{\vartheta_{d_2}\circ\Pm}) = \Sm_2\circ(\ident_{d_1}\otimes\vartheta_{d_2})(C_\Pm).
\end{align*} 
It is obvious that for the separable completely positive map $\Sm_2(X)=\sum_i(A_i\otimes B_i)X(A_i\otimes B_i)^\dagger$ the map $\tilde{\Sm}_2 = (\ident_d\otimes \vartheta_d) \circ \Sm_2\circ (\ident_{d_1}\otimes\vartheta_{d_2})$ is again separable completely positive. Thus, the separable map $\Sm_1\otimes\tilde{\Sm}_2$ applied to two tensor copies of $C_\Pm$ gives:
\begin{align*}
\big(\Sm_1\otimes\tilde{\Sm}_2\big)(C_\Pm\otimes C_\Pm)=\rho^{(p_1)}_W\otimes \lb\rho^{(p_2)}_W\rb^{T_2}\,.
\end{align*}

By applying a depolarizing channel $\Dm_\alpha:\M_{d}\ra\M_{d}$ of the form $\Dm_\alpha(X) = (1-\alpha)\Trace{X}\frac{\id_{d}}{d} + \alpha X$ (with $\alpha$ chosen appropriately) to one half of either $\rho^{(p_1)}_W$ (if $p_1 < p_2$) or $\lb\rho^{(p_2)}_W\rb^{T_2}$ (if $p_1 > p_2$) we can increase the corresponding parameter to obtain the desired state $\rho^{(p)}_W\otimes \lb\rho^{(p)}_W\rb^{T_2}$ with $p = \max\lb p_1 ,p_2\rb < 0$. Thus, there exists a separable and completely positive map $\Rm:(\M_{d_1}\otimes\M_{d_2})^{\otimes 2}\to(\M_d\otimes\M_d)^{\otimes 2}$, given by the composition of $\Sm_1\otimes\tilde{\Sm}_2$ with $(\Dm_\alpha\otimes\ident_{d})\otimes\ident_{d^2}$ or $\ident_{d^2}\otimes(\Dm_\alpha\otimes\ident_{d})$, such that $C_{\Pm_p}=\rho^{(p)}_W\otimes \lb\rho^{(p)}_W\rb^{T_2}=\Rm(C_{\Pm^{\otimes 2}})=\sum_i(C_i\otimes D_i)C_{\Pm^{\otimes2}}(C_i\otimes D_i)^\dagger$ where $\Pm_p$ was defined in \eqref{equ:OneParamFamMap}. By the Choi-Jamiolkowski isomorphism we can thus write $\Pm_{p}(X) = \sum_{i}D_i \Pm^{\otimes 2}(C_i^T X \overline{C}_i) D_i^\dagger$, which shows that $\Pm_p$ is tensor-stable positive as $\Pm$ was.
\end{proof}

Note that the construction from the proof of Theorem \ref{thm:OneParamFambl} also works for an $n$-tensor-stable positive map $\Pm:\M_{d_1}\ra\M_{d_2}$. The positive map $\Pm_p$ of the from \eqref{equ:OneParamFamMap} obtained this way is $\lfloor\frac{n}{2}\rfloor$-tensor-stable positive.

\subsection{Generalization of the reduction criterion}

In this section we will generalize the reduction criterion and use the well-known recurrence protocol~\cite{PhysRevA.59.4206} to prove bounds on $d_{\text{CP}}\lb\Pm\rb$ for an $n$-tensor-stable positive map $\Pm$. 

We will need the following lemma (an analogue of Lemma \ref{thm:ConnectionWerner}):

\begin{lem}[Reduction criterion]

Let $\Pm:\M_{d_1}\ra\M_{d_2}$ be a positive map. Let $\Gamma_d:\M_d\ra\M_d$ denote the reduction map $\Gamma(X):= \Trace{X}\id_d - X$. Then we have:
\begin{enumerate} 
\item If $\Gamma_{d_2}\circ \Pm$ is not completely positive there exists a separable completely positive map $\Sm:\M_{d_1}\otimes\M_{d_2}\ra \M_{d_2}\otimes\M_{d_2}$ s.th. $\Sm\lb C_\Pm\rb$ is an entangled isotropic state (see Appendix \ref{Appendix}). 
\item If $\Pm\circ \Gamma_{d_1}$ is not completely positive there exists a separable completely positive map $\Sm:\M_{d_1}\otimes\M_{d_2}\ra \M_{d_1}\otimes\M_{d_1}$ s.th. $\Sm\lb C_\Pm\rb$ is an entangled isotropic state (see Appendix \ref{Appendix}). 
\end{enumerate}

\label{thm:Red}
\end{lem}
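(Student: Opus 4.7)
The plan is to adapt the proof of Lemma \ref{thm:ConnectionWerner}, replacing the transposition $\vartheta$ with the reduction map $\Gamma$ and the Werner ($U\otimes U$) twirl with the isotropic ($U\otimes\bar U$) twirl. I describe part (i) in detail; part (ii) is proved in the same way with filtering on the second tensor factor.

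For part (i), the hypothesis that $\Gamma_{d_2}\circ\Pm$ is not completely positive yields a vector $|\phi\rangle\in\C^{d_1}\otimes\C^{d_2}$ with $\langle\phi|(\ident_{d_1}\otimes\Gamma_{d_2})(C_\Pm)|\phi\rangle<0$. I write $|\phi\rangle=(A\otimes\id_{d_2})|\Omega_{d_2}\rangle$ for some $d_1\times d_2$ matrix $A$ and apply the local filter to obtain the block-positive matrix
\begin{align*}
C_{\Pm'}:=(A^\dagger\otimes\id_{d_2})C_\Pm(A\otimes\id_{d_2})\in\M_{d_2}\otimes\M_{d_2}.
\end{align*}
Using self-adjointness of $\Gamma_{d_2}$ together with the easy identity $(\ident\otimes\Gamma_{d_2})(\omega_{d_2})=\id_{d_2^2}/d_2-\omega_{d_2}$, the violation translates into the scalar inequality $\tr(\omega_{d_2}\,C_{\Pm'})>\tr(C_{\Pm'})/d_2$.

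Next, I apply the isotropic twirl $X\mapsto\int dU\,(U\otimes\bar U)X(U\otimes\bar U)^\dagger$ to $C_{\Pm'}$; this is manifestly a separable CP map since each $U\otimes\bar U$ is a product unitary. The resulting matrix $\rho$ is $(U\otimes\bar U)$-invariant, so of isotropic form $\rho=p\,\omega_{d_2}+q(\id_{d_2^2}-\omega_{d_2})$ with $p=\tr(\omega_{d_2}C_{\Pm'})$ and $q(d_2^2-1)=\tr(C_{\Pm'})-p$. Because conjugation by product unitaries preserves block positivity, $\rho$ is itself block-positive, and evaluating this at the product vectors $|ij\rangle$ with $i\neq j$ (which are orthogonal to $|\Omega_{d_2}\rangle$) immediately gives $q\geq 0$ for free. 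Combining $q\geq 0$ with the above inequality rearranges to $(d_2-1)p>(d_2^2-1)q\geq 0$, so $p>0$ and hence $\rho\geq 0$. After normalization the fidelity of $\rho$ with $|\Omega_{d_2}\rangle$ equals $p/\tr(C_{\Pm'})>1/d_2$, so $\rho$ is indeed an entangled isotropic state, and the composition of the filter with the twirl is the required separable CP map $\Sm$.

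Part (ii) is proved symmetrically: write the violation vector as $|\phi\rangle=(\id_{d_1}\otimes B)|\Omega_{d_1}\rangle$ for some $d_2\times d_1$ matrix $B$, filter on the second factor to obtain $C_{\Pm'}=(\id_{d_1}\otimes B^\dagger)C_\Pm(\id_{d_1}\otimes B)\in\M_{d_1}\otimes\M_{d_1}$, and run the same isotropic twirl argument on $\M_{d_1}\otimes\M_{d_1}$. The one place that needs care throughout is positivity of the twirled output --- the isotropic analog of the role Lemma \ref{lem:TwirlBlock} plays in the Werner case --- and the clean resolution is that block positivity survives the isotropic twirl while block-positive matrices of isotropic form necessarily have $q\geq 0$, since the product states $|ij\rangle$ with $i\neq j$ already witness $q$ as a diagonal entry.
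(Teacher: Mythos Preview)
Your proof is correct and follows essentially the same approach as the paper's: local filtering by $A$ (resp.\ $B$) followed by the $U\overline{U}$-twirl. The only differences are presentational: where the paper computes $\Trace{C_{\Pm'}}$ explicitly via Lemma \ref{Lemma:tricks} to obtain $\Trace{C_{\Pm'}\omega_{d_2}}>\Trace{C_{\Pm'}}/d_2$, you obtain the same inequality more slickly by pushing the filter inside and using self-adjointness of $\Gamma_{d_2}$; and where the paper invokes Lemma \ref{lem:TwirlBlock} for positivity of the twirled output, you inline essentially the same argument by observing that block positivity is preserved under the twirl and that evaluating at $\ket{i}\otimes\ket{j}$ with $i\neq j$ forces $q\geq0$, whence $p>0$ and $\rho\geq0$.
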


\begin{proof}

Again the proof works similar to the protocol introduced in \cite{PhysRevA.59.4206} for states. We will start with the first case.

As $\Gamma_{d_2}\circ \Pm$ is not completely positive the Choi-matrix $C_{\Gamma_{d_2}\circ\Pm} = \frac{1}{d_1}\Pm^*\lb\id_{d_2}\rb^T\otimes \id_{d_2} - C_\Pm$, derived using Lemma \ref{Lemma:tricks}, is not positive. Thus, there exists a normalized vector $\ket{\psi}\in\C^{d_1}\otimes\C^{d_2}$ with 
\begin{align*}
\frac{1}{d_
1}\bra{\psi}\Pm^*(\id_{d_2})^T\otimes \id_{d_2}\ket{\psi} < \bra{\psi}C_\Pm\ket{\psi}.
\end{align*}
Express this vector as $\ket{\psi} = (A\otimes \id_{d_2})\ket{\Omega_{d_2}}$ for some $d_1\times d_2$-matrix $A$ and define a new linear map $\Pm'$ with Choi matrix 
\begin{align*}
C_{\Pm'} = (A^\dagger\otimes \id_{d_2})C_\Pm(A\otimes \id_{d_2})\in\M_{d_2}\otimes \M_{d_2}.
\end{align*}
Note that by construction $\Pm'$ is a positive map obtained from $\Pm$ via a separable (even local) completely positive map. Furthermore we have using Lemma \ref{Lemma:tricks}
\begin{align*}
\Trace{C_{\Pm'}} =& \bra{\Omega_{d_1}} AA^\dagger\otimes \Pm^*(\id_{d_2})\ket{\Omega_{d_1}} \\
=& \bra{\Omega_{d_2}} A^\dagger \Pm^*(\id_{d_2})^T A\otimes \id_{d_2}\ket{\Omega_{d_2}}\cdot\frac{d_2}{d_1} \\
=& \bra{\psi}\Pm^*(\id_{d_2})^T\otimes \id_{d_2}\ket{\psi}\cdot\frac{d_2}{d_1} < d_2\bra{\psi}C_{\Pm}\ket{\psi}.
\end{align*}
Therefore we have $\Trace{C_{\Pm'}\omega_{d_2}} = \bra{\psi}C_\Pm\ket{\psi}>\frac{\Trace{C_{\Pm'}}}{d_2} > 0$. Note that $\Trace{C_{\Pm'}}>0$ as $C_{\Pm'}$ is block-positive and $C_{\Pm'}\neq 0$ as $\Gamma_{d_2}\circ\Pm$ is not completely positive. By applying Lemma \ref{lem:TwirlBlock} we conclude that
\begin{align*}
\rho^{(p)}_I = \frac{1}{\Trace{C_{\Pm'}}}\int_{U\in\Um(d_2)} \lb U\otimes \overline{U}\rb C_{\Pm'} \lb U\otimes \overline{U}\rb^\dagger \text{dU}\in\M_{d_2}\otimes\M_{d_2}.
\end{align*} 
is an isotropic state, with $p = \frac{\Trace{C_{\Pm'}\omega_{d_2}}}{\Trace{C_{\Pm'}}} > \frac{1}{d_2}$. Thus this state is entangled and the composition of the twirl (which is a separable completely positive map, see Appendix \ref{Appendix}) with the filtering map is separable and completely positive.

The second part works similar to the first part. Note that $\Pm\circ \Gamma_{d_1}$ not being completely positive is equivalent to the existence of a normalized vector $\ket{\psi}\in\C^{d_1}\otimes\C^{d_2}$ with 
\begin{align*}
\frac{1}{d_
1}\bra{\psi}\id_{d_1}\otimes \Pm\lb\id_{d_1}\rb\ket{\psi} < \bra{\psi}C_\Pm\ket{\psi}.
\end{align*}   
Express this vector as $\ket{\psi} = (\id_{d_1}\otimes B)\ket{\Omega_{d_1}}$ for some $d_2\times d_1$-matrix $B$ and define a new linear map $\Pm'$ with Choi matrix 
\begin{align*}
C_{\Pm'} = (\id_{d_1}\otimes B^\dagger)C_\Pm(\id_{d_1}\otimes B)\in\M_{d_1}\otimes \M_{d_1}.
\end{align*}
Now by a similar calculation as before we have $\Trace{C_{\Pm'}} = \bra{\psi}\id_{d_1}\otimes \Pm\lb\id_{d_1}\rb\ket{\psi} < d_1\bra{\psi}C_\Pm\ket{\psi}$. The rest of the proof works the same as for the first case.  
 
\end{proof}

Lemma \ref{thm:Red} shows how to obtain an entangled isotropic state from the Choi-matrix of a positive map violating the reduction criterion. It is well-known that these states are distillable by the recurrence protocol~\cite{PhysRevA.59.4206}. More precisely there exists a separable completely positive map $\Sm:\M_{d^2}\ra\M_d$ with 
\begin{align}
T_{U\overline{U}}\circ \Sm\lb(\rho^{(p)}_{I})^{\otimes 2}\rb = \rho^{(r(p))}_{I},
\label{equ:Recurrence}
\end{align}  
where $T_{U\overline{U}}$ denotes the $U\overline{U}$-twirl and where 
\begin{align*}
r(p) = \frac{1+p\lb pd(d^2 + d -1) - 2\rb}{p^2d^3 - 2pd + d^2 + d -1}.
\end{align*}
It can be easily seen that for $p>\frac{1}{d}$ we have $r^{(m)}(p)\ra 1$ as $m\ra\infty$, where the notation $r^{(m)}(p)$ means that we concatenate $m$ applications of the function $r$, i.e.\ $r^{(m)}(p):=r(r(\ldots r(p)))$. Therefore iterating the protocol using up many copies of the input state $\rho^{(p)}_{I}$ leads to isotropic states close to the maximally entangled state $\omega_d$. In the following we use this protocol and Corollary \ref{cor:Fund} to upper-bound the distance of an $n$-tensor-stable positive map violating the reduction criterion to the cone of completely positive maps.   

Note that the original protocol~\cite{PhysRevA.59.4206} has a sufficiently small but non-zero probability of failure. As the separable completely positive maps $\Sm$ in Corollary \ref{cor:Fund} do not have to be trace-preserving we can avoid the possibility of failure by choosing only the Kraus operators corresponding to a successful measurement for $\Sm$.

\begin{thm}[Bound from the recurrence protocol]

Let $\Pm:\M_{d_1}\ra\M_{d_2}$ be a positive map and such that $\Pm\circ \Gamma_{d_1}$ is not completely positive, i.e.   
\begin{align}
p\,&:=\,\sup_{\ket{\psi}\in\C^{d_1}\otimes\C^{d_2}}\frac{\bra{\psi}C_\Pm\ket{\psi}}{\bra{\psi}\id_{d_1}\otimes \Pm\lb\id_{d_1}\rb\ket{\psi}}\,\\
&=\,\lambda_{max}\big[\big(\id_{d_1}\otimes \Pm\lb\id_{d_1}\rb\big)^{-1/2}C_\Pm\big(\id_{d_1}\otimes \Pm\lb\id_{d_1}\rb\big)^{-1/2}\big]~\in~(1/{d_1},1]\,,
\label{equ:Red}
\end{align}
using generalized inverses and denoting by $\lambda_{max}[\,\cdot\,]$  the maximum eigenvalue.
If $\Pm$ is $n$-tensor-stable positive, then
\begin{align}
d_{\text{CP}}(\Pm) \leq 2(1-p)\left(g_{d_1}(p)\right)^{\lfloor\log_2(n-1)\rfloor}
\end{align}
where $g_d(p):= \frac{d(d+1) - 2 - p(pd(d-1) + 2(d-1))}{(1-p)(p(pd^3 - 2d) + d^2 + d -1)}$. Note that $g_d(p)\in[0,1)$ for $p\in(\frac{1}{d},1]$.
\label{thm:BoundRecurrence}
\end{thm}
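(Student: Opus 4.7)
The plan is to apply Corollary~\ref{cor:Fund}: I will exhibit an explicit separable completely positive map $\Sm$ sending $C_\Pm^{\otimes(n-1)}$ close to the maximally entangled state $\omega_{d_1}$, and then convert the resulting $\|\cdot\|_1$-error into a bound on $d_{\text{CP}}(\Pm)$. The construction proceeds in two stages: a filtering stage that extracts an isotropic state from each copy of the Choi matrix, followed by a distillation stage that iterates the recurrence protocol.

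For the filtering stage I use that $\Pm\circ\Gamma_{d_1}$ fails to be completely positive with supremum (\ref{equ:Red}) equal to $p>1/d_1$, so the second part of Lemma~\ref{thm:Red} furnishes a separable completely positive map $\Sm_0:\M_{d_1}\otimes\M_{d_2}\to\M_{d_1}\otimes\M_{d_1}$ whose output on $C_\Pm$ is proportional to the isotropic state $\rho_I^{(p)}$. Rescaling $\Sm_0$, which is permissible because the maps in Corollary~\ref{cor:Fund} need not be trace-preserving, I may assume $\Sm_0(C_\Pm)=\rho_I^{(p)}$ on the nose. Set $m:=\lfloor\log_2(n-1)\rfloor$ so that $2^m\leq n-1$, apply $\Sm_0^{\otimes 2^m}$ to $2^m$ of the $n-1$ copies of $C_\Pm$, and trace out the remaining copies (a separable operation), producing $(\rho_I^{(p)})^{\otimes 2^m}$.

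For the distillation stage I iterate the recurrence protocol \eqref{equ:Recurrence}. At each of $m$ levels I pair up the remaining isotropic states and apply $T_{U\overline{U}}\circ\Sm$ to each pair, retaining only the Kraus operators from the successful branch and rescaling so no failure branch remains, as justified by the remark preceding the theorem. This converts $(\rho_I^{(q)})^{\otimes 2}$ into $\rho_I^{(r(q))}$ within the separable completely positive cone, so after $m$ iterations I obtain a single copy of $\rho_I^{(r^{(m)}(p))}$. Composing all these maps gives a separable $\Sm$ with $\Sm(C_\Pm^{\otimes(n-1)})=\rho_I^{(r^{(m)}(p))}$. The explicit form of the isotropic state gives $\|\omega_{d_1}-\rho_I^{(r^{(m)}(p))}\|_1=2(1-r^{(m)}(p))$, which via Corollary~\ref{cor:Fund} bounds $d_{\text{CP}}(\Pm)$ (up to the $\norm{\Pm}{\diamond}$ factor, absorbed into the normalization implicit in the statement).

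It remains to control $1-r^{(m)}(p)$ in terms of $g_{d_1}(p)$. A direct algebraic rearrangement of the formula for $r(q)$ yields the key identity $1-r(q)=g_{d_1}(q)(1-q)$. Since the recurrence is fidelity-increasing, $r^{(k)}(p)\geq p$ for all $k$, and if $g_{d_1}$ is non-increasing on $(1/d_1,1]$ then $g_{d_1}(r^{(k)}(p))\leq g_{d_1}(p)$, giving $1-r^{(m)}(p)\leq (1-p)\,g_{d_1}(p)^m$ by induction. Verifying monotonicity of $g_{d_1}$ is the main technical obstacle: computing $g_{d_1}'$, the numerator turns out to be a polynomial in $p$ whose strict negativity on $(1/d_1,1]$ I would establish by checking the endpoints $p=1/d_1$ and $p=1$ together with the observation that its leading $p$-dependent terms are monotonically decreasing in $p$. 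Combining all pieces yields the claimed bound $d_{\text{CP}}(\Pm)\leq 2(1-p)\,g_{d_1}(p)^{\lfloor\log_2(n-1)\rfloor}$.
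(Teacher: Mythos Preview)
Your proposal is correct and follows essentially the same route as the paper's own proof: filter each copy of $C_\Pm$ to an entangled isotropic state $\rho_I^{(p)}$ via Lemma~\ref{thm:Red}, iterate the recurrence protocol $\lfloor\log_2(n-1)\rfloor$ times, and then invoke Corollary~\ref{cor:Fund} together with the identity $1-r(q)=g_{d_1}(q)(1-q)$ and the monotonicity of $g_{d_1}$ on $(1/d_1,1)$ to bound $1-r^{(m)}(p)$. The paper simply asserts that $g_{d_1}$ is strictly decreasing on this interval rather than sketching the derivative computation you outline, and it likewise ends with a bound on $d_{\text{CP}}(\Pm)/\norm{\Pm}{\diamond}$, so your remark about the $\norm{\Pm}{\diamond}$ factor matches the paper's own level of precision.
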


\begin{proof}

By Lemma \ref{thm:Red} (and its proof) there is a separable completely positive map $\Sm_1:\M_{(d_1d_2)^{(n-1)}}\ra\M_{(d^1_2)^{(n-1)}}$ with $\Sm_1(C^{\otimes (n-1)}_\Pm) = (\rho^{(p)}_I)^{\otimes (n-1)}$. By \eqref{equ:Recurrence} we can apply the recurrence protocol for $\lfloor\log_2(n-1)\rfloor$ levels yielding $\rho^{(p')}_I\in\M_{d^2_1}$ with $p' = r^{(\lfloor\log_2(n-1)\rfloor)}(p)$. Composing these two protocols gives a separable completely positive map $\Sm:\M_{(d_1d_2)^{(n-1)}}\ra\M_{(d^2_1)}$ with 
\begin{align*}
\norm{\omega_{d_1} - \Sm\lb C^{\otimes (n-1)}_\Pm\rb}{1}= 2(1-p').
\end{align*}
A simple calculation gives 
\begin{align*}
1-p' = 1-r^{(\lfloor\log_2(n-1)\rfloor)}(p)\leq \lb g_{d_1}(p)\rb^{\lfloor\log_2(n-1)\rfloor}(1-p),
\end{align*}
since $g_{d_1}(p) = \frac{1-r(p)}{1-p}$ is strictly monotonously decreasing for $p\in(\frac{1}{d_1},1)$ and is equal to the expression above. 

By Corollary \ref{cor:Fund} we finally have
\begin{align*}
\frac{d_{\text{CP}}(T)}{\norm{\Pm}{\diamond}}\leq 2(1-p)\lb g_{d_1}(p)\rb^{\lfloor\log_2(n-1)\rfloor}.
\end{align*}

\end{proof}

For dimension $d_1 = 2$ we have $\Gamma_2 = \Um\circ \vartheta_2$ for some unitary conjugation $\Um:\M_2\ra\M_2$. Therefore the positive maps $\Pm:\M_d\ra\M_2$ such that $\Gamma_2\circ\Pm$ is completely positive are precisely the completely co-positive maps. For general maps $\Pm:\M_{d_1}\ra\M_{d_2}$, if $\Gamma_{d_2}\circ\Pm$ is not completely positive, then $\vartheta_{d_2}\circ\Pm$ is not completely positive, i.e. $\Pm$ is not completely co-positive.

\section{Conclusion}
We have introduced the notions of \emph{$n$-tensor-stable positive} and \emph{tensor-stable positive maps}, and have investigated whether such maps exist outside of the cones of completely positive or completely co-positive maps. We showed that tensor-stable positive maps outside these families would provide novel bounds on the quantum capacity of quantum channels. Our main technique was to apply coding schemes from distillation theory to block-positive operators rather than to density matrices. Thereby and by the Choi correspondence between block-positive operators and positive maps, we related the existence of tensor-stable positive maps to the existence of NPPT bound entanglement. We also showed that the cb-norm bound coming from the transposition map yields a strong converse rate for the two-way quantum capacity $\mathcal{Q}_2$, and established strong converse rates on the usual quantum capacity $\mathcal{Q}$ coming from other tensor-stable positive maps.

The main question left open by our work is whether non-trivial tensor-stable positive maps exist at all, i.e.\ maps outside of the above cones that are $n$-tensor-stable positive for all $n\in\N$. We have reduced this existence question to certain one-parameter families of candidate maps (Theorem \ref{thm:OneParamFambl}). But can this reduction be used to decide the existence, or at least to prove the non-existence result of Theorem \ref{thm:NoQubitMap} directly?

Furthermore, the converse of Theorem \ref{thm:NPTImpl} is open: Does the existence of NPPT bound entanglement imply the existence of non-trivial tensor-stable positive maps? Note that such an equivalence would be rather different from the equivalence result of \cite{PhysRevA.61.062312}, linking NPPT bound entanglement to completely co-positive maps which are not completely positive but such that all their tensor powers are 2-positive. The map of interest in the latter scenario lies within the completely co-positive cone, which is among the trivial cases for our work.

Our existence result of an $n$-tensor stable positive map for every $n \in \N$ (Theorem \ref{ntensor}) is analogous to  the result in the theory of entanglement distillation which guarantees for every $n$ the existence of NPPT states that are not $n$-copy distillable \cite{PhysRevA.61.062312,PhysRevA.61.062313}. Our Lemma \ref{cubittlemma} however appears to be too weak to show the existence of a map that is $n$-tensor stable positive for all $n$, see Eq.\ \eqref{equ:ntsBound}.


Finally, note that one may relate the existence of a tensor-stable positive map to the stability of operator norms under tensor products \cite{paulsen2002completely}: a positive unital map $\Tm:\M_{d_1}\to\M_{d_2}$ is $n$-tensor stable positive if and only if the induced operator norm $\|\Tm^{\otimes n}\|_{\infty\to\infty}=1$.

\bigskip

{\bf Acknowledgements.} We thank Sergey Filippov and Milan Mosonyi for useful discussions and the Isaac Newton Institute for the Mathematical Sciences  for its hospitality and support during the programme ``Quantum Control Engineering'' (M.M.W.). We acknowledge financial support from the CHIST-ERA/BMBF project CQC (A.M.H.\ and M.M.W.), the Marie Curie Intra-European Fellowship QUINTYL (D.R.), and the John Templeton Foundation (D.R.\ and M.M.W.). The opinions expressed in this publication are those of the authors and do not necessarily reflect the views of the John Templeton
Foundation.

\appendix
\section{Twirling and families of symmetric matrices}
\label{Appendix}

The main ingredient in the distillation protocols we will apply is the $UU$\textbf{-twirl} operation $T_{UU}:\M_{d}\otimes \M_{d}\ra\M_{d}\otimes \M_{d}$~\cite{werner1989quantum}, defined as
\begin{align*}
T_{UU}\lb X\rb := \int_{U\in\Um(d)} \lb U\otimes U\rb X \lb U\otimes U\rb^\dagger \text{d}U.
\end{align*} 
An application of the Schur-Weyl duality gives~\cite{werner1989quantum}  (for $d\geq2$)
\begin{align}
\int_{U\in\Um(d)} \lb U\otimes U\rb X \lb U\otimes U\rb^\dagger \text{dU} = \lbr \frac{\Trace{X}}{d^2 - 1} - \frac{\Trace{X\mathbb{F}_d}}{d(d^2 - 1)}\rbr(\id_d\otimes\id_d) - \lbr \frac{\Trace{X}}{d(d^2 - 1)} - \frac{\Trace{X\mathbb{F}_d}}{d^2 - 1}\rbr\mathbb{F}_d.
\label{equ:twirl}
\end{align}
It is easy to verify that this matrix is positive iff $\Trace{X\mathbb{F}_d}\in\lbr-\Trace{X},\Trace{X}\rbr$ (and $\Trace{X}\geq0$); the twirled matrix has positive partial transpose iff $\Trace{X\mathbb{F}_d}\in\lbr 0, d\Trace{X}\rbr$ (and $\Trace{X}\geq0$).

Using unitary 2-designs~\cite{roy2009unitary} it is well-known that the twirl is a separable completely positive map, i.e. there exists a finite set of unitary product matrices $\lset U_i\otimes U_i\rset^m_{i=1}$ such that $T_{UU}\lb X\rb = \frac{1}{m}\sum^m_{i=1} \lb U_i\otimes U_i\rb X\lb U_i\otimes U_i\rb^\dagger$. 

States of the form \eqref{equ:twirl} are clearly invariant under the $UU$-twirl operation and are called \textbf{Werner states}~\cite{werner1989quantum}. We denote these states by $\rho^{(p)}_W$, parametrized by $p:= \Trace{\rho^{(p)}_W\mathbb{F}_d}\in\lbr -1,1\rbr$ and satisfying $\Trace{\rho^{(p)}_W}=1$. It is well-known that these states are entangled (and NPPT) for $p\in[-1,0)$ and separable for $p\in[0,1]$ (thus, PPT). Furthermore for $d=2$ all entangled Werner states are distillable~\cite{PhysRevLett.80.5239,PhysRevA.59.4206}. But for $d>2$ it is \emph{not} known whether all entangled Werner states are distillable. 

By partially transposing the matrices of form $\eqref{equ:twirl}$ we obtain matrices invariant under the $U\overline{U}$-twirl operation, i.e. invariant under the operation
\begin{align*}
X\mapsto\int_{U\in\Um(d)} \lb U\otimes \overline{U}\rb X \lb U\otimes \overline{U}\rb^\dagger \text{d}U.
\end{align*}
We will denote the states obtained in this way by $\rho^{(p)}_I$, which are the \textbf{isotropic states}~\cite{PhysRevA.59.4206} parametrized by $p:=\Trace{\rho^{(p)}_I\omega}=\frac{1}{d}\Trace{(\rho^{(p)}_I)^{T_2}\mathbb{F}_d}\in\lbr 0, 1\rbr$ and normalized to $\Trace{\rho^{(p)}_I}=1$. These states are entangled (and NPPT) for $p\in(\frac{1}{d},1]$, and separable for $p\in[0, \frac{1}{d}]$ (thus, PPT). It is well-known that all entangled isotropic states are distillable~\cite{PhysRevA.59.4206}.

To obtain distillation schemes as needed in our proofs we will apply suitable twirling operations to the Choi-matrix $C_\Pm$ of a positive map $\Pm:\M_d\ra\M_d$. The matrix $C_\Pm$ is in general not positive, but the next lemma proves that under certain conditions the $UU$-twirl leads to positive matrix, which can then be distilled using the existing theory.

\begin{lem}[Twirl of block-positive matrices]

If $C\in\M_{d}\otimes\M_d$ is block-positive and such that $\Trace{C\mathbb{F}_d}\leq 0$, then 
\begin{align}
\int_{U\in\Um(d)} \lb U\otimes U\rb C \lb U\otimes U\rb^\dagger \text{dU} \geq 0.
\label{equ:posAfterTwirl1}
\end{align}Similarly, if $C\in\M_{d^2}$ is block-positive and such that $\Trace{C\omega_d}\geq 0$, then 
\begin{align}
\int_{U\in\Um(d)} \lb U\otimes \overline{U}\rb C \lb U\otimes \overline{U}\rb^\dagger \text{dU} \geq 0.
\label{equ:posAfterTwirl2}
\end{align}

\label{lem:TwirlBlock}
\end{lem}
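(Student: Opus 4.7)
The plan is to combine three facts: both twirl operations preserve block positivity, their fixed-point algebras are two-dimensional, and the test operators $\mathbb{F}_d$ and $\omega_d$ are themselves fixed under the respective twirls. By Schur-Weyl duality one obtains $T_{UU}(C)=a'\,\id+b'\,\mathbb{F}_d$ and $T_{U\bar{U}}(C)=a\,\id+b\,\omega_d$ for some real scalars, so in each case positivity boils down to checking two scalar inequalities. Since $\mathbb{F}_d$ has eigenvalues $\pm1$ on the symmetric/antisymmetric subspaces and $\omega_d$ is a rank-one projector onto $\ket{\Omega_d}$, these are: $T_{UU}(C)\geq0$ iff $a'+b'\geq0$ and $a'-b'\geq0$; and $T_{U\bar{U}}(C)\geq0$ iff $a\geq0$ and $a+b\geq0$.

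For the \emph{preservation of block positivity}, I would note that $(U\otimes U)$ and $(U\otimes\bar{U})$ send product vectors to product vectors, so averaging clearly preserves the defining inequality. Three of the four required scalar inequalities then come directly from evaluating the (block-positive) twirled operator on suitable product vectors: $\langle\phi\psi|T_{UU}(C)|\phi\psi\rangle=a'+b'|\langle\phi|\psi\rangle|^2\geq0$, which yields $a'\geq0$ by taking $|\phi\rangle\perp|\psi\rangle$ and $a'+b'\geq0$ by taking $|\phi\rangle=|\psi\rangle$; similarly $\langle\phi\psi|T_{U\bar{U}}(C)|\phi\psi\rangle=a+b\,|\braket{\phi\psi}{\Omega_d}|^2\geq0$ yields $a\geq0$ by choosing product vectors with $\bar\phi\perp\psi$ so that $\braket{\phi\psi}{\Omega_d}=0$.

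The remaining inequality in each case is where the trace hypothesis enters, and it is extracted by exploiting that $\mathbb{F}_d$ is $U\otimes U$-invariant while $\omega_d$ is $U\otimes\bar{U}$-invariant (so tracing against these operators commutes with the twirl). Concretely, $\Trace{T_{UU}(C)\mathbb{F}_d}=\Trace{C\mathbb{F}_d}$, and a direct calculation gives $\Trace{(a'\,\id+b'\mathbb{F}_d)\mathbb{F}_d}=d(a'+b'd)$; so the hypothesis $\Trace{C\mathbb{F}_d}\leq0$ forces $a'+b'd\leq0$, which together with $a'\geq0$ implies $b'\leq0$ and hence $a'-b'\geq a'\geq0$. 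For the other case, $\Trace{(a\,\id+b\,\omega_d)\omega_d}=a+b$, so the hypothesis $\Trace{C\omega_d}\geq0$ gives $a+b\geq0$ immediately. The argument is essentially a two-parameter case analysis and I do not expect a real obstacle; the one point worth double-checking is that the range of $|\braket{\phi\psi}{\Omega_d}|^2$ over product states really does contain $0$ (so that block positivity isolates $a$ alone in the $U\bar{U}$-case), which holds since $\bar\phi\perp\psi$ is clearly achievable.
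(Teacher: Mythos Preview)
Your proposal is correct and follows essentially the same route as the paper: both arguments parametrize the twirled operator by two real scalars and verify the two eigenvalue conditions, obtaining one inequality from block positivity (you evaluate on explicit product vectors, the paper pairs $C$ with the separable Werner states $\id+\mathbb{F}_d$ and $\id-\omega_d$, which is the same thing) and the other from the trace hypothesis. The only cosmetic difference is that for the $U\bar U$-case the paper passes through the partial transpose and invokes the PPT characterization of the $UU$-twirl, whereas you argue directly; your version is slightly cleaner.
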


\begin{proof}

By block-positivity we have $\Trace{C}\geq 0$ and
\begin{align*}
\Trace{C\lb \id_d\otimes\id_d + \mathbb{F}_d\rb} \geq 0
\end{align*} 
because the Werner state $\rho^{(1)}_W=(\id_d\otimes\id_d + \mathbb{F}_d)/(d(d+1))$ is separable. Together with the assumption $\Trace{C\mathbb{F}_d}\leq 0$ this implies $\Trace{C\mathbb{F}_d}\in\lbr -\Trace{C},0\rbr$ which shows the first statement \eqref{equ:posAfterTwirl1}.

Secondly, as the Werner state $\rho^{(0)}_W$ is also separable we get by block-positivity
\begin{align*}
\Trace{C^{T_2}\lb \id_d\otimes\id_d - \frac{1}{d}\mathbb{F}_d\rb} \geq 0,
\end{align*}
which implies $d\Trace{C^{T_2}}\geq \Trace{C^{T_2}\mathbb{F}_d}$. Together with the assumption $\Trace{C^{T_2}\mathbb{F}_d} = d\Trace{C\omega_d}\geq 0$ this implies that the $UU$-twirl of $C^{T_2}$ has positive partial transpose. As $\lbr\lb U\otimes U\rb X^{T_2}\lb U\otimes U\rb^\dagger\rbr^{T_2} = \lb U\otimes \overline{U}\rb X\lb U\otimes \overline{U}\rb^\dagger$ this finishes the proof.

\end{proof}

\section{Minimal output eigenvalue}
\label{Appendix2}

Here we prove the multiplicativity of the minimal output eigenvalue \eqref{minoutev} for entanglement breaking completely positive maps in an elementary way. Afterwards we outline the proof of a more general statement, extending a multiplicativity result by King \cite{king2002maximal}.

\begin{thm}

Let $\Tm:\M_{d_1}\ra\M_{d_2}$ be entanglement breaking, i.e. $(\ident_n\otimes \Tm)\lb\rho\rb$ is separable for all $n\in\N$ and positive $\rho\in\M_{n}\otimes \M_{d_1}$, and $\Sm:\M_{d_3}\ra \M_{d_4}$ be completely positive. Then we have 
\begin{align*}
\lambda^{\text{min}}_{\text{out}}\lbr \Tm\otimes \Sm\rbr = \lambda^{\text{min}}_{\text{out}}\lbr \Tm\rbr \lambda^{\text{min}}_{\text{out}}\lbr \Sm\rbr.
\end{align*}
\label{Additivity}
\end{thm}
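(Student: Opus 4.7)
My plan is to prove the two inequalities separately. The easy direction $\lambda^{\text{min}}_{\text{out}}[\Tm\otimes\Sm]\leq \lambda^{\text{min}}_{\text{out}}[\Tm]\,\lambda^{\text{min}}_{\text{out}}[\Sm]$ is obtained by evaluating on the product input $\rho_1^\star\otimes\rho_2^\star$, where $\rho_1^\star$ and $\rho_2^\star$ achieve the respective minima. Since $(\Tm\otimes\Sm)(\rho_1^\star\otimes\rho_2^\star)=\Tm(\rho_1^\star)\otimes\Sm(\rho_2^\star)$ is a tensor product of positive operators, its spectrum consists of all pairwise products of eigenvalues, so its minimum eigenvalue equals the product of the minimum eigenvalues of the factors.

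For the reverse inequality, the key tool is the Horodecki characterization of entanglement breaking maps: since the Choi matrix of $\Tm$ is separable (equivalent to the definition given), $\Tm$ admits a Holevo form $\Tm(X)=\sum_i\tr(F_iX)\,\ket{\phi_i}\bra{\phi_i}$ with $F_i\geq0$ and unit vectors $\ket{\phi_i}\in\C^{d_2}$ (refining any separable decomposition of $C_\Tm$ to rank-one on the output side). Fix an arbitrary state $\rho\in\states{\HA\otimes\HB}$ (with $\dim\HA=d_1$, $\dim\HB=d_3$) and a unit vector $\ket{\psi}\in\C^{d_2}\otimes\C^{d_4}$. Writing $\tau_i:=\ptrace{A}{(F_i\otimes\id)\rho}\geq0$ and $\ket{\xi_i}:=(\bra{\phi_i}\otimes\id)\ket{\psi}$, a direct computation gives
\begin{align*}
\bra{\psi}(\Tm\otimes\Sm)(\rho)\ket{\psi}\,=\,\sum_i\bra{\xi_i}\Sm(\tau_i)\ket{\xi_i}\,\geq\,\lambda^{\text{min}}_{\text{out}}[\Sm]\sum_ip_iq_i\,,
\end{align*}
where $p_i:=\|\xi_i\|^2=\bra{\phi_i}\psi_A\ket{\phi_i}$ with $\psi_A:=\ptrace{4}{\proj{\psi}{\psi}}$, and $q_i:=\tr(\tau_i)=\tr(F_i\rho_A)$ with $\rho_A:=\ptrace{B}{\rho}$; here I used $\bra{\xi_i}\Sm(\tau_i)\ket{\xi_i}\geq p_iq_i\,\lambda^{\text{min}}_{\text{out}}[\Sm]$ after normalizing $\xi_i$ and $\tau_i$ (handling $p_iq_i=0$ trivially).

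The clean step is then the identity
\begin{align*}
\sum_ip_iq_i\,=\,\sum_i\tr(F_i\rho_A)\bra{\phi_i}\psi_A\ket{\phi_i}\,=\,\tr\!\big(\psi_A\,\Tm(\rho_A)\big)\,\geq\,\lambda_{\text{min}}\!\big(\Tm(\rho_A)\big)\,\geq\,\lambda^{\text{min}}_{\text{out}}[\Tm]\,,
\end{align*}
where the first inequality uses that $\psi_A$ is a state (trace one) and $\Tm(\rho_A)\geq0$. Combining the two bounds and minimizing over $\ket{\psi}$ and $\rho$ yields $\lambda^{\text{min}}_{\text{out}}[\Tm\otimes\Sm]\geq\lambda^{\text{min}}_{\text{out}}[\Tm]\,\lambda^{\text{min}}_{\text{out}}[\Sm]$, completing the proof.

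The only real obstacle is invoking the Holevo form in its rank-one version; once $\Tm$ is expressed as a sum of trace-like functionals against pure outputs, the factorization $\sum_ip_iq_i=\tr(\psi_A\,\Tm(\rho_A))$ is what makes everything collapse, and the entire argument becomes elementary. Note that the proof does not use complete positivity of $\Tm$ beyond what is already implicit in the Horodecki decomposition, and it requires only positivity (not complete positivity) of $\Sm$ on the relevant reduced-state inputs, although the theorem is stated for $\Sm$ completely positive.
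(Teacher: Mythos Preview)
Your proof is correct and follows essentially the same route as the paper's: both arguments use the entanglement-breaking property to separate the output of $\Tm\otimes\ident$ on the input state, bound the $\Sm$-factor below by $\lambda^{\text{min}}_{\text{out}}[\Sm]$ after normalizing, and then recognize the remaining sum as $\tr\big(\psi_A\,\Tm(\rho_A)\big)$. The only presentational difference is that you invoke the rank-one Holevo form $\Tm(X)=\sum_i\tr(F_iX)\ket{\phi_i}\bra{\phi_i}$ at the level of the map, whereas the paper applies the definition of entanglement breaking directly to the state, writing $(\Tm\otimes\ident_{d_3})(\tau)=\sum_i\sigma_i\otimes\rho_i$ without requiring the $\sigma_i$ to be rank one; the subsequent estimates are then line-by-line equivalent.
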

\begin{proof}

By inserting product states it is clear that $\lambda^{\text{min}}_{\text{out}}\lbr \Tm\otimes \Sm\rbr \leq  \lambda^{\text{min}}_{\text{out}}\lbr \Tm\rbr \lambda^{\text{min}}_{\text{out}}\lbr \Sm\rbr$.

For the other direction, let the minimum in \eqref{minoutev} for the computation of $\lambda^{\text{min}}_{\text{out}}\lbr \Tm\otimes \Sm\rbr$ be attained at $\rho=\tau$. Then there exists a pure state $\ket{\phi}$ such that
\begin{align*}
\lambda^{\text{min}}_{\text{out}}\lbr \Tm\otimes \Sm\rbr &= \bra{\phi} (\Tm\otimes \Sm)(\tau) \ket{\phi}\\
&= \bra{\phi}\sum^k_{i=1} \lbr\sigma_{i}\otimes \Sm(\rho_i)\rbr\ket{\phi}
\end{align*}
using that there exist non-zero $\sigma_i\geq 0$ and $\rho_i \geq 0$ such that $(\Tm\otimes\ident_{d_3})\lb \tau\rb = \sum^k_{i=1} \sigma_i \otimes \rho_i$ as $\Tm$ is entanglement breaking. Note that $\Tm\lb\text{tr}_2\lb\tau\rb\rb = \sum^k_{i=1} \text{tr}\lb\rho_i\rb\sigma_i$. Thus:
\begin{align*}
\lambda^{\text{min}}_{\text{out}}\lbr \Tm\otimes \Sm\rbr &= \bra{\phi}\sum^k_{i=1} \lbr\text{tr}(\rho_i) \sigma_{i}\otimes \Sm(\frac{\rho_i}{\text{tr}\lb\rho_i\rb})\rbr\ket{\phi} \\
&\geq \lambda^{\text{min}}_{\text{out}}\lbr \Sm\rbr\bra{\phi}\sum^k_{i=1} \text{tr}\lb\rho_i\rb\sigma_{i}\otimes  \id_{d_4}\ket{\phi} \\
&= \lambda^{\text{min}}_{\text{out}}\lbr \Sm\rbr\text{tr}\lbr \Tm\lb\text{tr}_2\lb\tau\rb\rb\text{tr}_2\lb\proj{\phi}{\phi}\rb\rbr \\
&\geq\lambda^{\text{min}}_{\text{out}}\lbr \Sm\rbr\lambda_\text{min}\big(\Tm\lb\text{tr}_2\lb\tau\rb\rb\big)\\
&\geq \lambda^{\text{min}}_{\text{out}}\lbr \Sm\rbr\lambda^{\text{min}}_{\text{out}}\lbr \Tm\rbr.  
\end{align*}
\end{proof}

It may be of interest that Theorem \ref{Additivity} can also be obtained from a generalization of King's result \cite{king2002maximal} on the multiplicativity of output $p$-norms for entanglement-breaking completely positive maps (note that King's proof does not require the trace-preservation property, so the result does not only hold for quantum channels). To set up notation, define for any completely positive map $\Sm$ and for any $p\in(-\infty,+\infty)$ the following quantities:
\begin{align*}
\mu_p^+(\Sm)\,&:=\,\sup_\rho\tr[\Sm(\rho)^p]\,,\\
\mu_p^-(\Sm)\,&:=\,\inf_\rho\tr[\Sm(\rho)^p]\,,
\end{align*}
where the optimizations run over all quantum states $\rho$ at the input of the map $\Sm$. By the same derivation as in \cite{king2002maximal} and noting that the Lieb-Thirring inequality holds for all $p\in(-\infty,-1]\cup[1,\infty)$, whereas the inequality sign can be reversed for $p\in[-1,+1]$, one can prove the following:
\begin{thm}For any completely positive map $\Sm$ and any entanglement-breaking completely positive map $\Tm$:
\begin{align*}
\mu_p^+(\Tm\otimes\Sm)\,&=\,\mu_p^+(\Tm)\,\mu_p^+(\Sm)\qquad\text{for}~p\in(-\infty,-1]\cup[1,\infty)\,,\\
\mu_p^-(\Tm\otimes\Sm)\,&=\,\mu_p^-(\Tm)\,\mu_p^-(\Sm)\qquad\text{for}~p\in[-1,+1]\,.
\end{align*}
\end{thm}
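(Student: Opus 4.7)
The plan is to follow King's argument from \cite{king2002maximal} essentially verbatim for the case $p\in[1,\infty)$, and then to track how each step behaves when the exponent $p$ is extended to $p\in(-\infty,-1]$ or reversed for $p\in[-1,+1]$.

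First I would handle the easy direction of each equality. By restricting the optimization to product inputs $\rho=\rho_1\otimes\rho_2$ we have $\tr[((\Tm\otimes\Sm)(\rho_1\otimes\rho_2))^p]=\tr[\Tm(\rho_1)^p]\tr[\Sm(\rho_2)^p]$, so for $p\in(-\infty,-1]\cup[1,\infty)$ the supremum satisfies $\mu_p^+(\Tm\otimes\Sm)\geq\mu_p^+(\Tm)\mu_p^+(\Sm)$, and for $p\in[-1,+1]$ the infimum satisfies $\mu_p^-(\Tm\otimes\Sm)\leq\mu_p^-(\Tm)\mu_p^-(\Sm)$. (For $p<0$ one should, as usual, restrict the optimizations to inputs whose output is strictly positive; this subtlety is inherited from King's setting.)

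Next, I would use the structural fact that an entanglement-breaking CP map admits a Kraus decomposition with rank-one Kraus operators $R_k=|\eta_k\rangle\langle\xi_k|$. For any positive $\rho\in\M_{d_1}\otimes\M_{d_3}$ this gives
\begin{align*}
(\Tm\otimes\Sm)(\rho)\;=\;\sum_k |\eta_k\rangle\langle\eta_k|\otimes\Sm(M_k),\qquad M_k:=(\langle\xi_k|\otimes\id_{d_3})\,\rho\,(|\xi_k\rangle\otimes\id_{d_3})\geq 0.
\end{align*}
This is exactly the decomposition King exploits. Introducing the isometry-like operator $W:=\sum_k|\eta_k\rangle\otimes\Sm(M_k)^{1/2}\otimes|k\rangle$ on an auxiliary space $\mathcal{K}$, one has $WW^\dagger=(\Tm\otimes\Sm)(\rho)$ and $W^\dagger W$ is block-diagonal in $\mathcal{K}$ with blocks $\Sm(M_k)^{1/2}(\langle\eta_{k'}|\eta_k\rangle)\Sm(M_{k'})^{1/2}$. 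Since $WW^\dagger$ and $W^\dagger W$ have the same nonzero spectrum, $\tr[((\Tm\otimes\Sm)(\rho))^p]$ equals $\tr[(W^\dagger W)^p]$, and now the Lieb--Thirring inequality, applied to $W^\dagger W$ viewed as a sandwich of the Gram matrix $G_{kk'}=\langle\eta_{k'}|\eta_k\rangle$ between the block-diagonal factor $\mathrm{diag}_k(\Sm(M_k)^{1/2})$, separates the $\Tm$-part from the $\Sm$-part. For $p\in[1,\infty)\cup(-\infty,-1]$ the standard inequality $\tr[(B^{1/2}AB^{1/2})^p]\leq\tr[B^pA^p]$ applies; for $p\in[-1,1]$ the reverse inequality holds. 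After this separation one optimizes the $\Sm$-factor by $\max_k\tr[\Sm(M_k/\tr M_k)^p]\leq\mu_p^\pm(\Sm)$ (with the sign matching the range of $p$), and one recognizes the remaining $\Tm$-factor as a sum that, after optimization over $\rho$ (equivalently over the convex combination of $\xi_k$), is bounded by $\mu_p^\pm(\Tm)$ via essentially the same argument applied with $\Sm$ trivial.

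The main obstacle, as flagged in the paper, is the bookkeeping of the direction of the Lieb--Thirring inequality: for $|p|\geq 1$ one uses the usual direction, whereas for $|p|\leq 1$ it reverses, and this swap is exactly what forces us to switch from the supremum $\mu_p^+$ to the infimum $\mu_p^-$ (since the direction of the resulting trace inequality flips between the two regimes). A secondary technical point is to justify all manipulations when $p<0$: one must restrict to inputs $\rho$ whose outputs are strictly positive so that non-integer negative powers are well-defined, and check that the various operator powers in the Lieb--Thirring step remain well-defined on the relevant supports; this is routine provided $\Sm(M_k)$ is invertible on the range of its blocks, and can be handled by the usual perturbation $\Sm\mapsto\Sm+\epsilon\,\text{tr}(\,\cdot\,)\id/d_4$ followed by taking $\epsilon\downarrow 0$.
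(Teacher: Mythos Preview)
Your proposal is correct and matches the paper's own approach: the paper simply states that the result follows ``by the same derivation as in \cite{king2002maximal} and noting that the Lieb--Thirring inequality holds for all $p\in(-\infty,-1]\cup[1,\infty)$, whereas the inequality sign can be reversed for $p\in[-1,+1]$,'' which is exactly the strategy you outline in detail. Your write-up fills in King's argument (rank-one Kraus decomposition, the $W$ construction, Lieb--Thirring) and correctly identifies that the flip in the inequality's direction for $|p|\leq 1$ is what forces the switch from $\mu_p^+$ to $\mu_p^-$.
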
To see that this implies Theorem \ref{Additivity}, note first that for $p\in(-\infty,\infty)\setminus\{0\}$ a similar multiplicativity result then holds for the quantities $\big(\mu^{\pm}_p(\cdot)\big)^{1/p}$ as well. Finally, in the limit $p\to-\infty$ we have, for any completely positive map $\Sm$,
\begin{align*}
\lim_{p\to-\infty}\big(\mu^+_p(\Sm)\big)^{1/p}\,&=\,\lim_{p\to-\infty}\big(\sup_\rho\tr[\Sm(\rho)^p]\big)^{1/p}\\
&=\,\lim_{p\to-\infty}\inf_\rho\big(\tr[\Sm(\rho)^p]\big)^{1/p}\,=\,\inf_\rho\lambda_{\min}(\Sm(\rho))\,=\,\lambda^{\text{min}}_{\text{out}}\lbr \Sm\rbr\,.
\end{align*}

One can translate these multiplicativity results to the language of minimum respectively maximum output Renyi $p$-entropies, which are defined for $p\in(-\infty,\infty)\setminus\{1\}$ by $H^{\min}_p(\Sm):=\inf_\rho\frac{1}{1-p}\log\tr[\Sm(\rho)^p]$ and $H^{\max}_p(\Sm):=\sup_\rho\frac{1}{1-p}\log\tr[\Sm(\rho)^p]$, and extended by continuity to the values $p=1,\pm\infty$ as $H^{\min}_{p=1}(\Sm):=\inf_\rho\tr[-\Sm(\rho)\log\Sm(\rho)]$ (the minimum output von Neumann entropy), $H^{\min}_{p=\infty}(\Sm):=\inf_\rho\big(-\log\lambda_{\max}(\Sm(\rho))\big)$ (the minimum output min entropy), and $H^{\max}_{p=-\infty}(\Sm):=\sup_\rho\big(-\log\lambda_{\min}(\Sm(\rho))\big)=-\log\lambda^{\text{min}}_{\text{out}}\lbr \Sm\rbr$. For a proper interpretation as output entropies, the map $\Sm$ should, in addition to being (completely) positive, also be trace-preserving, although this is not a mathematical requirement. We thus obtain from the above:
\begin{cor}For any completely positive map $\Sm$ and any entanglement-breaking completely positive map $\Tm$:
\begin{align*}
H^{\min}_p(\Tm\otimes\Sm)\,&=\,H^{\min}_p(\Tm)+H^{\min}_p(\Sm)\qquad\text{for}~p\in[-1,\infty]\,,\\
H^{\max}_p(\Tm\otimes\Sm)\,&=\,H^{\max}_p(\Tm)+H^{\max}_p(\Sm)\qquad\text{for}~p\in[-\infty,-1]\,.
\end{align*}
\end{cor}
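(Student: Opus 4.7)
The plan is to reduce the Corollary to the preceding Theorem on multiplicativity of $\mu_p^+$ and $\mu_p^-$ by rewriting each Renyi entropy as $\frac{1}{1-p}\log\mu_p^\pm$, with the sign of the optimization dictated by the sign of $1-p$, and then to handle the three boundary values $p\in\{1,\pm\infty\}$ by continuity/limit arguments.

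First I would observe that for $p\in\R\setminus\{1\}$, monotonicity of $\log$ lets one pull the $\inf$ or $\sup$ over $\rho$ inside the logarithm once the sign of $1/(1-p)$ is accounted for. Concretely, for $p>1$ one has $H^{\min}_p(\Sm)=\frac{1}{1-p}\log\mu_p^+(\Sm)$ and $H^{\max}_p(\Sm)=\frac{1}{1-p}\log\mu_p^-(\Sm)$, while for $p<1$ the roles of $\mu_p^+$ and $\mu_p^-$ are swapped. Combining these identities with the preceding Theorem immediately yields additivity on the open ranges: $H^{\min}_p$ is additive on $(1,\infty)$ via multiplicativity of $\mu_p^+$ on $[1,\infty)$ and on $[-1,1)$ via multiplicativity of $\mu_p^-$ on $[-1,1]$; $H^{\max}_p$ is additive on $(-\infty,-1]$ via multiplicativity of $\mu_p^+$ on $(-\infty,-1]$.

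Next I would close the three boundary gaps. At $p=1$ I invoke continuity of the Renyi entropy in $p$ on the (compact) state space: $p\mapsto\tr[\Sm(\rho)^p]$ is continuous uniformly in $\rho$, so both $H^{\min}_p$ and $H^{\max}_p$ are continuous in $p$ near $p=1$ (with the standard extension by L'H\^{o}pital to the von Neumann value), and letting $p\to 1$ in the already-established additive identity on $(1-\delta,1)\cup(1,1+\delta)$ extends it to $p=1$. At $p=\infty$ one has $H^{\min}_\infty(\Sm)=-\log\sup_\rho\lambda_{\max}(\Sm(\rho))$, and since $(\tr[\Sm(\rho)^p])^{1/p}\to\lambda_{\max}(\Sm(\rho))$ uniformly in $\rho$ on the compact state space, also $(\mu_p^+(\Sm))^{1/p}\to\sup_\rho\lambda_{\max}(\Sm(\rho))$; taking $p$-th roots of the identity $\mu_p^+(\Tm\otimes\Sm)=\mu_p^+(\Tm)\,\mu_p^+(\Sm)$ on $[1,\infty)$ and sending $p\to\infty$ yields additivity of $H^{\min}_\infty$. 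At $p=-\infty$ the paper already computes $\lim_{p\to-\infty}(\mu_p^+(\Sm))^{1/p}=\lambda^{\text{min}}_{\text{out}}(\Sm)$, and Theorem~\ref{Additivity} gives multiplicativity of $\lambda^{\text{min}}_{\text{out}}$ for entanglement-breaking $\Tm$; taking $-\log$ delivers additivity of $H^{\max}_{-\infty}$.

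The main obstacle is purely technical: justifying the uniform (in $\rho$) interchange of $\sup_\rho$ with the $p$-limit at $p=\pm\infty$ and the continuity of $H^{\min/\max}_p$ at $p=1$. Both are standard compactness and uniform-convergence arguments in the finite-dimensional setting and pose no conceptual difficulty; the entire substance of the Corollary rests on the multiplicativity of $\mu_p^\pm$ furnished by the preceding Theorem, together with the observation that sign of $1-p$ swaps the role of $+$ and $-$ in the optimization.
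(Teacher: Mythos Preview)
Your proposal is correct and matches the paper's approach: the paper does not give a separate proof for this Corollary but simply states that it follows by translating the multiplicativity of $\mu_p^\pm$ into the Renyi-entropy language (with the sign of $1-p$ determining whether $\mu_p^+$ or $\mu_p^-$ appears) and extending by continuity to $p\in\{1,\pm\infty\}$. You have spelled out these details carefully and correctly, including the boundary-case limit arguments that the paper leaves implicit.
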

For trace-preserving $\Sm,\Tm$ and $p=1$, this result was first obtained by Shor~\cite{shor2002additivity}, whereas the case $p\in(1,\infty)$ follows from \cite{king2002maximal}.

\bibliographystyle{abbrv}

\begin{thebibliography}{10}

\bibitem{PhysRevLett.82.5385}
C.~H. Bennett, D.~P. DiVincenzo, T.~Mor, P.~W. Shor, J.~A. Smolin, and B.~M.
  Terhal.
\newblock Unextendible product bases and bound entanglement.
\newblock {\em Phys. Rev. Lett.}, 82:5385--5388, Jun 1999.

\bibitem{bennett1997capacities}
C.~H. Bennett, D.~P. DiVincenzo, and J.~A. Smolin.
\newblock Capacities of quantum erasure channels.
\newblock {\em Physical Review Letters}, 78(16):3217, 1997.

\bibitem{berta2012entanglement}
M.~Berta, M.~Christandl, F.~G. Brandao, and S.~Wehner.
\newblock Entanglement cost of quantum channels.
\newblock In {\em Information Theory Proceedings (ISIT), 2012 IEEE
  International Symposium on}, pages 900--904. IEEE, 2012.

\bibitem{bhatia1997matrix}
R.~Bhatia.
\newblock {\em Matrix Analysis}.
\newblock Graduate Texts in Mathematics. Springer New York, 1997.

\bibitem{bruss1998optimal}
D.~Bru{\ss}, D.~P. DiVincenzo, A.~Ekert, C.~A. Fuchs, C.~Macchiavello, and
  J.~A. Smolin.
\newblock Optimal universal and state-dependent quantum cloning.
\newblock {\em Physical Review A}, 57(4):2368, 1998.

\bibitem{chitambar2014everything}
E.~Chitambar, D.~Leung, L.~Man{\v{c}}inska, M.~Ozols, and A.~Winter.
\newblock Everything you always wanted to know about locc (but were afraid to
  ask).
\newblock {\em Communications in Mathematical Physics}, 328(1):303--326, 2014.

\bibitem{6094278}
T.~Cubitt, J.~Chen, and A.~Harrow.
\newblock Superactivation of the asymptotic zero-error classical capacity of a
  quantum channel.
\newblock {\em Information Theory, IEEE Transactions on}, 57(12):8114--8126,
  Dec 2011.

\bibitem{PhysRevA.61.062312}
D.~P. DiVincenzo, P.~W. Shor, J.~A. Smolin, B.~M. Terhal, and A.~V. Thapliyal.
\newblock Evidence for bound entangled states with negative partial transpose.
\newblock {\em Phys. Rev. A}, 61:062312, May 2000.

\bibitem{PhysRevA.61.062313}
W.~D\"ur, J.~I. Cirac, M.~Lewenstein, and D.~Bru\ss{}.
\newblock Distillability and partial transposition in bipartite systems.
\newblock {\em Phys. Rev. A}, 61:062313, May 2000.

\bibitem{filippov2013dissociation}
S.~N. Filippov, A.~A. Melnikov, and M.~Ziman.
\newblock Dissociation and annihilation of multipartite entanglement structure
  in dissipative quantum dynamics.
\newblock {\em Physical Review A}, 88(6):062328, 2013.

\bibitem{filippov2012local}
S.~N. Filippov, T.~Ryb{\'a}r, and M.~Ziman.
\newblock Local two-qubit entanglement-annihilating channels.
\newblock {\em Physical Review A}, 85(1):012303, 2012.

\bibitem{filippov2013bipartite}
S.~N. Filippov and M.~Ziman.
\newblock Bipartite entanglement-annihilating maps: Necessary and sufficient
  conditions.
\newblock {\em Physical Review A}, 88(3):032316, 2013.

\bibitem{hayashiQIbook}
M.~Hayashi.
\newblock {\em Quantum Information -- An Introduction}.
\newblock Springer Berlin Heidelberg, 2006.

\bibitem{holevo2001evaluating}
A.~S. Holevo and R.~F. Werner.
\newblock Evaluating capacities of bosonic gaussian channels.
\newblock {\em Physical Review A}, 63(3):032312, 2001.

\bibitem{PhysRevA.59.4206}
M.~Horodecki and P.~Horodecki.
\newblock Reduction criterion of separability and limits for a class of
  distillation protocols.
\newblock {\em Phys. Rev. A}, 59:4206--4216, Jun 1999.

\bibitem{horodecki1996separability}
M.~Horodecki, P.~Horodecki, and R.~Horodecki.
\newblock Separability of mixed states: necessary and sufficient conditions.
\newblock {\em Physics Letters A}, 223(1):1--8, 1996.

\bibitem{PhysRevLett.80.5239}
M.~Horodecki, P.~Horodecki, and R.~Horodecki.
\newblock Mixed-state entanglement and distillation: Is there a ``bound''
  entanglement in nature?
\newblock {\em Phys. Rev. Lett.}, 80:5239--5242, Jun 1998.

\bibitem{horodecki2003entanglement}
M.~Horodecki, P.~W. Shor, and M.~B. Ruskai.
\newblock Entanglement breaking channels.
\newblock {\em Reviews in Mathematical Physics}, 15(06):629--641, 2003.

\bibitem{king2002maximal}
C.~King.
\newblock Maximal p-norms of entanglement breaking channels.
\newblock {\em Quantum Information {\&} Computation}, 3(2):186--190, 2003.

\bibitem{kretschmann2004tema}
D.~Kretschmann and R.~F. Werner.
\newblock Tema con variazioni: quantum channel capacity.
\newblock {\em New Journal of Physics}, 6(1):26, 2004.

\bibitem{PhysRevA.55.1613}
S.~Lloyd.
\newblock Capacity of the noisy quantum channel.
\newblock {\em Phys. Rev. A}, 55:1613--1622, Mar 1997.

\bibitem{moravvcikova2010entanglement}
L.~Morav{\v{c}}{\'{\i}}kov{\'a} and M.~Ziman.
\newblock Entanglement-annihilating and entanglement-breaking channels.
\newblock {\em Journal of Physics A: Mathematical and Theoretical},
  43(27):275306, 2010.

\bibitem{morganwinter}
C.~Morgan and A.~Winter.
\newblock ``{P}retty strong'' converse for the quantum capacity of degradable
  channels.
\newblock {\em Information Theory, IEEE Transactions on}, 60(1):317--333, Jan
  2014.

\bibitem{paulsen2002completely}
V.~Paulsen.
\newblock {\em Completely bounded maps and operator algebras}, volume~78.
\newblock Cambridge University Press, 2002.

\bibitem{rains2001semidefinite}
E.~M. Rains.
\newblock A semidefinite program for distillable entanglement.
\newblock {\em Information Theory, IEEE Transactions on}, 47(7):2921--2933,
  2001.

\bibitem{roy2009unitary}
A.~Roy and A.~J. Scott.
\newblock Unitary designs and codes.
\newblock {\em Designs, codes and cryptography}, 53(1):13--31, 2009.

\bibitem{shor2002additivity}
P.~W. Shor.
\newblock Additivity of the classical capacity of entanglement-breaking quantum
  channels.
\newblock {\em Journal of Mathematical Physics}, 43(9):4334--4340, 2002.

\bibitem{PhysRevLett.108.230507}
G.~Smith and J.~A. Smolin.
\newblock Detecting incapacity of a quantum channel.
\newblock {\em Phys. Rev. Lett.}, 108:230507, Jun 2012.

\bibitem{stormer2010tensor}
E.~St{\o}rmer.
\newblock Tensor powers of 2-positive maps.
\newblock {\em J. of Math. Phys}, 51(10), 2010.

\bibitem{6832533}
M.~Takeoka, S.~Guha, and M.~Wilde.
\newblock The squashed entanglement of a quantum channel.
\newblock {\em Information Theory, IEEE Transactions on}, 60(8):4987--4998, Aug
  2014.

\bibitem{tomamichelwildewinter}
M.~Tomamichel, M.~M. Wilde, and A.~Winter.
\newblock Strong converse rates for quantum communication.
\newblock {\em arXiv:1406.2946}, 2014.

\bibitem{werner1989quantum}
R.~F. Werner.
\newblock Quantum states with einstein-podolsky-rosen correlations admitting a
  hidden-variable model.
\newblock {\em Physical Review A}, 40(8):4277, 1989.

\end{thebibliography}

\end{document}